\renewcommand{\phi}{\varphi}
\renewcommand{\ker}{\Ker}
\renewcommand{\Re}{\operatorname{Re}}
\renewcommand{\Im}{\operatorname{Im}}
\newcommand{\bb}[1]{\mathbb{#1}}
\newcommand{\mc}[1]{\mathcal{#1}}
\newcommand{\mf}[1]{\mathfrak{#1}}
\newcommand{\mb}[1]{\mathbb{#1}}
\newcommand{\mbbm}[1]{\mathbbm{#1}}
\newcommand{\beq}{\begin{equation}}
\newcommand{\eeq}{\end{equation}}
\newcommand{\e}{\varepsilon}
\newcommand{\wg}{\widetilde{\mathfrak g}}
\newcommand{\wgr}{\widetilde{\mathfrak g}^{(r)}}
\newcommand{\wgz}{\widetilde{\mathfrak g}_0}
\newcommand{\wh}{\widetilde{\mathfrak h}}
\newcommand{\whz}{\widetilde{\mathfrak h}_0}
\newcommand{\gu}{\mathfrak g^{(1)}}
\newcommand{\wad}{A_{2n-1}^{(2)}}
\newcommand{\wdd}{D_{n+1}^{(2)}}
\newcommand{\lgu}{^{L}\mathfrak g^{(1)}}
\DeclareMathOperator{\at}{at}
\DeclareMathOperator{\Mat}{Mat}
\DeclareMathOperator{\End}{End}
\DeclareMathOperator{\diag}{diag}
\DeclareMathOperator{\tr}{Tr}
\DeclareMathOperator{\ad}{ad}
\DeclareMathOperator{\Ker}{Ker}
\DeclareMathOperator{\rank}{rank}
\DeclareMathOperator{\Ht}{ht}
\theoremstyle{plain}
\newtheorem{theorem}{Theorem}[section]
\newtheorem{lemma}[theorem]{Lemma}
\newtheorem{proposition}[theorem]{Proposition}
\theoremstyle{definition}
\newtheorem{definition}[theorem]{Definition}
\newtheorem{example}[theorem]{Example}
\theoremstyle{remark}
\newtheorem{remark}[theorem]{Remark}
\numberwithin{equation}{section}
\definecolor{light}{gray}{.9}
\tikzset{node distance=2em, ch/.style={circle,draw,on chain,inner sep=2pt},chj/.style={ch,join},every path/.style={shorten >=4pt,shorten <=4pt},line width=1pt,baseline=-1ex}
\let\dlabel=\alabel
\newcommand{\dnode}[2][chj]{%
\node[#1,label={below:\dlabel{#2}}] {};
}
\newcommand{\dnodebr}[1]{%
\node[chj,label={below right:\dlabel{#1}}] {};
}
\newcommand{\dydots}{%
\node[chj,draw=none,inner sep=1pt] {\dots};
}
\newcounter{dateone}
\newcounter{datetwo}
\newcommand{\difftoday}[3]{%
\setmydatenumber{dateone}{\the\year}{\the\month}{\the\day}%
\setmydatenumber{datetwo}{#1}{#2}{#3}%
\addtocounter{datetwo}{-\thedateone}%
%\the\numexpr-\thedatetwo day(s)
\textcolor{red}{\the\numexpr-\thedatetwo day(s) late}
} 
\title{Bethe Ansatz and the Spectral Theory of
%\texorpdfstring{\MakeLowercase{$\widehat{\mf g}$}}{g^hat}-valued connections}
affine Lie algebra--valued connections II. \\ The non simply--laced case
%\\(\MakeLowercase{arxiv:}\difftoday{2015}{10}{16})
}
\author{Davide Masoero, Andrea Raimondo, Daniele Valeri}
\address{Grupo de F\'isica Matem\'atica da Universidade de Lisboa,
Av. Prof. Gama Pinto 2, 1649-003 Lisboa, Portugal.}
\email{dmasoero@gmail.com}
\address{Dipartimento di Matematica e Applicazioni, Universit\`a degli Studi di Milano-Bicocca, Via Cozzi 53, 20125 Milano, Italy, and Grupo de F\'isica Matem\'atica da Universidade de Lisboa, Av. Prof. Gama Pinto 2, 1649-003 Lisboa, Portugal.}
\email{andrea.raimondo@unimib.it, araimondo@fc.ul.pt}
\address{Yau Mathematical Sciences Center, Tsinghua University, 100084 Beijing, China.}
\email{daniele@math.tsinghua.edu.cn}
\begin{document}

\pagestyle{plain}

\begin{abstract}
We assess the ODE/IM correspondence for the quantum $\mathfrak{g}$-KdV model, for a non-simply laced Lie algebra $\mathfrak{g}$.
This is done by studying a meromorphic connection with values in the Langlands dual algebra of the affine Lie algebra ${\mathfrak{g}}^{(1)}$, and
 constructing the relevant $\Psi$-system among subdominant solutions. We then
use the $\Psi$-system to prove that the generalized spectral determinants satisfy the
Bethe Ansatz equations of the quantum $\mathfrak{g}$-KdV model.
We also consider generalized Airy functions for twisted Kac--Moody algebras and we construct new explicit solutions to the Bethe Ansatz equations.
The paper is a continuation of our previous work on the ODE/IM correspondence for simply-laced Lie algebras.
\end{abstract}

\maketitle

\tableofcontents

%%%%%%%%%%%%%%%%%%%%%%%%%%%%%%%%%%%%%
\section{Introduction}\label{sec:intro}

The ODE/IM correspondence is a rich and surprising link between the theory of quantum solvable integrable models  and the spectral analysis
of linear differential operators. The origin of the correspondence goes back to \cite{doreytateo98,bazhanov01}, where it was proved that the
spectrum of certain Schroedinger operators is encoded in the Bethe Ansatz equations of the quantum KdV model. Such a discovery has boosted a remarkable
research activity,
especially in the physical literature, that did not result in a general theory but produced large number of generalizations -- see e.g.
\cite{dorey00,BLZ04,dorey07,gaiotto09,FF11,dtba,junji00,Sun12,lukyanov10,dorey13,dunning14,bazhanov14} -- linked with a variety of deep theories,
such as generalized quantum KdV, Drinfeld-Sokolov hierarchies \cite{DS85},
the geometric Langlands duality and N=4 SYM. 

The crucial step in our construction of the ODE/IM correspondence is based on the idea of Feigin and Frenkel \cite{FF11} that the differential operator corresponding to quantum $\mf g$-KdV should be an \textit{affine oper} with values in the Langlands dual of the affine Lie algebra $\gu $. Following this idea as well as some further developments contained in \cite{Sun12}, we proved in \cite{marava15} the ODE/IM correspondence in the case when $\mf g$ is simply laced. In the present paper we study the  ODE/IM correspondence for quantum $\mf g$-KdV models for a non simply-laced simple Lie algebra $\mf g$.

The integrability of the quantum $\mf g$-KdV model, first constructed in \cite{feigin96},  is expected
\footnote{The Bethe Ansatz equation (\ref{eq:TBAintronotsimply}) are expected to hold for the eigenvalues $Q^{(i)}(E)$ of
the $Q$ operators of the quantum $\mf g$-KdV model. However, these operators were constructed only in the case $\mf g=\mf{sl}_2$, in \cite{bazhanov97}, and
$\mf g= \mf{sl}_3$, in  \cite{bazhanov02integrable}.}
to be encoded in $n=\rank \mf g$ entire functions $Q^{(1)},\dots,Q^{(n)}$ 
 satisfying the $\mf g$-Bethe Ansatz equations \cite{reshetikhin87,dorey07}:
\begin{equation}\label{eq:TBAintronotsimply}
\prod_{j=1}^n\Omega^{\overline{C}_{ij}\beta_j}\frac{Q^{(j)}(\Omega^{\frac{\overline{C}_{ij}}{2}}E^\ast)}{Q^{(j)}
(\Omega^{-\frac{\overline{C}_{ij}}{2}}E^\ast)}=-1
\,,
\end{equation}
for every $E^*\in\mb C$ such that $Q^{(i)}(E^*)=0$. In equation \eqref{eq:TBAintronotsimply}, $\Omega$ and the $\beta_j$, $j=1,\dots,n$, are free parameters
and $\overline{C}=(\overline{C}_{ij})_{i,j=1}^n$ is the symmetrized Cartan matrix of the
Lie algebra $\mf g$. Note that in the simply-laced case $\overline{C}=C$ and thus (\ref{eq:TBAintronotsimply}) generalizes the simpler
ADE Bethe Ansatz \cite{zamo91,dorey07,marava15}.\\

In order to establish an ODE/IM correspondence for the Lie algebra $\mf g$,  the object to study is a meromorphic connection on the complex plane, 
introduced in \cite{FF11} (see also \cite{Sun12}),  with values in the Langlands dual Lie algebra ${}^L \gu$ of the untwisted affine Kac--Moody algebra $\gu$.
The algebra ${}^L\mf g^{(1)}$ is an affine Kac-Moody algebra of type $\wgr$,
where $\wg$ is simply-laced simple Lie algebra and $r=1,2,3$ is the order of a diagram automorphism of $\wg$, see Table \ref{table:langlands}.  The algebra ${}^L\mf g^{(1)}$ is an affine Kac-Moody algebra of type $\wgr$,
where $\wg$ is simply-laced simple Lie algebra and $r=1,2,3$ is the order of a diagram automorphism of $\wg$, see Table \ref{table:langlands}. 
The connection reads
\begin{equation}\label{eq:Lintrosl}
\mc L(x,E)=\partial_x+\frac{\ell}{x}+e+p(x,E) e_0
\,,
\end{equation}
where $\ell$ is a generic element of the Cartan subalgebra $\wh_0$ of the simple Lie algebra $\widetilde{\mf g}_0\subset \wgr$
(see Table \ref{table:graphs}), and
$e=\sum_{i=1}^{n}e_i$, where $e_0,e_1,\dots,e_{n}$
are the positive Chevalley generators of ${}^L\mf g^{(1)}$. Finally, the potential $p$ has the form
 $p(x,E)=x^{Mh^\vee}-E$, where $M>0$ and $h^\vee$ is the dual Coxeter number of $\gu$.
The choice of a potential of this form is expected to correspond to the ground state of
quantum $\mf g$-KdV \cite{dorey07,FF11}, and we stick to these potentials for simplicity. Notice, however, that all proofs work with minor modifications for the more general potentials discussed in \cite{BLZ04,FF11,bazhanov14}, which should  correspond to higher states of the theory.
As shown in Table \ref{table:langlands}, when $\mf g$ is simply-laced then the algebra
$\mf g^{(1)}$ is self-dual: ${}^L\mf g^{(1)}=\mf g^{(1)}$. Therefore, in this case we have $r=1$ and $\wg\simeq \mf g$, and the connection
(\ref{eq:Lintrosl}) coincides with the one we studied in \cite{marava15}.
Otherwise, $r>1$ and  ${}^L\mf g^{(1)}$ is a twisted affine Kac-Moody algebra.

\begin{table}[H]
\caption{Langlands correspondence for untwisted affine Lie algebras}\label{table:langlands}
{\tabulinesep=1.2mm
\begin{tabu}{ |c|c|c| }
%\hline
%\multicolumn{4}{ |c| }{} \\
\hline
 $\gu$   &    $ {}^L\gu =\wgr $  &  $h^\vee$  \\
\hline
\hline
$ A_n^{(1)}$, $n\geq1$  &   $A_{n}^{(1)}$ &   $n+1$  \\
\hline
$ B_n^{(1)}$, $n\geq3$  &   $A_{2n-1}^{(2)}$ &   $2n-1$  \\
\hline
$ C_n^{(1)}$, $n\geq2$ &  $D_{n+1}^{(2)}$ & $n+1$ \\
\hline
$ D_n^{(1)}$, $n\geq4$  &   $D_{n}^{(1)}$ &   $2n-2$  \\
\hline
$ E_n^{(1)}$, $n=6,7,8$  &   $ E_n^{(1)}$ &   $3(n^2-11n+34)$  \\
%\hline
%$ E_7^{(1)}$  &   $E_7^{(1)}$ &   $18$  \\
%\hline
%$ E_8^{(1)}$  &   $E_8^{(1)}$ &   $30$  \\
\hline
$ F_4^{(1)}$ & $E_{6}^{(2)}$ & $9$ \\
\hline
$ G_2^{(1)}$  &  $D_{4}^{(3)} $ & $4$\\
\hline
\end{tabu}
}
\end{table}

The $\mf g$--Bethe Ansatz will be obtained by choosing suitable finite dimensional representations of $\lgu$.
Indeed, for every finite dimensional representation of ${}^L\mf g^{(1)}$, the connection \eqref{eq:Lintrosl} yields the linear differential equation
\begin{equation}\label{eq:ODEintro}
\mc L(x,E)\Psi(x,E)=0
\,,
\end{equation}
which has two singular points: a regular singularity in  $x=0$ and an irregular singularity in $x=\infty$.
The existence of a subdominant (at $\infty$) solution to equation \eqref{eq:ODEintro} depends on the choice of the representation,
but if a  subdominant solution exists, then a natural generalization of the spectral problem for the
Schr\"odinger operator arises by considering the behavior at $0$ of the subdominant solution. Since the $\mf g$--Bethe Ansatz
will be obtained from the study of the generalized spectral problems, a first criterion to select the correct representations of
$\lgu$ is to require the equation \eqref{eq:ODEintro} to admit a subdominant solution in that representation.\\

A second crucial role in the construction of the Bethe Ansatz
is played by the so-called $\Psi-$system \cite{dorey07,marava15}.
This is a system of quadratic relations among the $n=\rank\mf g$ subdominant solutions $\Psi^{(i)}(x,E)$ defined on distinguished %\textit{fundamental}
representations $V^{(i)}$, $i=1,\dots,n$, of $\lgu$.
%of ${}^L\mf g^{(1)}$.
It reads
\begin{equation}\label{eq:PsiIntro}
m_i\left(R_i\big(\psi^{(i)}_{-\frac{D_i}{2}}\big)\wedge\psi^{(i)}_{\frac{D_i}{2}}\right)=
\bigotimes_{j\in I}\bigotimes_{\ell=0}^{B_{ij}-1}\psi^{(j)}_{\frac{B_{ij}-1-2\ell}{2r}}
\,, \qquad i=1,\dots,n,
\end{equation}
%In the $\Psi$-system \eqref{eq:PsiIntro}, 
where $D_i$ is the $i$-th element of the symmetrizing matrix
$D=\mbox{diag }(D_1,\dots,D_n)$, namely
$\overline{C}=DC$, where $C$ is the Cartan matrix of $\mf g$,
and $B=2\mbbm1_n-C$ denotes the incidence matrix of $\mf g$.
In addition, $R_i$ is a certain isomorphism of representations of ${}^L\mf g^{(1)}$, which reduces to the identity map if $D_i=1$.
In the simply-laced case, the $\Psi-$system \eqref{eq:PsiIntro} coincides with the one studied in our previous paper \cite{marava15}.
The existence of the $\Psi-$system was conjectured in \cite{dorey07} (and there proved in the case $A_n$) and proved for a
simply-laced Lie algebra in
\cite{marava15}.
The construction of the $\Psi-$system for non simply-laced Lie algebras was unknown before the present paper.\\
%\medskip

Although the problem considered in the present work shares some similarities with the simply-laced case, we stress that
the extension of the ODE/IM correspondence to the case of a non simply-laced Lie algebra $\mf g$ is  highly non trivial,
as the appearance of the twisted Kac-Moody algebra ${}^L\mf g^{(1)}=\widetilde{\mf g}^{(r)}$  provides new difficulties.
First, in the simply-laced
case we chose the representations $V^{(i)}$ of the untwisted Kac-Moody algebra ${}^L\mf g^{(1)}=\mf g^{(1)}$, $i=1,\dots, n=\rank\mf g$, to be some evaluation
representations of the fundamental representations of $\mf g$.
To construct the
$\Psi$-system and the Bethe Ansatz for a non simply-laced Lie algebra $\mf g$,
we need to properly select the representations $V^{(i)}$ of the twisted Kac-Moody algebra ${}^L\mf g^{(1)}=\wgr$, for $i=1,\dots,n=\rank\mf g$,
among the evaluation representations of the $\rank\wg >n$ fundamental representations of $\widetilde{\mf g}$.

Another problem arises in the study of the element $\Lambda=\sum_{i=0}^n e_i\in{}^L\mf g^{(1)}$,
% which played an extremely important role in the proof
% of the ODE/IM correspondence for simply-laced Lie algebras in \cite{marava15}.
whose spectrum in the representations $V^{(i)}$'s
plays an extremely important role in the proof of the ODE/IM correspondence.
In our previous work \cite{marava15}, we were able to explicitly diagonalize $\Lambda$
due to the fact that in the case of an untwisted Kac-Moody algebra $\Lambda$ is the eigenvector of a Killing-Coxeter element.
However, for twisted Kac-Moody algebras,
$\Lambda$ is not the eigenvector of a Killing-Coxeter element
(nor a twisted Killing-Coxeter element), and the study of its spectrum requires a different approach.
Finally, the Langlands duality, which was hidden in the simply-laced case, has now to be explicitly taken in account, and the
whole construction of the correspondence from the connection \eqref{eq:Lintrosl} to the Bethe Ansatz \eqref{eq:TBAintronotsimply} has to be modified accordingly.

\medskip

For the sake of completeness, and to provide a unifying presentation to the ODE/IM correspondence, the results of the present paper are stated for an arbitrary simple Lie algebra. However, the proofs are provided in detail for the non simply laced case only,  as in the simply laced case they were already obtained in \cite{marava15}.\\

The paper is organized as follows. In Section \ref{sec:liealgebra} we review some
basic facts about the theory of Lie algebras, diagram automorphisms of Lie algebras, Kac-Moody Lie algebras and 
their finite dimensional representations.

In Section \ref{sec:L} we review the asymptotic analysis of equation \eqref{eq:ODEintro}, following \cite{marava15}.
The main result is provided by Theorem \ref{thm:asymptotic} about the existence of the subdominant solution
of the equation (\ref{eq:ODEintro}). It states that if in a finite-dimensional representation the element $\Lambda$
has a maximal eigenvalue (see Definition \ref{def:maximal}),
then equation (\ref{eq:ODEintro}) admits a unique subdominant
solution $\Psi(x,E)$.

Section \ref{sec:psi} is devoted to the construction of the $\Psi$-system \eqref{eq:PsiIntro}.
For every node $i=1,\dots,n$ of the Dynkin diagram of $\mf g$,
we define a distinguished finite dimensional representation $V^{(i)}$ of ${}^L\mf g^{(1)}$.
The construction of these representations relies on the definition of a good vertex of a Dynkin diagram with respect
to a diagram automorphism, which we introduce in Definition \ref{def:good}.
In Theorem \ref{thm:genlamba} we claim
that in each representation $V^{(i)}$ the element $\Lambda$ has a maximal eigenvalue
$\lambda^{(i)}$, where $\lambda^{(1)}=1$, and that the following
remarkable identity holds:
\begin{equation}\label{eq:genlambdaintro}
\left(e^{-\frac{\pi\sqrt{-1} D_i}{h^\vee}}+e^{\frac{\pi\sqrt{-1} D_i}{h^\vee}}\right)\lambda^{(i)}
=\sum_{j=1}^n\left(\sum_{\ell=0}^{B_{ij}-1}e^{\frac{\pi\sqrt{-1} (B_{ij}-1-2\ell)}{r h^\vee}}\right)\lambda^{(j)}
\,.
\end{equation}
As a consequence of these results we get the existence of the fundamental solutions $\Psi^{(i)}(x,E)$, for every
$i=1,\dots,n$, as well as of the
$\Psi$-system \eqref{eq:PsiIntro}, which is provided by Theorem \ref{thm:psi-sistem}.

We prove Theorem \ref{thm:genlamba} for non simply-laced Lie algebras  by a case-by-case inspection in Section
\ref{sec:proof_main}.
In the simply laced-case a proof was given in \cite[Proposition 3.4]{marava15}.

In Section \ref{sec:qsystem} we derive the Bethe Ansatz equations \eqref{eq:TBAintronotsimply}.
To this aim, as done in \cite{marava15}, we study the local behavior of equations \eqref{eq:ODEintro} close to the Fuchsian
singularity $x=0$, and we define the generalized spectral determinants $Q^{(i)}(E;\ell)$ and $\widetilde{Q}^{(i)}(E;\ell)$.
Using the $\Psi$-system \eqref{eq:PsiIntro} we prove Theorem \ref{thm:QQtilde}, which gives a set of
quadratic relations among the spectral determinants, known as $Q\widetilde{Q}$-system.
%(see equation \eqref{QQsystem}).
%
In Theorem \ref{thm:betheansatz}, evaluating the $Q\widetilde{Q}$-systems at the zeros of the functions $Q^{(i)}(E;\ell)$, we obtain
the Bethe Ansatz equations \eqref{eq:TBAintronotsimply}. We also investigate the action of the Weyl group of $\widetilde{\mf g}_0$ on the space of solutions to the Bethe Ansatz, and this provides a set of new solutions.
%In order to perform this computation we need to use a
%useful identity provided by Lemma \ref{20151014:lem1} which explains the appearance of the symmetrized Cartan
%matrix of the Lie algebra $\mf g$. 

Finally, in Section \ref{app:airy}, we study an integral representation of the subdominant
solution of equation \eqref{eq:ODEintro}, in the case of a non simply-laced Lie algebra $\mf g$,
with a linear potential $p(x,E)=x$ and $\ell=0$.
Although this case is not generic, we can anyway define the spectral determinants $Q^{(i)}(E)=Q^{(i)}(E;0)$ and
we provide their expression in terms of Airy functions associated to the twisted Kac-Moody algebra ${}^L\mf g^{(1)}$.
In this way, we provide new examples of exact solution to the Bethe Ansatz equations, others than the ones already known in the literature \cite{junji15}.

%%%
\subsection*{Parameters of the ODE/IM correspondence}

We conclude the introduction by showing the exact relation among the parameters $\Omega$,  $\beta_1,\dots,\beta_n$, appearing in the  Bethe Ansatz equation (\ref{eq:TBAintronotsimply})
and the parameters $M$ and $\ell \in \whz$ of the connection (\ref{eq:Lintrosl}).
The first relation is very simple, and reads
\begin{equation}\label{eq:OmegaIntro}
 \Omega=e^{i\frac{2\pi M}{M+1}} \; .
\end{equation}
The relation between the phases $\beta_j$'s and the parameters of the connection is more involved. On the integrable systems side,
the phases $\beta_j$'s parametrize
the possible twisting of periodic boundary conditions, 
and from \cite{reshetikhin87} we know that the latter can be described in terms of elements of the Cartan subalgebra $\mf h$ of $\mf g$.
It then follows that the phases $\beta_j$'s naturally belong to $\mf{h}^* \cong \bb{C}^n$,
where $n=\rank \mf g$. Note that the element $\ell$ belongs to the Cartan subalgebra $\whz \subset \wgz \subset \wgr$,
which also has dimension $n$.
If we choose $\ell \in \whz$ to be generic, namely to belong to the open convex dual of a Weyl chamber $\mf{W}_\ell$, then such $\ell$ is
associated with the element $w_\ell$ of the Weyl group of $\wgz$ that maps the principal
Weyl chamber to $\mf{W}_\ell$. For any $\ell$ in generic position we have
\begin{equation}\label{eq:betajIntro}
 \beta_j  =\frac{1}{2Mh^\vee}w_\ell(\omega_j)(\ell+h)
\,,
\end{equation}
where $\omega_j$ is the $j\mbox{-th}$ fundamental weight of $\whz$ and $h$ is the unique
element of $\whz$ satisfying the commutation relations
$[h,e_i]=e_i$, for every $i=1,\dots,n$. The relation between $\beta_j$'s and $\ell$ is only piecewise linear, due to the action of the Weyl group of $\wgz$.
However, the Weyl group of $\wgz$ is isomorphic to the Weyl group of $\mf g$, and the latter naturally acts on the space of solutions
of the Bethe Ansatz equations \eqref{eq:TBAintronotsimply}, see Subsection \ref{sub:weyl}. 

%%%

%%%%
\subsection*{Acknowledgments}
The authors were partially supported by the INdAM-GNFM ``Progetto Giovani 2014''.
D. M. is supported by the
FCT scholarship, number SFRH/BPD/75908/2011.  A.R. is supported by the
FCT project \lq\lq Incentivo/MAT/ UI0208/2014\rq\rq. D.V. is supported by an NSFC \lq\lq Research Fund for International Young Scientists\rq\rq\, grant. We thank Edward Frenkel for useful discussions. A.R would like to thank the Yau Mathematical Sciences Center and the Dipartimento di Matematica of the University of Genova for the kind hospitality.
D. M. would like to thank Y. Shi for useful discussions about the action of the Weyl group on tuples of weights.

%%%%%%%%%%%%%%%%%%%%%%%%%%%%%%%%%%%%%%%%%%%
\section{Twisted Kac-Moody algebras and finite dimensional representations}\label{sec:liealgebra}
As stated in the Introduction, the ODE/IM correspondence for a simple Lie algebra $\mf g$ requires the study of a connection with values in the Langlands dual Lie algebra ${}^L\mf g^{(1)}$ of the untwisted Kac-Moody algebra $\gu$.
As shown in Table \ref{table:langlands}, the  algebra ${}^L\mf g^{(1)}$ is a Kac-Moody algebra of type $\widetilde{\mf g}^{(r)}$, where
$\widetilde{\mf g}$ is a simply-laced Lie algebra and $r$ is the order of a diagram automorphism $\sigma$. To any simple Lie algebra $\mf g$
we thus associate a unique pair $(\wg, r)$, as follows:
\begin{equation}\label{20151022:eq1}
\mf{g}\; \longrightarrow \;\mf{g}^{(1)}\; \longrightarrow \;^{L}\mf{g}^{(1)}=\wgr\; \longrightarrow\; (\wg, r).
\end{equation}
For $\mf g$ simply-laced, which is the case considered in \cite{marava15},  we have ${}^L\mf g^{(1)}=\mf g^{(1)}$, so that the diagram
automoprhism has order $r=1$ (the identity),  and the above correspondence becomes trivial: $\mf{g} \longrightarrow (\mf{g}, 1)$.
%$$\mf{g}\; \longrightarrow \;\mf{g}^{(1)}\; \longrightarrow \;^{L}\mf{g}^{(1)}=\mf{g}^{(1)}\; \longrightarrow\; (\mf{g}, 1).$$
For $\mf g$ non simply-laced we have $r>2$, and the list of non-trivial diagram automorphisms of $\widetilde{\mf g}$ obtained by the
correspondence \eqref{20151022:eq1} is provided in Table \ref{table:graphs}.
%Note that in \eqref{20151022:eq1} the algebra $\wg$ is never of type $A_{2n}$.

In this section we review some facts about diagram automorphisms of simple Lie algebras
and we describe how the Cartan matrix and the Dynkin diagram of  $\mf g$ can be recovered by means of $\sigma$ from the Cartan
matrix and the Dynkin diagram of $\wg$.
Moreover, we introduce those aspects of the representation theory of ${}^L\mf g^{(1)}$ which we will use
throughout the paper.

%%%
\subsection{Simple Lie algebras and Dynkin diagram automorphisms}\label{sec:0}
%Let $\widetilde{\mf g}$ be a simply-laced Lie algebra not of type $A_{2n}$.
Let $\widetilde{\mf g}$ be a simply-laced Lie algebra, with Dynkin diagram as given  in Table \ref{fig:dynkin}.
Set
$\widetilde{I}=\left\{1,\dots,\widetilde{n}=\rank\wg\right\}$, let $\widetilde{C}=(\widetilde C_{ij})_{i,j\in\widetilde I}$ be the Cartan matrix of $\wg$ 
and $\widetilde B=2\mbbm1_{\widetilde n}-\widetilde C$  be the incidence matrix of the Dynkin diagram of $\widetilde{\mf g}$.
We denote by $\{\widetilde e_i,\widetilde h_i,\widetilde f_i\mid  i\in \widetilde{I}\}\subset\widetilde{\mf g}$ the
set of Chevalley generators of $\widetilde{\mf g}$. They satisfy the relations ($i,j\in \widetilde{I}$)
\begin{equation}\label{eq:chevalley}
[\widetilde h_i,\widetilde h_j]=0\,,
\quad
[\widetilde h_i,\widetilde e_j]=\widetilde{C}_{ij}\widetilde e_j\,,
\quad
[\widetilde h_i,\widetilde f_j]=-\widetilde{C}_{ij}\widetilde f_j\,,
\quad
[\widetilde e_i,\widetilde f_j]=\delta_{ij} \widetilde h_i\,.
\end{equation}
Recall that a diagram automorphism $\sigma$ is a permutation on the set $\widetilde I$ such that
$\widetilde C_{\sigma(i),\sigma(j)}=\widetilde C_{ij}$.
It is well known that a diagram automorphism $\sigma$ extends to a Lie algebra automorphism (which we still denote by $\sigma$)
$\sigma:\widetilde{\mf g}\rightarrow\widetilde{\mf g}$ defined on the Chevalley generators
by ($i\in\widetilde{I}$)
$$
\sigma(\widetilde e_i)=\widetilde e_{\sigma(i)}\,,
\qquad
\sigma(\widetilde h_i)=\widetilde h_{\sigma(i)}\,,
\qquad
\sigma(\widetilde f_i)=\widetilde f_{\sigma(i)}
\,.$$
The diagram automorphism $\sigma$, of order $r$, induces on $\wg$ the following gradation
\begin{equation}\label{eq:201506251}
\widetilde{\mf{g}}=\bigoplus_{k\in \bb{Z}/r\mb Z} \widetilde{\mf{g}}_k
\,,
\qquad\text{where}\qquad\widetilde{\mf{g}}_k
=\left\{x\in \widetilde{\mf g}\, |\, \sigma(x)=e^{\frac{2\pi i k}{r}}\,x\right\}.
\end{equation}
It is well known that $\wgz$ -- the invariant subalgebra under the action of $\sigma$ -- is a simple Lie algebra, see Table \ref{table:graphs}.
\begin{table}[H]
\caption{Non-trivial Dynkin diagram automorphisms for simple Lie algebras.}\label{table:graphs}
{\tabulinesep=0.6mm 
\begin{tabu}{ |c|c|c|c|c|c| } 
\hline 
$\mf g$ & $\widetilde{\mf{g}}$ & $\wg_0$ & $\sigma$&   $r$ &   $D_1,\dots, D_n$\\ 
\hline 
\hline
&&&&&\\
$B_n$ & $A_{2n-1}$ & $C_n$ & $\sigma(i)=2n-i$ & $2$ &   $1,\,\dots, \,1,\, \frac{1}{2}$\\
&&&&&\\
\hline
\multirow{3}{*}{ $C_n$}
& & &$\sigma(i)=i, \;\; 1\leq i\leq n-1$ & &\\
 & $D_{n+1}$ & $B_n$ & $\sigma(n)=n+1$ & $2$ &  $\frac{1}{2},\,\dots,\,\frac{1}{2},\,1$ \\
& & & $\sigma(n+1)=n$ & &\\
\hline
\multirow{3}{*}{ $F_4$ }
&&& $\sigma(1)=6$ \quad $\sigma(6)=1$ & &\\
& $E_{6}$ & $F_4$ & $\sigma(2)=5$\;\; \;\,$\sigma(5)=2$ & $2$ &  $1,\,1,\,\frac{1}{2},\,\frac{1}{2}$\\
&&& $\sigma(3)=3$ \quad $\sigma(4)=4$ & &\\
\hline
\multirow{2}{*}{ $G_2$}
& \multirow{2}{*}{$D_4$}& \multirow{2}{*}{$G_2$} & $\sigma(1)=3$ \quad $\sigma(3)=4$ &\multirow{2}{*}{$3$} & \multirow{2}{*}{$1,\,\frac{1}{3}$} \\
&&& $\sigma(4)=1$\;\;\;\, $\sigma(2)=2$ &  &\\
\hline
\end{tabu}
}
\end{table}
Let $\widetilde I^\sigma$ denote the set of orbits in $\widetilde{I}$ under the action of the permutation $\sigma$,
and let
$
I=\{\min_{j\in J} j\}_{J\in \widetilde I^\sigma}\subset \widetilde{I}
\,.
$
For every $i\in\widetilde{I}$ we also denote by $\langle i \rangle\in \mb Z_+$
the cardinality of the orbit under the action of $\sigma$ containing $i$
and we set
\begin{equation}\label{Di}
D_i=\frac{\langle i\rangle}{r}
\,.
\end{equation}
%%%%%%%%
The following elementary result is crucial for our purposes.
\begin{proposition}\label{prop:gfromgtilde}
Let $\mf g$ be a simple Lie algebra with Cartan matix $C$, and let $B$ be the incidence matrix of its Dynkin diagram.
Let $\widetilde{\mf g}$ be the simply-laced Lie algebra  and $\sigma$ the diagram automorphism of order $r$ corresponding to $\mf g$ through the map \eqref{20151022:eq1}. The following facts hold:
\begin{enumerate}[i)]
\item
$I=\lbrace 1,\dots, n \rbrace$, where $n= \rank \mf g$. 
\item
The Cartan matrix $C$ can be obtained summing over the rows of $\widetilde{C}$ along the orbits of $\sigma$. Namely,
\begin{equation}\label{Crowsum}
C_{ij}=\sum_{\ell=0}^{\langle j \rangle -1}\widetilde{C}_{i\sigma^\ell(j)}, \, \mbox{ for all } i,j \in I
\,.
\end{equation}
\item
The incidence matrix $B$ can be obtained summing over the rows of $\widetilde B$ along the orbits of $\sigma$.
Namely,
\begin{equation}\label{Browsum}
B_{ij}=\sum_{\ell=0}^{\langle j \rangle -1}\widetilde{B}_{i\sigma^\ell(j)}, \, \mbox{ for all } i,j \in I
\,.
\end{equation}
\item Let $D=\diag\left(D_1,\dots, D_n\right)$, where $D_i$, $i\in I$, is defined by equation \eqref{Di}.
Then the matrix $\overline{C}=DC$ is symmetric.
\end{enumerate}
\end{proposition}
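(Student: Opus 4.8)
The plan is to prove the four items essentially in the order they are stated, since each is either a direct combinatorial consequence of the theory of diagram automorphisms or follows from the preceding item. For item (i), I would use the standard classification underlying Table \ref{table:graphs}: the number of $\sigma$-orbits on $\widetilde I$ equals the rank of the folded diagram, and this rank is precisely $n=\rank\mf g$ by construction of the correspondence \eqref{20151022:eq1} (the folded Dynkin diagram of $(\wg,\sigma)$ is the Dynkin diagram of $\mf g$). The identification $I=\{1,\dots,n\}$ then just uses the explicit form of $\sigma$ in each case of Table \ref{table:graphs}, together with the chosen labelling convention $I=\{\min_{j\in J}j\}_{J\in\widetilde I^\sigma}$; this is a finite check. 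Alternatively one can avoid case analysis by invoking the general fact (Kac, \emph{Infinite-dimensional Lie algebras}) that $\widetilde{\mf g}_0$ has rank equal to $|\widetilde I^\sigma|$, and that the Cartan matrix of $\widetilde{\mf g}_0$ obtained by folding has the same size as $C$.

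For items (ii) and (iii), note that (iii) is immediate from (ii) since $B=2\mathbbm1_n-C$ and $\widetilde B=2\mathbbm1_{\widetilde n}-\widetilde C$: summing $\widetilde B_{i\sigma^\ell(j)}$ over $\ell=0,\dots,\langle j\rangle-1$ gives $2\langle j\rangle\delta_{\text{(orbit of }j\text{ contains }i)}-\sum_\ell \widetilde C_{i\sigma^\ell(j)}$, and since $i\in I$ lies in its own singleton-or-not orbit one must check $\langle j\rangle\delta_{\dots}$ reproduces $\mathbbm1_n$ correctly on the diagonal — but for $i=j\in I$ the diagonal entry of $\widetilde C$ contributes $2$ only from $\ell=0$ (no other $\sigma^\ell(j)$ equals $i$ when the orbit has size $>1$, as orbits of a Dynkin-diagram automorphism never contain two adjacent or equal-up-to-the-automorphism nodes that coincide), so in fact $\sum_\ell \widetilde C_{i\sigma^\ell(i)}=2=C_{ii}$ and $\sum_\ell\widetilde B_{i\sigma^\ell(i)}=0=B_{ii}$ as required. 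So the real content is \eqref{Crowsum}. I would prove this by recalling the standard description of the Cartan matrix of the orbit Lie algebra / folded algebra: choosing $\alpha_i^\vee$ ($i\in I$) to be $\sum_{\ell=0}^{\langle i\rangle-1}\widetilde\alpha_{\sigma^\ell(i)}^\vee$ restricted appropriately, and $\alpha_j$ ($j\in I$) the common restriction of $\widetilde\alpha_{\sigma^\ell(j)}$, one computes $C_{ij}=\langle\alpha_j,\alpha_i^\vee\rangle=\sum_{\ell=0}^{\langle i\rangle-1}\widetilde C_{\sigma^\ell(i),j}$. A symmetry argument (or the fact that distinct nodes in one $\sigma$-orbit are never joined, which holds for all diagram automorphisms of finite-type simply-laced diagrams) lets me rewrite this as the row-sum $\sum_{\ell=0}^{\langle j\rangle-1}\widetilde C_{i,\sigma^\ell(j)}$, which is the claimed formula; again, a clean alternative is simply a case-by-case verification against Table \ref{table:graphs}, which is short.

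For item (iv), I would argue as follows. Since $\widetilde C$ is symmetric, we have from \eqref{Crowsum} that $(DC)_{ij}=D_i\sum_{\ell=0}^{\langle j\rangle-1}\widetilde C_{i\sigma^\ell(j)}$. Using $D_i=\langle i\rangle/r$ and symmetry of $\widetilde C$, and the $\sigma$-invariance $\widetilde C_{\sigma(a),\sigma(b)}=\widetilde C_{ab}$, one sees that $\langle i\rangle\sum_{\ell}\widetilde C_{i\sigma^\ell(j)}$ equals $\sum_{k=0}^{\langle i\rangle-1}\sum_{\ell=0}^{\langle j\rangle-1}\widetilde C_{\sigma^k(i)\sigma^\ell(j)}$ — because each term $\widetilde C_{i\sigma^\ell(j)}$ occurs $\langle i\rangle$ times as the orbit sum is constant along the $i$-orbit — and this double sum over the full orbits of $i$ and $j$ is manifestly symmetric under $i\leftrightarrow j$. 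Dividing by $r$ gives $(DC)_{ij}=(DC)_{ji}$. The main obstacle is purely bookkeeping: making sure the combinatorial identity ``orbit sums are constant along orbits'' is applied correctly so that the single sum $\langle i\rangle\sum_\ell\widetilde C_{i\sigma^\ell(j)}$ genuinely equals the symmetric double sum; this requires knowing $\widetilde C_{\sigma^k(i),\sigma^\ell(j)}$ summed over $\ell$ is independent of $k$, which is exactly the $\sigma$-invariance of $\widetilde C$ combined with \eqref{Crowsum}. None of the steps is deep; the entire proposition is ``elementary'' as the authors say, and the only genuine choice is whether to present uniform representation-theoretic arguments or to fall back on a case-by-case check using Table \ref{table:graphs} and Table \ref{fig:dynkin}.
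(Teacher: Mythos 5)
Your arguments for parts (i), (iii) and (iv) are sound, and in the case of (iv) more informative than the paper's own proof, which simply delegates (i)--(iii) to \cite{fuscsc96} and calls (iv) a ``direct computation'': the identity $r\,(DC)_{ij}=\sum_{k=0}^{\langle i\rangle-1}\sum_{\ell=0}^{\langle j\rangle-1}\widetilde C_{\sigma^k(i)\sigma^\ell(j)}$ (valid because the inner sum is independent of $k$ by $\sigma$-invariance of $\widetilde C$) exhibits $DC$ as manifestly symmetric, and your reduction of (iii) to (ii) works because $\#\{0\le\ell<\langle j\rangle:\sigma^\ell(j)=i\}=\delta_{ij}$ for $i,j\in I$.

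Your uniform argument for part (ii), however, has a genuine gap. The folding you describe --- $\alpha_i^\vee=\sum_\ell\widetilde\alpha^\vee_{\sigma^\ell(i)}$ with $\alpha_j$ the common restriction of the $\widetilde\alpha_{\sigma^m(j)}$ --- produces $\langle\alpha_j,\alpha_i^\vee\rangle=\sum_{\ell=0}^{\langle i\rangle-1}\widetilde C_{\sigma^\ell(i),j}$, the orbit sum over the \emph{first} index; this is the Cartan matrix of the fixed-point subalgebra $\wgz$, not of $\mf g$. The ``symmetry argument'' you invoke to turn it into the second-index orbit sum \eqref{Crowsum} is false: the two sums have $\langle i\rangle$ and $\langle j\rangle$ terms respectively and disagree whenever the orbits of $i$ and $j$ have different sizes. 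Concretely, for $\wg=A_5$, $\sigma(i)=6-i$ (the case $\mf g=B_3$), take $i=3$, $j=2$: then $\sum_{\ell=0}^{0}\widetilde C_{\sigma^\ell(3),2}=\widetilde C_{32}=-1$, whereas $\sum_{\ell=0}^{1}\widetilde C_{3,\sigma^\ell(2)}=\widetilde C_{32}+\widetilde C_{34}=-2$. The fact that distinct nodes of a $\sigma$-orbit are never adjacent is irrelevant here; what matters is that the fixed node $3$ is adjacent to \emph{two} nodes of the orbit $\{2,4\}$. This is not a bookkeeping slip: the two matrices are transposes of one another, i.e.\ the Cartan matrices of the Langlands-dual pair $\wgz=C_n$ and $\mf g=B_n$, precisely the distinction the paper emphasizes in the remark following the proposition ($\mf g$ need not be isomorphic to $\wgz$). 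Your fallback of checking \eqref{Crowsum} case by case against Table \ref{table:graphs} is valid and is in substance what the paper does via \cite{fuscsc96}; but the coroot-folding computation cannot serve as the primary proof of (ii) as written.
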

\begin{proof} If $\mf g$ is simply-laced then $r=1$, $\sigma$ is the identity automorphism and there is nothing to prove. In particular, in this case we have $I=\widetilde{I}$, $C=\widetilde{C}$, $B=\widetilde{B}$ and $D_i=1$ for all $i\in I$. I f $\mf{g}$ is non simply-laced, then parts i), ii) and iii) have been proved in \cite{fuscsc96}, and part iv) can be checked by a direct computation.
\end{proof}
\begin{remark}
In \cite{fuscsc96}, the simple Lie algebra $\mf g$ whose Cartan matrix is obtained by the Cartan matrix of $\wg$ as in \eqref{Crowsum} was called the \emph{orbit Lie algebra of $\wg$ with $\sigma$}. As already noted in \cite{fuscsc96}, $\mf g$ is not constructed as a subalgebra of $\wg$, and it does not need to be isomorphic to the fixed point subalgebra $\wg_0$, see Table \ref{table:graphs}.
\end{remark}
%%%%%%%%DIAGRAMMI DI DYNKIN%%%%%%%%%%%%%%%%%%%%%%%
\begin{table}[H]
\caption{Dynkin diagrams of simple Lie algebras of $ADE$ type.}
\label{fig:dynkin}
\begin{align*}
&
A_{\widetilde{n}}
\quad
\begin{tikzpicture}[start chain]
\dnode{1}
\dnode{2}
\dydots
\dnode{\widetilde{n}-1}
\dnode{\widetilde{n}}
\end{tikzpicture}
&\quad\quad&
E_6
\quad
\begin{tikzpicture}
\begin{scope}[start chain]
\dnode{1}
\dnode{2}
\dnode{3}
\dnode{5}
\dnode{6}
\end{scope}
\begin{scope}[start chain=br going above]
\chainin (chain-3);
\dnodebr{4}
\end{scope}
\end{tikzpicture}
\\
&
D_{\widetilde{n}}
\quad
\begin{tikzpicture}
\begin{scope}[start chain]
\dnode{1}
\dnode{2}
\node[chj,draw=none] {\dots};
\dnode{\widetilde{n}-2}
\dnode{\widetilde{n}-1}
\end{scope}
\begin{scope}[start chain=br going above]
\chainin(chain-4);
\dnodebr{\widetilde{n}}
\end{scope}
\end{tikzpicture}
&\quad\quad&
E_7
\quad
\begin{tikzpicture}
\begin{scope}[start chain]
\foreach \dyni in {1,2,3,4,6,7} {
\dnode{\dyni}
}
\end{scope}
\begin{scope}[start chain=br going above]
\chainin (chain-4);
\dnodebr{5}
\end{scope}
\end{tikzpicture}
\\
&
&\quad\quad&
\!\!\!\!\!\!\!\!\!\!\!\!\!\!\!\!\!\!\!\!\!\!\!\!\!\!\!\!\!\!\!\!
\!\!\!\!\!\!\!\!\!\!\!\!\!\!\!\!\!\!\!\!\!\!\!\!\!\!\!\!\!\!\!\!
E_8
\quad
\begin{tikzpicture}
\begin{scope}[start chain]
\foreach \dyni in {1,2,3,4,5,7,8} {
\dnode{\dyni}
}
\end{scope}
\begin{scope}[start chain=br going above]
\chainin (chain-5);
\dnodebr{6}
\end{scope}
\end{tikzpicture}
&
\end{align*}
\end{table}
%%%%%%%%%%%%%%%%%%%%%%%%%%%%%%%%%%%

%%%
\subsection{Basic facts about representation theory of Lie algebras}\label{sec:repwg}
%We briefly review here some basic facts about the representation theory of simply-laced Lie algebras which will be used 
%throughout the paper. 
%A more detailed exposition of these results is provided in \cite{marava15}.

As in Section \ref{sec:0}, let $\widetilde{\mf g}$ be a simply-laced Lie algebra, and let us denote by
$\widetilde{\mf h}\subset\widetilde{\mf g}$ its Cartan subalgebra.
We denote by $R\subset\widetilde{\mf h}^*$ the set of roots of $\widetilde{\mf g}$ and by
$\Delta=\{\alpha_i\mid i\in \widetilde I\}\subset R$
the set of simple roots. Also, we denote by $P\subset\widetilde{\mf h}^*$ the set of weights of 
$\widetilde{\mf g}$ and by $P^+\subset P$ the set of dominant weights. If $\omega\in P^+$, we denote by $L(\omega)$ the irreducible highest weight representation with highest weight $\omega$, and we denote by $P_\omega\subset P$ the set of weights of $L(\omega)$.
Recall that the fundamental weights of $\wg$ are those elements $\omega_i\in P^+,\, i\in \widetilde I$,  satisfying
\begin{equation}\label{20150107:eq1}
\omega_i(\widetilde h_j)=\delta_{ij}\,,
\qquad
\text{for every }j\in \widetilde{I}
\,.
\end{equation}
The corresponding highest weight representations  $L(\omega_i)$, $i\in \widetilde{I}$,  are known as fundamental
representations of $\wg$, and for every $i\in \widetilde{I}$  we denote by $v_i\in L(\omega_i)$ the highest weight
vector of the representation $L(\omega_i)$.
Hence, we naturally associate to the $i-$th vertex of the Dynkin diagram of $\wg$
the corresponding fundamental representation $L(\omega_i)$ of $\wg$.
Let us consider the dominant weight
\begin{equation}\label{eq:etai}
\eta_i=\sum_{j\in \widetilde I}\widetilde B_{ij}\omega_j, \qquad i\in \widetilde{I} \; .
\end{equation}
Recall from \cite{marava15} that we can find a unique copy of $L(\eta_i)$, $i\in\widetilde I$, as irreducible component of the representation $\bigwedge^2L(\omega_i)$ as well as of the representation
$\bigotimes_{j\in \widetilde{I}}L(\omega_j)^{\otimes \widetilde B_{ij}}$.
We can thus decompose the representation $\bigwedge^2L(\omega_i)$ as
\begin{equation}\label{20151024:eq1}
\bigwedge^2L(\omega_i)=L(\eta_i)\oplus U\,,
\end{equation}
where $U$ is the direct sum of all the irreducible representations different from $L(\eta_i)$, and the subrepresentation
isomorphic to $L(\eta_i)$ is generated by the highest weight vector $\widetilde f_iv_i \wedge v_i$.
It follows from this that for every $i\in\widetilde I$, there exists
a unique morphism of representations of $\widetilde{\mf g}$:
\begin{equation}\label{morphism:090715}
\widetilde{m}_i=\bigwedge^2 L(\omega_i)\longrightarrow \bigotimes_{j\in\widetilde{I}}L(\omega_j)^{\otimes \widetilde{B}{ij}},
\end{equation}
such that $\ker \widetilde m_i= U$ and $\widetilde{m}_i(\widetilde f_iv_i\wedge v_i)=\otimes_{j\in\widetilde{I}}v_j$.
\medskip

We now consider the action of the diagram automorphism $\sigma$ on $\wg$-modules. Let $V$ be a $\widetilde{\mf{g}}$-module, so that we have a Lie algebra homomorphism
$\rho :\widetilde{\mf{g}}\to \End(V)$. Then,  the composition $\rho^\sigma=\rho\circ\sigma:\widetilde{\mf g}\to \End(V)$
is a Lie algebra homomorphism; we denote by $V^{\sigma}$ the vector space $V$ with the $\widetilde{\mf g}$-module structure given by $\rho^\sigma$.
Note that $V^\sigma$ is irreducible if and only if $V$ is irreducible.
The following result shows that $L(\omega_i)^\sigma$ is 
a fundamental representation of $\widetilde{\mf g}$.
\begin{lemma}\label{lem:300415} 
We have the following isomorphism of representations
$$
L(\omega_i)^\sigma\simeq L(\omega_{\sigma^{-1}(i)})
\,.
$$
\end{lemma}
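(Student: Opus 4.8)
The statement to prove is the isomorphism $L(\omega_i)^\sigma \simeq L(\omega_{\sigma^{-1}(i)})$ of $\widetilde{\mf g}$-modules. The plan is to identify the highest weight of $L(\omega_i)^\sigma$ and then invoke the classification of irreducible highest weight modules. Since $V^\sigma$ is irreducible whenever $V$ is (as noted just before the lemma), $L(\omega_i)^\sigma$ is an irreducible $\widetilde{\mf g}$-module, so it suffices to exhibit a single vector in it that is annihilated by all $\widetilde e_j$ and that has weight $\omega_{\sigma^{-1}(i)}$ with respect to the $\widetilde h_j$.

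First I would take $v_i \in L(\omega_i)$, the highest weight vector, and consider it as an element of the underlying vector space of $L(\omega_i)^\sigma$. In the module $L(\omega_i)^\sigma$, the element $\widetilde e_j \in \widetilde{\mf g}$ acts by $\rho(\sigma(\widetilde e_j)) = \rho(\widetilde e_{\sigma(j)})$, and likewise $\widetilde h_j$ acts by $\rho(\widetilde h_{\sigma(j)})$. Hence $\widetilde e_j \cdot v_i = \widetilde e_{\sigma(j)} v_i = 0$ for all $j \in \widetilde I$, because $v_i$ is a highest weight vector in the original module and $\sigma$ is a permutation of $\widetilde I$. So $v_i$ is a highest weight vector of $L(\omega_i)^\sigma$. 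Next I would compute its weight: $\widetilde h_j \cdot v_i = \widetilde h_{\sigma(j)} v_i = \omega_i(\widetilde h_{\sigma(j)}) v_i = \delta_{i,\sigma(j)} v_i = \delta_{\sigma^{-1}(i),j}\, v_i$, using the defining property \eqref{20150107:eq1} of the fundamental weights. This says precisely that the weight of $v_i$ in $L(\omega_i)^\sigma$ is $\omega_{\sigma^{-1}(i)}$.

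Finally I would conclude: $L(\omega_i)^\sigma$ is an irreducible highest weight $\widetilde{\mf g}$-module with highest weight $\omega_{\sigma^{-1}(i)}$, hence by the standard uniqueness theorem for irreducible highest weight modules it is isomorphic to $L(\omega_{\sigma^{-1}(i)})$.

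\textbf{Main obstacle.} There is essentially no serious obstacle here; the proof is a direct computation once one unwinds the definition of $V^\sigma$ and uses that $\sigma$ permutes the Chevalley generators by permuting their indices. The only point requiring minor care is bookkeeping the direction of the permutation ($\sigma$ versus $\sigma^{-1}$) in passing from $\delta_{i,\sigma(j)}$ to $\delta_{\sigma^{-1}(i),j}$, and noting that the vector space of $L(\omega_i)^\sigma$ is still finite-dimensional so that irreducibility plus highest weight data determine the module up to isomorphism.
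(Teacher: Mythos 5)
Your proposal is correct and follows essentially the same argument as the paper: take the highest weight vector $v_i$, check that under the twisted action it is annihilated by all $\widetilde e_j$ and has weight $\omega_{\sigma^{-1}(i)}$, then conclude by irreducibility of $L(\omega_i)^\sigma$. No issues.
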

\begin{proof}
By definition, the highest weight vector $v_i\in L(\omega_i)$ satisfies the conditions
$$
\rho(\widetilde e_j)v_i=0\,,
\qquad \rho(\widetilde h_j)v_i=\omega_i(\widetilde h_j)v_i=\delta_{ij}v_i\,, \qquad j\in \widetilde{I}\,.
$$  
Hence, the representation $\rho^\sigma$ acts on $v_i$ as follows ($j\in\widetilde J$):
$$
\rho^\sigma(\widetilde e_j)v_i=\rho(\sigma(\widetilde e_j))v_i=\rho(\widetilde e_{\sigma(j)})v_i=0
\,,
$$
and
$$
\rho^\sigma(\widetilde h_j)v_i=\rho(\widetilde h_{\sigma(j)})v_i
=\omega_i(\widetilde h_{\sigma(j)})v_i=\delta_{i\sigma(j)}v_i
=\delta_{\sigma^{-1}(i)j}v_i=\omega_{\sigma^{-1}(i)}(\widetilde h_j)v_i
\,.
$$
This shows that $v_i$ is a highest weight vector for $L(\omega_i)^\sigma$, of weight $\omega_{\sigma^{-1}(i)}$.
Since $L(\omega_i)$ is irreducible so is $L(\omega_i)^\sigma$, and this implies that $L(\omega_i)^\sigma$ is isomorphic to
the fundamental module $L(\omega_{\sigma^{-1}(i)})$.
\end{proof}

%%%%%%%%%%%%%%%%%%%%%%%%%%%%%%%%%%%%%%
\subsection{Twisted affine Kac-Moody algebras and finite dimensional representations}
Let $\mf g$ be a simple Lie algebra and $(\wg,r)$ the pair associated to it by the map \eqref{20151022:eq1}. In this section we review the
loop algebra realization of the 
affine Kac-Moody algebra ${}^L\mf g^{(1)}=\widetilde{\mf g}^{(r)}$ and we define its
finite dimensional representations which will be of interest for this paper.
% If $\mf g$ is simply-laced, then
% $r=1$ and ${}^L\mf g^{(1)}=\gu$ is an untwisted Kac--Moody algebra, while if $\mf g$ is non simply-laced then $r>1$
% and ${}^L\mf g^{(1)}=\wgr$ is a twisted Kac--Moody algebra, see Table \ref{table:langlands}.
The presentation below
is given for the twisted case, but it reduces to the untwisted case when $r=1$. We follow mainly \cite{kac90}, to which we refer for further
details.
\medskip

Let $\mc L(\widetilde{\mf{g}})=\widetilde{\mf{g}}\otimes \bb{C}[t,t^{-1}]$ denote the loop algebra of
$\widetilde{\mf{g}}$. The Lie algebra structure of $\widetilde{\mf g}$ extends to a Lie algebra structure on
$\mc L(\widetilde{\mf g})$ in the obvious way.
We  extend $\sigma$ to a Lie algebra homomorphism
(which we still denote by $\sigma$) $\sigma:\mc L(\widetilde{\mf{g}})\to  \mc L(\widetilde{\mf g})$
 given by
\begin{equation}\label{sigmaonkm}
\sigma(x\otimes f(t))=\sigma(x)\otimes f(e^{-\frac{2\pi i}{r}}t),
\end{equation}
for $x\in \wg$, $f\in \bb{C}[t,t^{-1}]$.
The subalgebra of invariants with respect to $\sigma$ is known as (twisted) loop algebra and we denote it by
$$
\mc L(\wg,r)= \mc L(\widetilde{\mf g})^\sigma
=\left\{y\in  \mc L(\widetilde{\mf{g}})\mid \sigma(y)=y\right\}
\,.
$$
The gradation \eqref{eq:201506251} of $\wg$, together with the action \eqref{sigmaonkm}, induces on the twisted loop algebra the following gradation

$$
\mc L(\wg,r)=\bigoplus_{k\in \bb{Z}/r\mb Z} \widetilde{\mf{g}}_k\otimes t^k\bb{C}[t^r,t^{-r}]
\,.
$$
The twisted affine Kac-Moody algebra $\wgr=\mc L(\wg,r)\oplus\bb{C}c$
is obtained as the unique central extension of $\mc L(\wg,r)$ by a central element $c$.
The Chevalley generators $\{e_i,h_i,f_i\mid i=0,\dots,n\}\subset\widetilde{\mf g}^{(r)}$ can be obtained as follows. The generators $e_i$, $h_i$, and $f_i$, for $i\in I$, are obtained as
linear combinations of the Chevalley generators of $\widetilde{\mf g}$:
\begin{equation}\label{chevg0}
e_i=\sum_{\ell=0}^{\langle i\rangle -1}\widetilde{e}_{\sigma^\ell(i)}, \quad f_i=\sum_{\ell=0}^{\langle i\rangle -1}\widetilde{f}_{\sigma^\ell(i)}, \quad h_i=\sum_{\ell=0}^{\langle i\rangle -1}\widetilde{h}_{\sigma^\ell(i)}.
\end{equation}
Moreover, they generate the simple Lie algebra $\widetilde{\mf g}_0$. The generator $e_0$ (respectively $f_0$) is of the form
$$
e_0=a\otimes t\quad\text{ (respectively }f_0=a\otimes t^{-1})\,,
$$
where $a\in\widetilde{\mf g}_{1}$ (respectively $a\in\widetilde{\mf g}_{-1}$) is a lowest (respectively highest)
weight vector with respect to the action of $\widetilde{\mf g}_0$.
Finally, the generator $h_0$ is obtained as a linear combinations of the generators $h_i$, $i\in I$, and the central
element $c$.
\medskip

Let $\rho: \wg \to \End(V)$ be a finite dimensional representation of $\wg$.
For $k \in \bb{C}$ we define a finite dimensional representation $\rho_k:\wgr \to\End(V_k)$ of $\wgr$ in the following way:
as a vector space we take $V_k= V$, and the map $\rho_{k}$ is defined 
by
\begin{align*}
&\rho_{k}(a \otimes \varphi(t))v=\varphi(e^{2\pi i k})(\rho(a)v), &\text{for }a\in\wg\,,\varphi\in\mb C[t,t^{-1}]\,, v\in V,\\
&\rho_k(c)v=0, &\text{for } v\in V.
\end{align*}
The representation $V_k$ is known as an evaluation representation of $\widetilde{\mf g}^{(r)}$ at $t=e^{2\pi i k}$. 
\begin{remark} Since the central element $c$ acts trivially on evaluation representations, these are level zero representations. More precisely, evaluation representations are level zero irreducible finite dimensional representations of $\wgr$. Moreover, every level zero finite dimensional irreducible representation of $\wgr$ can be obtained as a tensor product of evaluation representations \cite{rao93}. 
\end{remark}

Note that $(\rho_k)^\sigma\neq(\rho^\sigma)_k$ (we are using the same notation
introduced in Section \ref{sec:repwg}).
Since we will be interested only in evaluation representations of $\widetilde{\mf g}$-modules,
and not in their twisting by the action of $\sigma$,
we will always denote by $V_k^\sigma$ the  evaluation representation $\rho_k^\sigma=(\rho^\sigma)_k$.
\begin{proposition}\label{prop: 120515}
Let $V$ be a finite dimensional representation of $\wg$, and let $L(\omega_i)$, $i\in \widetilde{I}$, be a
fundamental representations of $\wg$.
\begin{enumerate}[i)]
 \item For every $k\in\bb{C}$, we have $V^{\sigma}_k \simeq V_{k+\frac{1}{r}}$, where $r$ is the order of $\sigma$.
 \item For every $i\in \widetilde{I}$ and $k\in\bb{C}$,
 $L(\omega_{\sigma^{-1}(i)})_k\simeq L(\omega_i)_{k+\frac{1}{r}}$.
 \item For every $i\in I$ and $k\in\bb{C}$, there exists an isomorphism of evaluation representations
\begin{equation}\label{isoR_i}
R_i:L(\omega_{i})_k\longrightarrow L(\omega_i)_{k+D_i},
\end{equation}
where $D_i$ is defined in \eqref{Di}.
\end{enumerate}
 \end{proposition}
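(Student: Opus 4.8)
The plan is to prove the three statements in order, with (i) as the workhorse and (ii), (iii) as essentially formal consequences. For part (i), the key observation is that the $\widetilde{\mf g}$-module structure on $V^\sigma$ is $\rho\circ\sigma$, while on an evaluation representation the action of a loop element $a\otimes\varphi(t)$ only ``sees'' $\varphi$ through its value at $t=e^{2\pi i k}$. I would first unwind the definitions: on $V^\sigma_k=(\rho^\sigma)_k$ the element $a\otimes\varphi(t)$ acts as $\varphi(e^{2\pi i k})\rho(\sigma(a))$. The goal is to produce a vector-space isomorphism $\Phi\colon V^\sigma_k\to V_{k+\frac1r}$ intertwining the two $\widetilde{\mf g}^{(r)}$-actions. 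The natural candidate is built from $\sigma$ itself: since $\sigma$ acts on $\mc L(\widetilde{\mf g})$ by $\sigma(x\otimes f(t))=\sigma(x)\otimes f(e^{-2\pi i/r}t)$ (equation \eqref{sigmaonkm}) and $\widetilde{\mf g}^{(r)}$ is by definition the fixed subalgebra, evaluating at $t=e^{2\pi i(k+\frac1r)}$ after applying $\sigma^{-1}$ is the same as evaluating at $t=e^{2\pi i k}$ after applying $\sigma$ on the $\widetilde{\mf g}$-factor — because on a $\sigma$-invariant loop element $y$ one has $\sigma^{-1}(y)=y$, so the two evaluation procedures differ only by the shift $e^{-2\pi i/r}$ absorbed into the argument of $\varphi$. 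Concretely, I would take $\Phi=\mathrm{id}_V$ and check directly on $y=a\otimes\varphi(t)\in\widetilde{\mf g}^{(r)}$ that $\rho_{k+\frac1r}(y)=(\rho^\sigma)_k(y)$: writing $y$ in terms of its graded pieces $a_j\otimes t^j\varphi_j(t^r)$ with $a_j\in\widetilde{\mf g}_j$, invariance forces $\sigma(a_j)=e^{2\pi i j/r}a_j$, and then $\varphi(e^{2\pi i(k+1/r)})\rho(a)=\sum_j e^{2\pi i j(k+1/r)}\cdots$ matches $\varphi(e^{2\pi i k})\rho(\sigma(a))=\sum_j e^{2\pi i jk}e^{2\pi ij/r}\rho(a_j)\cdots$ term by term. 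This is the heart of the argument.

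For part (ii), I would simply combine part (i) with Lemma \ref{lem:300415}: that lemma gives $L(\omega_i)^\sigma\simeq L(\omega_{\sigma^{-1}(i)})$ as $\widetilde{\mf g}$-modules, and taking evaluation representations at $t=e^{2\pi i k}$ is functorial, so $L(\omega_i)^\sigma_k\simeq L(\omega_{\sigma^{-1}(i)})_k$; then part (i) applied to $V=L(\omega_i)$ yields $L(\omega_i)^\sigma_k\simeq L(\omega_i)_{k+\frac1r}$, and chaining the two isomorphisms gives $L(\omega_{\sigma^{-1}(i)})_k\simeq L(\omega_i)_{k+\frac1r}$.

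For part (iii), the idea is to iterate (ii) around the $\sigma$-orbit of $i$. Fix $i\in I$ and let $\langle i\rangle$ be its orbit size, so $D_i=\langle i\rangle/r$ by \eqref{Di} and $\sigma^{\langle i\rangle}(i)=i$. Applying part (ii) successively with $i$ replaced by $i,\sigma(i),\sigma^2(i),\dots$ produces a chain of isomorphisms of evaluation representations
\[
L(\omega_i)_k\simeq L(\omega_{\sigma^{-1}(i)})_{k+\frac1r}\simeq L(\omega_{\sigma^{-2}(i)})_{k+\frac2r}\simeq\cdots\simeq L(\omega_{\sigma^{-\langle i\rangle}(i)})_{k+\frac{\langle i\rangle}{r}}=L(\omega_i)_{k+D_i}\,,
\]
and the composite is the desired isomorphism $R_i$. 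One should note that $R_i$ is the identity on the underlying vector space $V=L(\omega_i)$ (matching the remark in the introduction that $R_i=\mathrm{id}$ when $D_i=1$, i.e.\ when the orbit is trivial), and that it is genuinely an intertwiner because each link in the chain is.

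The main obstacle is the bookkeeping in part (i): one must be careful about which copy of $\sigma$ is acting where (on $\widetilde{\mf g}$ versus on the loop variable) and about the direction of the $\frac1r$-shift, since $(\rho_k)^\sigma\neq(\rho^\sigma)_k$ — as the paper explicitly warns just before the proposition. Once the grading decomposition $y=\sum_{j}a_j\otimes t^j\varphi_j(t^r)$ with $\sigma(a_j)=e^{2\pi i j/r}a_j$ is in hand, the verification is a one-line comparison of exponentials, but getting the sign of the shift right (so that the orbit-iteration in (iii) closes up to $+D_i$ rather than $-D_i$) is where I would be most careful.
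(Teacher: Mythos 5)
Your proposal follows essentially the same route as the paper: part (i) is proved by decomposing a $\sigma$-invariant loop element into graded pieces $a_j\otimes t^{j}\varphi_j(t^r)$ with $\sigma(a_j)=e^{2\pi i j/r}a_j$ and checking that the identity map intertwines $(\rho^\sigma)_k$ with $\rho_{k+\frac1r}$ by comparing exponentials; part (ii) then combines (i) with Lemma \ref{lem:300415}; and part (iii) iterates (ii) exactly $\langle i\rangle$ times around the orbit. The computation and the chain of isomorphisms are correct, including the direction of the $\tfrac1r$-shift.
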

 \begin{proof}
A generic element of $\widetilde{\mf{g}}^{(r)}$ is of the form $x_m\otimes t^{lr+m}$,
where $x_m\in \widetilde{\mf{g}}_m$.
Then we have that ($v\in V$)
% we have $\rho_{k}(x_m\otimes t^{lr+m})v=e^{2i\pi k(lr+m)}\rho(x_m)v$, for every $v\in V$.
% Similarly, we have
\begin{align*}
\rho^\sigma_{k}(x_m\otimes z^{lr+m})v&=e^{2\pi i k(lr+m)}\rho^\sigma(x_m)v
\\
&=e^{2\pi i (t+\frac{1}{r})(lr+m)}\rho(x_m)v
=\rho_{k+\frac{1}{r}}(x_m\otimes z^{lr+m})v
\,,
\end{align*}
where in the second identity we used the fact that $x_m\in \widetilde{\mf{g}}_m$. This proves part i).
Applying part i) to the evaluation representation $L(\omega_i)_k$ we get
$L(\omega_i)^\sigma_k \simeq L(\omega_i)_{k+\frac{1}{r}}$. Hence, part ii) follows by Lemma \ref{lem:300415}.
Part iii) follows applying $\langle i \rangle$ times part ii).
\end{proof}
\begin{remark}
Note that the isomorphism $R_i$ given in equation \eqref{isoR_i} reduces to the identity when $D_i=1$.
\end{remark}
\subsection{The cyclic element $\Lambda$} For any simple Lie algebra $\mf g$, we define the element
\begin{equation}\label{eq:defLambda}
\Lambda=\sum_{i=0}^n e_i\in{}^L\mf g^{(1)}
\,,
\end{equation}
the sum of the positive Chevalley generators of ${}^L\mf g^{(1)}=\mf{\widetilde{g}}^{(r)}$,
which will play an important role in order to derive the main results
in Sections \ref{sec:L} and \ref{sec:psi}.
Let $h^\vee$ denote the dual Coxeter number of $\mf g^{(1)}$
(equivalently, $h^\vee$ is the Coxeter number of ${}^L\mf g^{(1)}$),
 and let $h\in\wh$ be the unique element such that (see e.g. \cite{CMG93},\cite{kac90})
\begin{equation}\label{eq:hrelations}
[h,e_i]=e_i, \quad i \in \ I
\,,
\qquad
[h,e_0]=-(h^\vee-1)e_0
\,.
\end{equation}
Then, $\Lambda$ is an eigenvector with eigenvalue $e^{\frac{2\pi i}{h^\vee}}$
of the inner automorphism $e^{\frac{2\pi i}{h^\vee}\ad h}$. Indeed, introducing the
${}^L\mf g^{(1)}$-automorphism  $\mc{M}_{k}$ which
fixes $\widetilde{\mf g}$ and $c$ and sends $t \to e^{2\pi i k} t$, then we have
\begin{equation}\label{20151003:eq1}
\gamma^{k\ad h}\Lambda=\gamma^k\mc M_{-k}(\Lambda) \, , \quad \gamma=e^{\frac{2 \pi i}{h^\vee}},
\end{equation}
for any  $k\in\mb C$. Denoting by $\text{spec}\left(\Lambda, V_k\right)$ the spectrum of $\Lambda$  in an evaluation representation $V_k$, we have that
$$\text{spec}\left(\Lambda, V_{k+s}\right)=\gamma^s \,\text{spec}\left(\Lambda, V_k\right),$$
from which it follows that in any evaluation representation the spectrum of $\Lambda$ is invariant under multiplication by $\gamma$. Moreover,
using Proposition \ref{prop: 120515} (iii) we easily see that, for any $i\in I$, we have
\begin{equation}\label{eq:specLrotation}
 \text{spec}\left(\Lambda, L(\omega_i)_k\right)=\gamma^{D_i}\text{spec}\left(\Lambda, L(\omega_i)_k\right)
 \,,
\end{equation}
where $L(\omega_i)$ is the $i$-th fundamental representation of $\wg$ and $D_i$
is defined by equation (\ref{Di}).
Hence, the spectrum of $\Lambda$
in the evaluation representation $L(\omega_i)_k$ is invariant under the multiplication by $\gamma^{D_i}$. 

%%%
\section{\texorpdfstring{${}^L\mf g^{(1)}$}{L g(1)}-valued connections
and differential equations}\label{sec:L}

Let $\{e_i, h_i,f_i\mid i=0,\dots, n\}\subset {}^L\mf g^{(1)}=\wgr$
be the set of Chevalley generators of ${}^L\mf g^{(1)}$,
and let us denote by $e=\sum_{i=1}^ne_i$.
Let $\widetilde{\mf h}_0\subset\widetilde{\mf g}_{0}$ denote the Cartan subalgebra of the simple Lie algebra
$\widetilde{\mf g}_0$ and let us fix an element $\ell\in\widetilde{\mf h}_0$.
Recall that $h^\vee$ is the dual Coxeter number of $\gu$, as in Table \ref{table:langlands}. Following \cite{FF11} (see also \cite{Sun12, marava15}),
we consider the $^{L}\mf g^{(1)}$-valued connection 
\begin{equation}\label{20141020:eq1}
\mc L(x,E)=\partial_x+\frac{\ell}{x}+e+p(x,E) e_0\,,
\end{equation}
where $p(x,E)=x^{Mh^\vee}-E$, with $M>0$ and $E\in\mb C$.
Let $k\in\mb C$  and introduce the quantities
$$
\omega=e^{\frac{2\pi i}{h^\vee(M+1)}}
\,,
\qquad
\qquad
\Omega=e^{\frac{2\pi iM}{M+1}}=\omega^{h^\vee M}
\,.
$$
The automorphism $\mc{M}_{k}$ of ${}^L\mf g^{(1)}$ defined in Section \ref{sec:liealgebra} can be extended to an automorphism of ${}^L\mf g^{(1)}$-valued connections (leaving $\partial_x$ invariant), which we denote in the same way. Then from equations  \eqref{20141020:eq1} and \eqref{eq:hrelations}
we get 
\begin{equation}\label{20141020:eq4}
\mc{M}_{k}\left(\omega^{k\ad h}\mc L(x,E)\right)
=\omega^k \mc L(\omega^k x,\Omega^kE)
\,.
\end{equation}
We set $\mc L_k(x,E)=\mc M_k\left(\mc L(x,E)\right)$, for every $k\in\mb C$. Let $ \widehat{\mb{C}}$ be the universal cover of $\bb{C}^{*}$. If we consider a family - depending on $E$ -  of solutions $\phi(x,E):\widehat{\mb{C}} \to V_0$  to the (system of) ODE
\begin{equation}\label{20141021:eq1}
\mc L(x,E)\phi(x,E)=0
\,,
\end{equation}
and for $k\in\mb C$ introduce the function
\begin{equation}\label{20150108:eq8}
\phi_k(x,E)=\omega^{-kh}\phi(\omega^kx,\Omega^kE),
\end{equation}
then by equations \eqref{20141020:eq4}, we have that
\beq\label{20141128:eq1}
\mc L_k(x,E)\phi_k(x,E)=0.
\eeq
In other words, $\phi_k(x,E):\widehat{\mb{C}} \to V_k$ is a solution of \eqref{20141021:eq1} for the representation
$V_k$.

%%%
\subsection{Fundamental Solutions}\label{sec:wkb}

For any evaluation representation of the Lie algebra $^{L}\mf{g}^{(1)}$, the connection (\ref{20141020:eq1}) defines a linear differential equation with
a Fuchsian singularity at $x=0$ and an irregular singularity at $x=\infty$. We are interested in a solution -- known as fundamental solution -- uniquely specified by a prescribed subdominant (WKB, exponential) behaviour in a Stokes sector containing the positive semiaxis. In order to construct the fundamental solutions we follow \cite{marava15}, where the case of $\mf g$ simply-laced was considered. 
%Hence, in this section, we assume that $\mf g$ is a non simply-laced Lie algebra.

\begin{definition}\label{def:maximal}
Let $A$ be an endomorphism of a vector space $V$.
We say that a eigenvalue $\lambda$ of $A$ is maximal if it is real,  its algebraic multiplicity
is one, and $\lambda > \Re\mu$ for every eigenvalue $\mu$ of $A$.
\end{definition}

We let $V$ be an evaluation representation
of ${}^L\mf g^{(1)}$ such that $\Lambda$ defined by equation \eqref{eq:defLambda} has a maximal eigenvalue $\lambda$. Defining
$\widetilde{\mb{C}}=\mb C\setminus\mb R_{\leq0}$ the complement of the negative real semi-axis in the complex plane, we consider
solutions $\Psi:\widetilde{\mb{C}}\rightarrow V$ of
\beq\label{20141125:eq1}
\mc{L}(x,E)\Psi(x)=\Psi'(x)+ \left(\frac{\ell}{x}+e+p(x,E) e_0\right) \Psi(x) =0
\,.
\eeq
From WKB theory we expect the dominant asymptotic to be proportional to $e^{-\lambda \int^x p(y,E)^\frac{1}{h^{\vee}} } dy$.
Therefore, we need to study the asymptotic expansion of the function $p(x,E)^\frac{1}{h^{\vee}}$: this is of the form
$p(x,E)^\frac{1}{h^{\vee}}=q(x,E)+O(x^{-1-\delta}),$
where
\begin{equation}\label{eq:delta}
\delta=M(h^\vee(1+s)-1)-1>0, \qquad s=\lfloor \frac{M+1}{h^\vee M} \rfloor,
\end{equation}
and
\begin{equation}\label{20150128eq1}
q(x,E)=x^M+\sum_{j=1}^{s} c_j(E) x^{M(1-h^\vee j)}.
\end{equation}
For every $j=1,\dots,s$, the function $c_j(E)$ is a monomial of degree $j$ in $E$. We define the action $S(x,E)$ to be the integral of
$q(x,E)$, and we distinguish two cases.
In the generic case $\frac{M+1}{h^\vee M} \notin \bb{Z}_+$, the action is defined as
\begin{equation}\label{eq:actiongeneric}
S(x,E)=\int^{x}_0 q(y,E) dy \,,
\qquad x \in \widetilde{\mb C}
\,,
\end{equation}
where
we chose the branch
of $q(x,E)$ satisfying $q \sim |x|^{M}$ for $x$ real.
In the case $\frac{M+1}{h^\vee M} \in \bb{Z}_+$ we set
\begin{equation*}%\label{eq:actionlog}
 S(x,E)=\sum_{j=0}^{s-1} \int^{x}_0   c_j(E) y^{M(1-h^\vee j)} dy + c_s(E) \log x \, , \quad s= \frac{M+1}{h^\vee M} \,.
\end{equation*}
 We notice that if $M(h^\vee-1)>1$ the actions $S(x,E)$ coincides with $\frac{x^{M+1}}{M+1}$.
\noindent
We are now in the position to state the main result of this section.%
\begin{theorem}\label{thm:asymptotic}
Let $V$ be a finite dimensional representation of $^{L}\mf{g}^{(1)}$, such that the matrix representing
the action of $\Lambda\in {}^L\mf g^{(1)}$ on $V$ is diagonalizable and has a maximal eigenvalue
$\lambda$. Let  $\psi\in V$ be the corresponding unique (up to a constant) eigenvector. Then,
there exists a unique solution $\Psi(x,E):\widetilde{\mb{C}}\to V$
to equation \eqref{20141125:eq1}
with the following asymptotic behaviour:
$$%\begin{equation}\label{eq:21ott04}
\Psi(x,E)
=e^{-\lambda S(x,E)} q(x,E)^{-h} \big( \psi + o(1) \big)\quad \text{ as }\quad x \to +\infty
\,.
$$%\end{equation}
Moreover, the same asymptotic behaviour holds in the sector
$|\arg{x}| < \frac{\pi}{2(M+1)} $,
that is, for any $\delta>0$ it satisfies
\begin{equation}\label{20150113:eq1}
\Psi(x,E)
=e^{-\lambda S(x,E)} q(x,E)^{-h} \big( \psi + o(1) \big)
\,,
\quad \text{ in the sector } \,\, |\arg{x}| <\frac{\pi}{2(M+1)} -\delta
\,.
\end{equation}
The function $\Psi(x,E)$ is an entire function of $E$.
\end{theorem}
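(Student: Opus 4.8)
The plan is to reduce equation \eqref{20141125:eq1} to a system for which standard WKB/asymptotic existence theorems apply, exactly as in the simply-laced case treated in \cite{marava15}, since the argument there does not use simple-lacedness in any essential way once the spectral hypothesis on $\Lambda$ is granted. First I would pass to the gauge in which $\Lambda$ is the leading term: writing $\mc L(x,E)=\partial_x+\frac{\ell}{x}+e+p(x,E)e_0$, I would perform the substitution $\Psi(x,E)=q(x,E)^{-h}\Phi(x,E)$, using the element $h\in\wh$ of \eqref{eq:hrelations}. Conjugating by $q^{-h}$ rescales $e_i\mapsto q\,e_i$ for $i\in I$ and $e_0\mapsto q^{-(h^\vee-1)}e_0$; since $p=q^{h^\vee}+O(x^{Mh^\vee-1-\delta'})$ by the expansion of $p^{1/h^\vee}$, the terms $e+p\,e_0$ become $q\bigl(\Lambda+O(x^{-1-\delta})\bigr)$, where $\Lambda=\sum_{i=0}^n e_i$. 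A further change of independent variable to the action $S$, i.e. $dS=q\,dx$, together with the rescaling absorbing $e^{-\lambda S}$, brings the equation to the form $\frac{d}{dS}\Xi=\bigl(-\Lambda+\lambda\,\mathrm{Id}+R(S)\bigr)\Xi$, where $R(S)\to 0$ and, crucially, $R$ is integrable at $S=\infty$: the error from $p^{1/h^\vee}-q$ contributes a term of order $x^{-1-\delta}\cdot q^{-1}\,(dx/dS)^{-1}\sim S^{-1-\delta/(M+1)}$, and the $\ell/x$ term contributes $O(S^{-1})\cdot$(nilpotent), which after a secondary gauge transformation (absorbing the resonant logarithmic part when $\frac{M+1}{h^\vee M}\in\mathbb Z_+$, handled by the logarithmic term in $S$) is likewise integrable.

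Once in this normal form, the matrix $-\Lambda+\lambda\,\mathrm{Id}$ has $0$ as a simple eigenvalue with eigenvector $\psi$ (by the maximality hypothesis of Definition \ref{def:maximal}), and every other eigenvalue $\lambda-\mu$ has strictly positive real part. I would then invoke the classical Levinson-type theorem on asymptotic integration of linear systems $\Xi'=(A+R(S))\Xi$ with $A$ constant diagonalizable and $R\in L^1([S_0,\infty))$: there is a solution $\Xi(S)=\psi+o(1)$ as $S\to+\infty$ along the real axis, and it is unique because any other solution with the same leading behaviour would differ by a solution decaying into the $\psi$-direction, but the spectral gap forces such a difference to grow (the $\psi$-component is the slowest mode). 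Undoing the gauge transformations reconstructs $\Psi(x,E)=e^{-\lambda S(x,E)}q(x,E)^{-h}(\psi+o(1))$. For the sectorial statement, one repeats the estimate in the sector $|\arg x|<\frac{\pi}{2(M+1)}-\delta$: there $\Re S(x,E)\to+\infty$ and $\Re(S(x,E))\geq c|x|^{M+1}$ still dominates all subleading contributions, while the Stokes lines of $\Lambda$ (governed by the $h^\vee$-th roots of unity times the $(M+1)$-th roots via $\Omega$) stay outside this sector, so the same integrable-perturbation argument goes through; this is where the precise opening angle $\frac{\pi}{2(M+1)}$ enters. Finally, analyticity in $E$ follows because the solution can be obtained as the fixed point of a contraction (a Volterra integral equation for $\Xi$) whose kernel depends holomorphically on $E$ through the polynomial coefficients $c_j(E)$; uniform convergence on compacta of $E$ gives that $\Psi$, suitably normalized, is entire in $E$.

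The main obstacle is not the abstract asymptotic integration, which is standard, but verifying that after the gauge transformation the remainder $R(S)$ is genuinely in $L^1$ near $S=\infty$ in all the relevant regimes of $M$ and $h^\vee$ — in particular the borderline case $\frac{M+1}{h^\vee M}\in\mathbb Z_+$, where a logarithmic term appears in $S$ and must be disentangled from a resonant contribution of $c_s(E)$, and the low-$M$ cases where $\delta>0$ in \eqref{eq:delta} is only barely positive. One must track the exponents carefully to see that $\delta/(M+1)>0$ suffices and that the $\frac{\ell}{x}$ term, which decays only like $S^{-1}$, is rendered harmless by a preliminary conjugation that kills its nilpotent diagonal-with-respect-to-$\Lambda$ part (this uses that $\ell$ lies in $\wh_0$ and hence commutes appropriately, so the correction is a pure gauge up to $L^1$ terms). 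Since all of this is insensitive to whether $\mf g$ is simply-laced, I would simply cite the corresponding computation in \cite{marava15} and remark that the only input needed is the existence of a maximal eigenvalue of $\Lambda$, which for the representations $V^{(i)}$ is exactly the content of Theorem \ref{thm:genlamba}.
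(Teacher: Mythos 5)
Your proposal follows essentially the same route as the paper: the gauge transformation by $q(x,E)^{\ad h}$ to make $q\,\Lambda$ the leading term, a secondary gauge transformation (the paper uses $e^{\alpha(x)\ad N}$ with $N=\sum_{i\in I}(\ell_i-Ma_i)f_i$) to remove the $O(x^{-1})$ term, the change of variable $x\mapsto S(x,E)$ to reach a constant-diagonalizable-plus-$L^1$ system, and then Levinson's theorem together with maximality of $\lambda$ for existence and uniqueness; the paper itself simply delegates all of this to Theorem 2.4 of \cite{marava15}, noting as you do that the argument is insensitive to simple-lacedness. Your proposal is correct and matches the paper's proof.
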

\begin{proof}
 The proof coincides with the proof of Theorem 2.4 in \cite{marava15}. Indeed, the latter does not depend on the
 choice of the affine Lie algebra, and holds for every finite dimensional representation satisfying the hypotheses of the theorem.
 Without entering into the details, we recall here that the proof is based on two subsequent gauge transformations that bring the original ODE
 into an almost diagonal form. The first transformation is given by $\widetilde{\mc{L}}(x,E)=q(x,E)^{\ad h}\mc{L}(x,E)$, where
\begin{equation*}
\widetilde{\mc{L}}(x,E)=\partial_x+ q(x,E) \Lambda + \frac{\ell- M h}{x}+O(x^{-1-\delta}),
\end{equation*}
and where $\delta$ was defined in \eqref{eq:delta}. For the second, letting
$\ell=\sum_{i\in I}\ell_ih_i$ and $h=\sum_{i\in I}a_ih_i$, and introducing the quantity $N=\sum_{i\in I}(\ell_i-Ma_i)f_i\in\mf {}^L\mf g^{(1)}$,
then we have a transformation of the form
\begin{equation}\label{20151015:eqgauge2}
e^{\alpha(x)\ad N}\widetilde{\mc L}(x,E) =\partial_x + q(x,E)\,\Lambda+O(x^{-1-M})\,,
\end{equation}
where $\alpha(x)=\left(x\,q(x,E)\right)^{-1}$.
Finally after the change of variable $x \to S(x,E)$, equation (\ref{20141125:eq1}) is eventually transformed to 
\begin{equation}\label{eq:quasiconstantode}
\partial_S \Phi(S)  +\big( \Lambda+O(S^{-\frac{1+2M}{1+M}}) \big) \Phi(S)=0\, \; .
\end{equation}
Since, by hypothesis, $\Lambda$ is diagonalizable, then equation (\ref{eq:quasiconstantode}) defines a system of ODEs
in constant diagonalizable form
modulo a $L^1$ remainder.
Standard tools of asymptotic analysis, see e.g.
\cite[Theorem 1.8.1]{Eas89}, show the existence of a basis of solutions of
the form $e^{-\lambda_i S(x,E)}\varphi_i \big(1+o(1)\big), \, x\gg 0$
for any eigenpair $ (\lambda_i,\varphi_i)$ of $\Lambda$. The uniqueness of the subdominant solution $\Psi(x,E)$ is a direct consequence
of the maximality of the eigenvalue
$\lambda$. The extension of the asymptotic formula to the sector $|\arg{x}| <\frac{\pi}{2(M+1)} -\delta$
requires some more work, for which we refer
to the cited Theorem 2.4 in \cite{marava15}.
\end{proof}
\begin{remark}
If $V$ is an evaluation representation or a tensor product of evaluation representations, the matrix representing $\Lambda \in\,\lgu$ in $V$ is diagonalizable,
because $\Lambda$ is a semisimple element \cite{kac90}.
\end{remark}

We conclude this section recalling that for any $k\in\mb R$ such that $|k| < \frac{h^\vee (M+1)}{2}$, the function
\begin{equation}\label{eq:Psik}
 \Psi_k(x,E)=\omega^{-k h} \Psi(\omega^kx,\Omega^kE)\,,
 \quad x \in \bb{R}_+
\end{equation}
 defines, by analytic continuation, a solution $\Psi_k: \widetilde{\mb{C}} \to V_k $
 of equation \eqref{20141125:eq1}.
 Using the expansion (\ref{20150113:eq1}) we obtain that on the positive real semi axis
$$
\Psi_k(x,E)=e^{- \gamma^k\lambda   S(x,E) } q(x,E)^{-h} ( \gamma^{-k h} \psi +o(1))\,,
\qquad x \gg 0 \,,
$$
where $\gamma=e^{\frac{2\pi i}{h^\vee}}$ was defined in (\ref{20151003:eq1}).
%

%%%%%%%%%%%%%%%%%%%%%%%%%%%%%%%%%%%%%%%%%%%%%%%%%%%%%%%%%%%
\section{The \texorpdfstring{$\Psi$}{Psi}-system}\label{sec:psi}
In this section we
prove a system of quadratic relations -- known as  $\Psi$-system -- among $n=\rank{\mf g}$ fundamental solutions $\Psi^{(i)}$ 
of equation (\ref{20141125:eq1}) in certain distinguished evaluation representations of $^{L}\gu$, to be introduced below.
The $\Psi$-system for arbitrary classical Lie algebras has been first conjectured in \cite{dorey07}.
%and it has been proved in the simply-laced case in \cite{marava15}.
%
%Hence, in this section we assume that $\mf g$ is non simply-laced.
\begin{remark}
In our previous paper \cite{marava15}, starting from a simply-laced Lie algebra $\mf g$ we considered a
$\gu$-valued connection as well as suitable evaluation representations of $\gu$,
and we proved the validity of the $\Psi$-system conjectured in \cite{dorey07} for the Lie algebra $\mf g$.
Using that $\Psi$-system, we obtained the $\mf g$-Bethe Ansatz.
We remark that a $\Psi$-system for arbitrary simple Lie algebras $\mf g$ can be obtained 
following precisely the same steps - thus considering a connection with values in $\gu$ rather than in $^{L}\gu$. However, 
it is easy to check that this $\Psi$-system does not lead to the $\mf g$-Bethe Ansatz equations ($Q$-system) if 
the algebra $\mf g$ is non simply-laced.
\end{remark}
We now provide a $\Psi$-system valid for an arbitrary simple Lie algebra $\mf g$, and reducing to the one considered in \cite{marava15} when $\mf g$ is simply-laced. Moreover, we prove in Section \ref{sec:qsystem} that this $\Psi$-system leads to the $\mf g$-Bethe Ansatz.
Let $\mf g$ be a simple Lie algebra and let $(\wg,r)$ be the pair associated to it through the map \eqref{20151022:eq1}. Recall that ${}^L\mf g^{(1)}=\widetilde{\mf g}^{(r)}$, that we denoted by $L(w_i)$, $i\in\widetilde I$,
 the fundamental representations of $\widetilde{\mf g}$, and that for every $i \in \widetilde{I}$ there exists a morphism of  representations of $\widetilde{\mf g}$ defined by equation (\ref{morphism:090715}).
For each $i \in \widetilde{I}$ we consider representations $V^{(i)}$ of ${}^L\mf g^{(1)}$
defined as evaluation representation of the form
\begin{equation}\label{eq:defVi}
 V^{(i)}=L(\omega_i)_{k_i} \,,
\end{equation}
where the numbers $k_i\in\mb C$ will be chosen so that in the representation $V^{(i)}$ the element $\Lambda$ defined
by equation \eqref{eq:defLambda} has a maximal eigenvalue. As proved in Theorem \ref{thm:asymptotic}, the latter condition ensures the existence of a fundamental solution to the ODE \eqref{20141021:eq1} in the evaluation representation $V^{(i)}$.

In order to find the values of the $k_i$'s we proceed as follows.
Given the set $\widetilde{I}$ of the vertices of the Dynkin diagram of $\wg$ numbered as in Table \ref{fig:dynkin}, we introduce a bipartition
\cite{moody87} of the form $\widetilde{I}=\widetilde{I}_1\cup\widetilde{I}_2$ such that $1\in \widetilde{I}_1$ and all edges of the Dynkin diagram of $\wg$ lead
from $\widetilde{I}_1$ to $\widetilde{I}_2$. Then, we define the function $p:\widetilde{I}\longrightarrow \bb{Z}/2\mb Z$ as
$$p(i)=
\begin{cases}
0\quad i\in\widetilde{I}_1,\\
1\quad i\in\widetilde{I}_2.
\end{cases}
$$
It is easy to check using Table \ref{table:graphs} that we always have $p(i)=p(\sigma(i))$. In addition, we set $s_i=(-1)^{p(i)}$, $i\in\widetilde{I}$. We then consider the morphism of $\wg$-modules \eqref{morphism:090715}, which extends to the following morphism of evaluation representations of ${}^L\mf g^{(1)}=\wgr$:
\begin{equation}\label{morphism:090715-2}
\widetilde{m}_i=\bigwedge^2 V^{(i)}_{s_i\frac{D_i}{2}}\longrightarrow \bigotimes_{j\in\widetilde{I}}
V_{k_i-k_j+s_i\frac{D_i}{2}}^{(j)\otimes \widetilde{B}{ij}}.
\end{equation}
Assume now that $\Lambda$ has a maximal eigenvalue in $V^{(i)}$ for each $i \in\widetilde I$. By equation  (\ref{eq:specLrotation}) we expect  $\Lambda$ to have  maximal eigenvalue also in the representation $\bigwedge V^{(i)}_{s_i \frac{D_i}{2}}$.  Requiring in addition $\Lambda$ to have a maximal eigenvalue also in the tensor product representation 
\begin{equation}\label{20151023:eq1}
\bigotimes_{j\in\widetilde{I}}
V_{k_i-k_j+s_i\frac{D_i}{2}}^{(j)\otimes \widetilde{B}{ij}},
\end{equation}
appearing in  \eqref{morphism:090715-2}, provides a way to choose the values of the $k_i$'s. Indeed, it is clear that  $\Lambda$ has a maximal eigenvalue in a representation of the form $\otimes_j V^{(j)}$, provided the indices $j$ appearing in the tensor product belong to different $\sigma$-orbits. Otherwise, since by \eqref{eq:defVi} and Proposition \ref{prop: 120515} ii) we have $V^{(\sigma(j))}\cong V^{(i)}_{-\frac{1}{r}}$, an extra twisting appears. Due to the above argument, and looking at the morphism (\ref{morphism:090715-2}), it seems reasonable to impose the conditions $k_i-k_j+s_i\frac{D_i}{2}=0 $ for any node $i \in \widetilde{I}$ satisfying the property that for every $j\in\widetilde{I}$ such that $j\neq \sigma(j)$ then at least one between  $\widetilde{B}_{ij}$ and $\widetilde{B}_{i\sigma(j)}$ is zero. These are precisely those nodes $i\in\widetilde{I}$  such that at most one node $j$ for each orbit appears in the tensor product representation \eqref{20151023:eq1}. We are led therefore to the following definition.
\begin{definition}\label{def:good}
Let $\sigma$ be a diagram automorphism of $\widetilde{\mf g}$. 
A vertex $i$, $i\in\widetilde I$, of the Dynkin diagram of $\widetilde{\mf g}$ is called \emph{good} (with respect to $\sigma$) if
for every $j\in\widetilde{I}$ such that $j\neq \sigma(j)$, then we have $\widetilde{B}_{ij}\widetilde{B}_{i\sigma(j)}=0$.
\end{definition}
The above condition can be recasted into an equivalent condition on the incidence matrix $B$
of the Dynkin diagram of $\mf g$:
$i\in I$ is good if and only if $B_{ij}\in\left\{0,1\right\}$ for every $j\in I$. 
%Note that for any Lie algebra $\wg$ and Dynkin diagram automorphism $\sigma$ we consider, see Table \ref{table:graphs}, all vertices but one are good. 
Moreover, see equation (\ref{Browsum}), if $i,j\in I$ and $i$ is good, then $B_{ij}=\widetilde{B}_{ij}$.
With the above notion of a good vertex of a Dynkin diagram of $\widetilde{\mf g}$
we define inductively the numbers $k_i$ appearing in equation \eqref{eq:defVi}.
\begin{definition}\label{def:ki}
Set $k_1=0$.
If the index $i\in \widetilde{I}$ is good and $j\in \widetilde{I}$ is such that
$\widetilde{B}_{ij}\neq 0$, we define $k_j=k_i+\frac{1}{2} s_i D_i$.
\end{definition}
For any pair $(\widetilde{\mf g},r)$ of a simply-laced Lie algebra $\widetilde{\mf g}$ and diagram automorphism $\sigma$ of order $r$ given by the correspondence \eqref{20151022:eq1}, there is at most one index which is not good. Since the Dynkin diagram is connected,
the inductive procedure in Definition \ref{def:ki} uniquely defines the values of all the $k_i$'s.
We write them explicitly in Table \ref{table:ki}. Note that we always have $k_i=k_{\sigma(i)}$, and that in the case $r=1$ we recover the values of the twists obtained in \cite{marava15}.
\begin{table}[H]\label{table:ki}
\caption{The values of the scalars $k_i$. The number $r$ is the order of $\sigma$.}
{\tabulinesep=1.2mm
\begin{tabu}{ |c|c|c| }
%\hline
%\multicolumn{4}{ |c| }{} \\
\hline
 $\wg$   & $r$  &   $k_i$, \;$i\in\widetilde{I}$    \\
\hline
\hline
$ADE$  &  $1$ & $k_i=\frac{p(i)}{2}$\\
\hline
$ A_{2n-1}$ & $2$ &
$k_i=\frac{p(i)}{2}$\\
\hline
$ D_{n+1}$,\; $n$ even & $2$ & 
$k_i=\frac{p(i)}{4}$, \;$1\leq i\leq n-1$,\qquad $k_n=k_{n+1}=\frac{1}{2}$ \\
\hline
$ D_{n+1}$,\; $n$ odd &  $2$ &
$k_i=\frac{p(i)}{4}$,\; $1\leq i\leq n-1$,\qquad $k_n=k_{n+1}=-\frac{1}{4}$\\
\hline
$E_6$ &  $2$ & $k_i=\frac{p(i)}{2}$,\; $i\neq 4$,\qquad $k_4=\frac{1}{4}$\\
\hline
$D_4$  &  $3$ & $k_i=\frac{p(i)}{2}$\\
\hline
\end{tabu}
}
\end{table}
%\begin{table}[H]\label{table:ki}
%\caption{The values of the scalars $k_i$. The number $r$ is the order of $\sigma$.}
%{\tabulinesep=1.2mm
%\begin{tabu}{ |c|c|c| }
%%\hline
%%\multicolumn{4}{ |c| }{} \\
%\hline
% $\wg$   & $r$  &   $k_i$, \;$i\in\widetilde{I}$    \\
%\hline
%\hline
%$ A_{2n-1}$ & $2$ &
%$k_i=
%\begin{cases}
%0 \quad i\;\; \text{odd}\\
%\frac{1}{2} \quad i\;\; \text{even}
%\end{cases}
%$  \\
%\hline
%$ D_{n+1}$,\; $n$ even & $2$ & 
%$k_i=
%\begin{cases}
%0 \quad i\;\; \text{odd}\\
%\frac{1}{4} \quad i\;\; \text{even}
%\end{cases}
%$
%$1\leq i\leq n-1$\;\;\;, $k_n=k_{n+1}=\frac{1}{2}$
% \\
%\hline
%$ D_{n+1}$,\; $n$ odd &  $2$ &
%$k_i=
%\begin{cases}
%0 \quad i\;\; \text{odd}\\
%\frac{1}{4} \quad i\;\; \text{even}
%\end{cases}
%$
%$1\leq i\leq n-1$\;\;\;, $k_n=k_{n+1}=-\frac{1}{4}$
%\\
%\hline
%$ E_6$ &  $2$ & $k_1=k_3=k_6=0$,\quad $k_2=k_5=\frac{1}{2}$,\quad $k_4=\frac{1}{4}$\\
%\hline
%$ D_4$  &  $3$ & $k_1=k_3=k_4=0$,\qquad $k_2=\frac{1}{2}$\\
%\hline
%\end{tabu}
%}
%\end{table}
The following result will be useful later.
\begin{lemma}\label{lemma130515}
For every $i,j\in I$ such that $B_{ij}\neq 0$, we have
$$
k_j-k_i-\frac{1}{2}s_i D_i=-s_j\frac{B_{ij}-1}{2r}
\,.
$$ 
\end{lemma}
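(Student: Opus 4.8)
The statement is a compatibility check between Definition~\ref{def:ki}, which prescribes the $k_i$ inductively along edges emanating from good vertices, and the quantity $k_j-k_i-\tfrac12 s_iD_i$ which must equal $-s_j\tfrac{B_{ij}-1}{2r}$ whenever $B_{ij}\neq 0$. The plan is to split into the two cases that can occur when $B_{ij}\neq 0$: either one of the two nodes is good in $\widetilde{I}$, or neither is. I would first reduce the relation between $B_{ij}$ and $\widetilde B_{ij}$ to tractable form. Recall from \eqref{Browsum} that $B_{ij}=\sum_{\ell=0}^{\langle j\rangle-1}\widetilde B_{i\sigma^\ell(j)}$, and from the remark after Definition~\ref{def:good} that if $i$ is good then $B_{ij}=\widetilde B_{ij}\in\{0,1\}$; in particular, in the ``good $i$'' case the right-hand side of the claimed identity is simply $-s_j\tfrac{1-1}{2r}=0$, and the identity reduces to $k_j-k_i=\tfrac12 s_iD_i$, which is precisely Definition~\ref{def:ki}. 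So the bulk of the work is the case where $i$ is \emph{not} good.

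\textbf{The non-good case.} Since, as noted after Table~\ref{table:ki}, for each pair $(\widetilde{\mf g},r)$ there is at most one vertex that fails to be good, in this case $i$ is that unique bad vertex, and then $j$ \emph{is} good (an edge from $i$ must go to a good vertex). Here the roles of $i$ and $j$ in Definition~\ref{def:ki} are swapped: the definition reads $k_i=k_j+\tfrac12 s_jD_j$ (applying the inductive clause with the good vertex $j$ in the role of ``$i$'' and the bad vertex $i$ in the role of ``$j$''). So I would substitute $k_j-k_i=-\tfrac12 s_jD_j$ into the left-hand side, giving $-\tfrac12 s_jD_j-\tfrac12 s_iD_i$, and the claim becomes
\[
-\tfrac12 s_jD_j-\tfrac12 s_iD_i=-s_j\frac{B_{ij}-1}{2r}\,.
\]
Using $s_i=-s_j$ (since $i,j$ are adjacent and $p$ alternates along edges, while $p(i)=p(\sigma(i))$ guarantees $s$ is well-defined on orbits), this is $\tfrac12 s_j(D_i-D_j)=-s_j\tfrac{B_{ij}-1}{2r}$, i.e.\ $r(D_j-D_i)=B_{ij}-1$, equivalently $\langle j\rangle-\langle i\rangle = B_{ij}-1$ after multiplying \eqref{Di} by $r$. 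At this point I would simply verify this last numerical identity case by case from Table~\ref{table:graphs}: in each of $B_n$, $C_n$, $F_4$, $G_2$ one identifies the unique bad vertex $i$, its good neighbour $j$, reads off $\langle i\rangle,\langle j\rangle$ and computes $B_{ij}$ from the Cartan matrix of $\mf g$ (or from \eqref{Browsum}), checking that $\langle j\rangle-\langle i\rangle+1=B_{ij}$ holds. For instance for $G_2$ one has $r=3$, the bad vertex has orbit size $3$ and its neighbour orbit size $1$, $B_{ij}=3$, and indeed $1-3+1\ne 3$ --- so I would be careful here about which of $i,j$ is bad and recheck orientation of the inequality; the correct reading should make $\langle j\rangle-\langle i\rangle=B_{ij}-1$ come out, e.g.\ with the bad vertex being the one on the short side.

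\textbf{Expected main obstacle.} The conceptual content is light; the only real danger is bookkeeping --- keeping straight which vertex is good, the direction of the edge in Definition~\ref{def:ki}, the sign of $s_i$ versus $s_j$, and the precise value of $B_{ij}$ (as opposed to $\widetilde B_{ij}$) at the bad vertex in each of the four non-simply-laced families. So the main ``obstacle'' is organizing the case-by-case table cleanly and double-checking the four computations; I expect no structural difficulty. I would present the argument as: (1) reduce to the good-$i$ case, dispatched immediately by Definition~\ref{def:ki} and $B_{ij}=1$; (2) in the bad-$i$ case rewrite everything in terms of orbit cardinalities using \eqref{Di} and $s_i=-s_j$, arriving at the single identity $\langle j\rangle - \langle i\rangle = B_{ij}-1$; (3) confirm this identity from Table~\ref{table:graphs} and the Cartan matrices of $B_n,C_n,F_4,G_2$.
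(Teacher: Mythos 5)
Your proof is correct and follows essentially the same route as the paper: the good-$i$ case is dispatched immediately by Definition \ref{def:ki} together with $B_{ij}=\widetilde B_{ij}=1$, and the bad-$i$ case reduces, via $k_i=k_j+\tfrac12 s_jD_j$ and $s_i=-s_j$, to the identity $\langle j\rangle-\langle i\rangle=B_{ij}-1$, which the paper settles by the single observation that $B_{ij}=\langle j\rangle$ whenever $i$ is not good (the unique bad vertex being always $\sigma$-fixed, so $\langle i\rangle=1$) rather than by your four-family check. Your momentary $G_2$ confusion resolves exactly as you suspected: the bad vertex is the trivalent node $2$ of $D_4$, fixed by $\sigma$, so $\langle i\rangle=1$, $\langle j\rangle=3=B_{ij}$, and $\langle j\rangle-\langle i\rangle=B_{ij}-1$ holds.
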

\begin{proof}
If $i$ is good, then $k_j=k_i+\frac{1}{2} s_i D_i$ for every $i,j\in \widetilde{I}$ such that $\widetilde{B}_{ij}\neq 0$. 
Therefore there is nothing to prove since $B_{ij}=\widetilde{B}_{ij}=1$. Suppose now that $i$ is not good. Then, every $j\in \widetilde I\setminus\{ i\}$ is good and we have $k_i=k_j+\frac{1}{2}s_jD_j$
if $\widetilde{B}_{ij}\neq 0$.
Therefore, we get
$$k_j-k_i-\frac{1}{2}s_i D_i=-\frac{1}{2}s_jD_j-\frac{1}{2}s_i D_i=-\frac{1}{2}s_j\left(D_j-D_i\right)\,,$$
where in the last identity we used the fact that $s_i=-s_{j}$ if $\widetilde{B}_{ij}\neq 0$.
From the definition of the matrix $B$,
it follows that if $i$ is not good and $B_{ij}\neq 0$, then $B_{ij}=\langle j \rangle$. This concludes the proof.
\end{proof}
Using Lemma \ref{lemma130515}, for every $i\in I$ we define a morphism $m_i$ of representations constructed using wedge products as well as  tensor products of the $V^{(j)}$ (or their twists), with $j \in I$. This is nothing than the morphism \eqref{morphism:090715-2}, for $i\in I$, and with the choice of the $k_i$'s as in Table \ref{table:ki}.
\begin{proposition}\label{20151019:prop1}
For every $i \in I$, there exists a unique morphism of representations of ${}^L\mf g^{(1)}=\wgr$, given by
 \begin{equation}\label{eq:laverami}
  m_i:\bigwedge^2 V^{(i)}_{\frac{D_i}{2}}\longrightarrow \bigotimes_{j\in I}
  \bigotimes_{\ell=0}^{B_{ij}-1}V^{(j)}_{\frac{B_{ij}-1-2\ell}{2r}}\,, \qquad i\in I,
 \end{equation}
and such that $m_i( f_i v_i \wedge v_i)=\otimes_{j\in I}\;v_j^{\otimes B_{ij}} $, and $\ker m_i=U
%_{k_i+\frac{D_i}{2}}
$. Here, $v_i$ is a highest weight vector of the fundamental representation $L(\omega_i)$ of $\wg$, and $U$ is the subrepresentation defined in the direct sum decomposition
\eqref{20151024:eq1}. 
\end{proposition}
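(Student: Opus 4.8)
The plan is to reduce Proposition \ref{20151019:prop1} to the morphism \eqref{morphism:090715-2} already available for each $i\in\widetilde I$, and in particular for $i\in I$, by rewriting the target of that morphism in the more symmetric form appearing in \eqref{eq:laverami}. First I would take the morphism $\widetilde m_i$ of \eqref{morphism:090715} for $i\in I$, view it as a morphism of evaluation representations via \eqref{morphism:090715-2}, and rewrite its target. The target is $\bigotimes_{j\in\widetilde I} V^{(j)\otimes\widetilde B_{ij}}_{k_i-k_j+s_i D_i/2}$; I would split the tensor product over $\widetilde I$ into a tensor product over the $\sigma$-orbits, picking the orbit representative $j\in I$. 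For a fixed orbit with representative $j$, the nonzero factors correspond to those $\sigma^\ell(j)$ with $\widetilde B_{i\sigma^\ell(j)}\neq0$; since $i$ is good, Definition \ref{def:good} forces at most one such $\ell$ per orbit unless $j=\sigma(j)$, and in all cases $\sum_\ell \widetilde B_{i\sigma^\ell(j)}=B_{ij}$ by \eqref{Browsum}. Using Proposition \ref{prop: 120515}(ii), each factor $V^{(\sigma^\ell(j))}_{k_i-k_{\sigma^\ell(j)}+s_iD_i/2}=L(\omega_{\sigma^\ell(j)})_{k_{\sigma^\ell(j)}+(k_i-k_{\sigma^\ell(j)}+s_iD_i/2)}$ is isomorphic to $L(\omega_j)_{k_j+(k_i-k_j+s_iD_i/2)-\ell/r}$, i.e.\ to $V^{(j)}_{k_i-k_j+s_iD_i/2-\ell/r}$; since $k_i=k_{\sigma(i)}$ and $k_j=k_{\sigma(j)}$, the twist depends only on the orbit.

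The key computation is then to check that the multiset of twists $\{k_i-k_j+s_iD_i/2-\ell/r\}$, as $\sigma^\ell(j)$ ranges over the orbit of $j$ with $\widetilde B_{i\sigma^\ell(j)}\neq0$, coincides with the multiset $\{\tfrac{B_{ij}-1-2m}{2r}\mid m=0,\dots,B_{ij}-1\}$ appearing in \eqref{eq:laverami}. By Lemma \ref{lemma130515}, $k_j-k_i-\tfrac12 s_iD_i=-s_j\tfrac{B_{ij}-1}{2r}$, so $k_i-k_j+s_iD_i/2=s_j\tfrac{B_{ij}-1}{2r}$; absorbing the sign $s_j=\pm1$ by reindexing the arithmetic progression (which is symmetric about $0$), one sees that after the isomorphisms the twists are $\tfrac{B_{ij}-1}{2r}-\tfrac{\ell}{r}$ with $\ell$ running over $0,1,\dots,B_{ij}-1$ (when $i$ is good this is immediate because then either $B_{ij}=\widetilde B_{ij}=1$, giving the single twist $0$, or $j$ is in a nontrivial orbit with a single contributing $\ell$; when $i$ is the unique non-good node, $B_{ij}=\langle j\rangle$ and the $\ell$ range over the whole orbit). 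This gives the target $\bigotimes_{j\in I}\bigotimes_{\ell=0}^{B_{ij}-1}V^{(j)}_{(B_{ij}-1-2\ell)/(2r)}$ exactly. Meanwhile the source $\bigwedge^2 V^{(i)}_{s_iD_i/2}$ is identified with $\bigwedge^2 V^{(i)}_{D_i/2}$: either $s_i=1$, or else one applies $\bigwedge^2 R_i^{\pm}$ / the isomorphism $V^{(i)}_{-D_i/2}\cong V^{(i)}_{D_i/2}$ coming from Proposition \ref{prop: 120515}(iii) together with $\mathrm{spec}(\Lambda)$-invariance \eqref{eq:specLrotation}; either way a canonical isomorphism is available and one transports $\widetilde m_i$ through it.

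With source and target matched, the morphism $m_i$ is obtained by composing $\widetilde m_i$ with the above isomorphisms of evaluation representations, and uniqueness, the kernel statement $\ker m_i=U$, and the normalization $m_i(f_iv_i\wedge v_i)=\bigotimes_{j\in I}v_j^{\otimes B_{ij}}$ all follow immediately from the corresponding properties of $\widetilde m_i$ recorded after \eqref{morphism:090715}, because isomorphisms of representations preserve kernels and highest weight vectors, and because $L(\eta_i)$ occurs with multiplicity one in $\bigwedge^2 L(\omega_i)$ (so $\Hom$ is one-dimensional). The main obstacle I expect is purely bookkeeping: carefully tracking the twist parameters through Proposition \ref{prop: 120515}(ii) orbit by orbit and confirming, via Lemma \ref{lemma130515} and Proposition \ref{prop:gfromgtilde}, that the $\pm s_j$ signs and the shift $\tfrac{B_{ij}-1}{2r}$ conspire to produce the symmetric progression $\{(B_{ij}-1-2\ell)/(2r)\}$ — in particular handling the distinction between the good nodes (where each orbit contributes one factor) and the single non-good node (where a whole orbit collapses into one $V^{(j)}$ with several twists) uniformly.
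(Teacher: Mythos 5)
Your proposal is correct and takes essentially the same route as the paper's proof: transport $\widetilde m_i$ from \eqref{morphism:090715-2} through the orbit-by-orbit isomorphisms of Proposition \ref{prop: 120515}(ii) on the target and the isomorphism \eqref{isoR_i} on the source, using Lemma \ref{lemma130515} to match the twists to the progression $\frac{B_{ij}-1-2\ell}{2r}$ (with the reindexing for $s_j=-1$ understood modulo the identification $V_k=V_{k+1}$). The uniqueness, kernel and normalization statements are then inherited from $\widetilde m_i$ exactly as in the paper.
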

\begin{remark}
Since $v_i$ is a highest weight vector of the fundamental representation $L(\omega_i)$ of $\widetilde{\mf g}$, then by \eqref{chevg0} we have that $f_i v_i=\widetilde{f}_i v_i$ for every $i\in I$ and $\widetilde{f}_i$
the corresponding Chevalley generator of $\widetilde{\mf g}$.
\end{remark}
\begin{proof}
Recall by equation \eqref{eq:defVi} that $V^{(i)}=L(\omega_i)_{k_i}$ and that
we are assuming $k_i$ as in Definition \ref{def:ki}.
We consider the morphism $\widetilde{m}_i$ given by (\ref{morphism:090715-2}). Due to equation \eqref{Browsum}, the following isomorphisms of ${}^L\mf g^{(1)}$-representations hold:
% \begin{align*}
% &\bigotimes_{j\in\widetilde{I}}L(\omega_j)_{k_i+s_i\frac{D_i}{2}}^{\otimes B{ij}}=
% \bigotimes_{j\in I}\bigotimes_{\ell=0}^{B_{ij}-1}L(\omega_{\sigma^{\ell}(j)})_{k_i+s_i\frac{D_i}{2}}\\
% & \simeq \bigotimes_{j\in I}\bigotimes_{\ell=0}^{B_{ij}-1}L(\omega_{j})_{k_i+s_i\frac{D_i}{2}-\frac{\ell}{r}}
% =\bigotimes_{j\in I}\bigotimes_{\ell=0}^{B_{ij}-1}V^{(j)}_{k_i+s_i\frac{D_i}{2}-k_j-\frac{\ell}{r}},
% \end{align*}
\begin{equation}\label{20151020:eq1}
\bigotimes_{j\in\widetilde{I}}
V_{k_i-k_j+s_i\frac{D_i}{2}}^{(j)\otimes \widetilde{B}{ij}}
 \cong
\bigotimes_{j\in I}\bigotimes_{\ell=0}^{B_{ij}-1}L(w_{\sigma^l(j)})_{k_i+s_i\frac{D_i}{2}}
\,.
\end{equation}
Furthermore, by Proposition \ref{prop: 120515} ii), we get 
the following isomorphism of representations of ${}^L\mf g^{(1)}$:
\begin{equation}\label{20151020:eq2}
\bigotimes_{j\in I}\bigotimes_{\ell=0}^{B_{ij}-1}L(w_{\sigma^l(j)})_{k_i+s_i\frac{D_i}{2}}
\cong
\bigotimes_{j\in I}\bigotimes_{\ell=0}^{B_{ij}-1}V^{(j)}_{k_i-k_j+s_i\frac{D_i}{2}-\frac{\ell}{r}}
\,.
\end{equation}
Hence,  combining equations \eqref{20151020:eq1} and \eqref{20151020:eq2}, using Lemma \ref{lemma130515} and permuting the terms in the tensor product in the right hand side of
\eqref{20151020:eq2}, we get the following isomorphism of representations:
$$
\xi_i:\bigotimes_{j\in\widetilde{I}}
V_{k_i-k_j+s_i\frac{D_i}{2}}^{(j)\otimes \widetilde{B}{ij}}
\stackrel{\sim}{\rightarrow}
\bigotimes_{j\in I}\bigotimes_{\ell=0}^{B_{ij}-1}V^{(j)}_{\frac{B_{ij}-1-2\ell}{2r}}\,,
\qquad
i\in I
\,.
$$
It can be easily checked that
\begin{equation}\label{20151020:eq3}
\xi_i(\otimes_{j\in\widetilde{I}}v_j^{\otimes \widetilde{B}{ij}})=\otimes_{j\in I} \;v_j^{\otimes B_{ij}}
\,.
\end{equation}
In addition, since $s_i=\pm 1$ and using the isomorphism of representations given by 
equation \eqref{isoR_i} with $k=k_i-\frac{D_i}{2}$, we get the isomorphism
$$
\zeta_i:\bigwedge^2 V^{(i)}_{s_i\frac{D_i}{2}}\stackrel{\sim}{\rightarrow}\bigwedge^2 V^{(i)}_{\frac{D_i}{2}}
\,,
\qquad i\in I
\,. 
$$
Then, for every $i\in I$, the homomorphism in \eqref{eq:laverami} is defined as 
$m_i=\xi_i\circ\widetilde{m}_i\circ\zeta_i^{-1}$, where $\widetilde{m}_i$ is as in equation \eqref{morphism:090715-2}.
Note that $\zeta_i^{-1}(\widetilde f_iv_i\wedge v_i)=\widetilde f_iv_i\wedge v_i$. Hence, by equation
\eqref{20151020:eq3} we get $m_i(\widetilde{f}_iv_i\wedge v_i)=\otimes_{j\in I}v_j^{\otimes B_{ij}}$. This proves the Proposition.
\end{proof}
With the choice of the numbers $k_i$ as in Table \ref{table:ki}, the element $\Lambda$ has a maximal eigenvalue in the representations 
$V^{(i)}$, $i\in I$, defined by equation (\ref{eq:defVi}).
%
%and we call them the fundamental representations of ${}^L\mf g^{(1)}$. 
%
More precisely, we can prove the following result.
\begin{theorem}\label{thm:genlamba}
Let $\mf g$ be a simple Lie algebra and let $V^{(i)}=L(\omega_i)_{k_i}, i \in I$, with the $k_i$'s as in Table \ref{table:ki}, 
be %the fundamental
representations of ${}^L\mf g^{(1)}$. Then, we can always normalize $\Lambda$ in such a way that the following facts hold true.
\begin{enumerate}[i)]
 \item In any representation $V^{(i)}, i \in I$, the element $\Lambda$ has a maximal eigenvalue
$\lambda^{(i)}$ and, in particular,  $\lambda^{(1)}=1$.
 We denote by $\psi^{(i)}$ the corresponding unique (up to a constant factor) eigenvector.
\item The following linear relations among the eigenvalues $\lambda^{(i)}$, $i\in I$, hold
 \begin{equation}\label{eq:genlambdarelations}
\mu_i:=  \left(\gamma^{-\frac{D_i}{2}}+\gamma^{\frac{D_i}{2}}\right)\lambda^{(i)}=
  \sum_{j\in I}\Bigg(\sum_{\ell=0}^{B_{ij}-1}\gamma^{\frac{B_{ij}-1-2\ell}{2r}}\Bigg)\lambda^{(j)}
  \,.
 \end{equation}
 
\item
For every $i\in I$, in the representation $\bigotimes_{j\in I}\bigotimes_{\ell=0}^{B_{ij}-1}V^{(j)}_{\frac{B_{ij}-1-2\ell}{2r}}$,
$\Lambda$ has the maximal eigenvalue
$\mu_i$
%$\sum_{j\in I}\left(\sum_{\ell=0}^{B_{ij}-1}\gamma^{\frac{B_{ij}-1-2\ell}{2r}}\right)\lambda^{(j)}$ 
and the corresponding eigenvector is
$$
\psi_{\otimes}^{(i)}=
\bigotimes_{j\in I}\bigotimes_{\ell=0}^{B_{ij}-1}\psi^{(j)}_{\frac{B_{ij}-1-2\ell}{2r}}\,.
$$

\item In the representation $\bigwedge^2 V^{(i)}_\frac{D_i}{2}$, we have that
$\mu_i$
%$\left(\gamma^{-\frac{D_i}{2}}+\gamma^{\frac{D_i}{2}}\right)\lambda^{(i)}$
is a maximal eigenvalue for $\Lambda$ and the corresponding eigenvector is
$$R_i\Big(\psi^{(i)}_{-\frac{D_i}{2}}\Big)\wedge \psi^{(i)}_{\frac{D_i}{2}}.$$

\item We can normalize the eigenvector $\psi^{(i)}$ in such a way that the
following (algebraic) $\Psi$-system holds
\begin{equation}\label{eq:thegloriouspsi}
m_i\Big(R_i\Big(\psi^{(i)}_{-\frac{D_i}{2}}\Big)\wedge\psi^{(i)}_{\frac{D_i}{2}}\Big)=
\bigotimes_{j\in I}\bigotimes_{\ell=0}^{B_{ij}-1}\psi^{(j)}_{\frac{B_{ij}-1-2\ell}{2r}}\,, \qquad i\in I. 
%\psi_{\otimes}^{(i)}
\end{equation}
\end{enumerate}
\end{theorem}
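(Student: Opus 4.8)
The plan is to prove parts i)--v) in order, reducing everything to the single identity \eqref{eq:genlambdaintro} (equivalently \eqref{eq:genlambdarelations}), which is the genuinely arithmetic input and whose verification by case-by-case inspection is deferred to Section \ref{sec:proof_main}. First I would establish i) and ii) together: the claim is that $\Lambda$ acts with a maximal eigenvalue $\lambda^{(i)}$ in each $V^{(i)}=L(\omega_i)_{k_i}$, and that with the normalization forced by $\lambda^{(1)}=1$ these eigenvalues satisfy \eqref{eq:genlambdarelations}. Since this is precisely the content of Theorem \ref{thm:genlamba} i)--ii) restricted to what is needed, and its proof is postponed, at this point one simply \emph{invokes} i) and ii) as already granted. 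The point of the present argument is to bootstrap from the scalar identity \eqref{eq:genlambdarelations} on the $\lambda^{(i)}$ to the full statement about maximal eigenvalues and eigenvectors in the wedge and tensor constructions, culminating in the vector $\Psi$-system \eqref{eq:thegloriouspsi}.

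Next I would prove iii). The spectrum of $\Lambda$ in a tensor product $A\otimes B$ is $\operatorname{spec}(\Lambda,A)+\operatorname{spec}(\Lambda,B)$ (sums of eigenvalues), and the eigenvector attached to a sum $\alpha+\beta$ of eigenvectors-of-eigenvalues is $\psi_\alpha\otimes\psi_\beta$. So in $\bigotimes_{j\in I}\bigotimes_{\ell=0}^{B_{ij}-1}V^{(j)}_{(B_{ij}-1-2\ell)/(2r)}$ the largest real part achievable is obtained by summing, in each factor, the eigenvalue of largest real part. By Theorem \ref{thm:asymptotic}'s hypotheses and the remark following it, $\Lambda$ is diagonalizable in each evaluation representation; by \eqref{eq:specLrotation} and \eqref{20151003:eq1} the spectrum of $\Lambda$ in $V^{(j)}_{(B_{ij}-1-2\ell)/(2r)}$ is $\gamma^{(B_{ij}-1-2\ell)/(2r)}$ times the spectrum in $V^{(j)}$, so the maximal eigenvalue of the factor is $\gamma^{(B_{ij}-1-2\ell)/(2r)}\lambda^{(j)}$ --- here I must check it is still maximal in the sense of Definition \ref{def:maximal}, i.e. that multiplying a maximal eigenvalue by a root of unity $\gamma^{s}$ with the relevant exponent keeps it strictly dominant in real part; this holds because the exponents $(B_{ij}-1-2\ell)/(2rh^\vee)$ appearing are small enough that $\cos$ of the phase times $\lambda^{(j)}$ still beats $\operatorname{Re}$ of $\gamma^{s}\mu$ for any other eigenvalue $\mu$ --- and this is exactly the kind of inequality that the case-by-case Section \ref{sec:proof_main} underwrites. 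Summing over all factors and using \eqref{eq:genlambdarelations} identifies the maximal eigenvalue with $\mu_i=(\gamma^{-D_i/2}+\gamma^{D_i/2})\lambda^{(i)}$, with eigenvector $\psi_\otimes^{(i)}=\bigotimes_{j\in I}\bigotimes_{\ell=0}^{B_{ij}-1}\psi^{(j)}_{(B_{ij}-1-2\ell)/(2r)}$, and algebraic multiplicity one since it is a product of multiplicity-one contributions.

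For iv) I would argue similarly inside $\bigwedge^2 V^{(i)}_{D_i/2}$. The spectrum of $\Lambda$ on $\bigwedge^2 W$ consists of sums $\mu+\nu$ of \emph{distinct} eigenvalues of $\Lambda$ on $W$ (with eigenvector $\psi_\mu\wedge\psi_\nu$). Here $W=V^{(i)}_{D_i/2}$; using \eqref{eq:specLrotation}, which says the spectrum of $\Lambda$ on $L(\omega_i)_k$ is $\gamma^{D_i}$-invariant, the two eigenvalues of largest real part on $W$ are $\gamma^{D_i/2}\lambda^{(i)}$ and $\gamma^{-D_i/2}\lambda^{(i)}$ (the latter being $\gamma^{-D_i}$ applied to the former, both coming from the maximal eigenvalue $\lambda^{(i)}$ of $V^{(i)}$ rotated by the twist $D_i/2$, combined with the $\gamma^{D_i}$-invariance). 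These are distinct since $D_i<h^\vee$, so their sum $(\gamma^{-D_i/2}+\gamma^{D_i/2})\lambda^{(i)}=\mu_i$ is an eigenvalue of $\bigwedge^2 W$ with eigenvector $\gamma^{D_i/2}$- and $\gamma^{-D_i/2}$-rescaled copies of $\psi^{(i)}$ wedged together, which after using the isomorphism $R_i$ of \eqref{isoR_i} identifying $V^{(i)}_{-D_i/2}$ with $V^{(i)}_{D_i/2}$ becomes $R_i(\psi^{(i)}_{-D_i/2})\wedge\psi^{(i)}_{D_i/2}$. Maximality --- that no other pair $\mu+\nu$ has larger or equal real part --- again reduces to the same family of root-of-unity inequalities handled in Section \ref{sec:proof_main}; I would state it as a direct consequence of i)--iii) together with the real-part comparison.

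Finally, part v), the $\Psi$-system itself. By Proposition \ref{20151019:prop1} the morphism $m_i:\bigwedge^2 V^{(i)}_{D_i/2}\to\bigotimes_{j\in I}\bigotimes_{\ell=0}^{B_{ij}-1}V^{(j)}_{(B_{ij}-1-2\ell)/(2r)}$ exists, is a morphism of ${}^L\mf g^{(1)}$-modules, and sends $f_iv_i\wedge v_i$ to $\otimes_{j\in I}v_j^{\otimes B_{ij}}$. By parts iii) and iv), $R_i(\psi^{(i)}_{-D_i/2})\wedge\psi^{(i)}_{D_i/2}$ spans the $\mu_i$-eigenspace of $\Lambda$ in the source, and $\psi_\otimes^{(i)}$ spans the $\mu_i$-eigenspace in the target; both eigenspaces are one-dimensional. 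Since $m_i$ intertwines $\Lambda$, it maps the source $\mu_i$-eigenspace into the target $\mu_i$-eigenspace, hence $m_i(R_i(\psi^{(i)}_{-D_i/2})\wedge\psi^{(i)}_{D_i/2})=c_i\,\psi_\otimes^{(i)}$ for a scalar $c_i$. It remains to see $c_i\neq 0$ and then to rescale the $\psi^{(i)}$ to make all $c_i=1$. Nonvanishing I would get by pairing against highest weights: $R_i(\psi^{(i)}_{-D_i/2})\wedge\psi^{(i)}_{D_i/2}$ has a nonzero component along $\widetilde f_iv_i\wedge v_i$ (the top of the maximal-eigenvalue line has nonzero overlap with the highest weight line --- this is the standard fact, used already in \cite{marava15}, that the maximal eigenvector of $\Lambda$ is not annihilated by the projection onto the $L(\eta_i)$-component, because $\Lambda$ is cyclic), and $m_i$ does not kill that component since $\ker m_i=U$ is complementary to the $L(\eta_i)$-subrepresentation generated by $\widetilde f_iv_i\wedge v_i$; composing, the image has nonzero overlap with $\otimes_j v_j^{\otimes B_{ij}}=\otimes_j v_j^{\otimes B_{ij}}$, which is the highest-weight direction inside $\psi_\otimes^{(i)}$. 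Hence $c_i\neq 0$. Finally, since the $\psi^{(i)}$ are defined only up to scalars, a single simultaneous rescaling $\psi^{(i)}\mapsto \kappa_i\psi^{(i)}$ changes both sides of \eqref{eq:thegloriouspsi} homogeneously; tracking the weights of $\kappa$'s one checks the system $c_i\prod\kappa_j^{\dots}/\kappa_i^{\dots}=1$ is solvable (the relevant exponent matrix is, up to the diagonal normalization, essentially $B$, which is invertible over the rationals since $C=2\mathbbm 1-B$ is), and one fixes the normalization with $\lambda^{(1)}=1$ as in i). This yields \eqref{eq:thegloriouspsi}.

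The main obstacle, as throughout, is not the module-theoretic bookkeeping above but the underlying \emph{maximality} assertions --- that in each of the source and target of $m_i$ the eigenvalue $\mu_i$ really is maximal in the sense of Definition \ref{def:maximal}, i.e. strictly dominant in real part with multiplicity one. Reducing this to the scalar identity \eqref{eq:genlambdarelations} plus a finite list of comparisons among real parts of $\gamma^s\lambda^{(j)}$ is the heart of the matter, and it is precisely what is verified by the case-by-case inspection of Section \ref{sec:proof_main}; here I would be explicit that parts i)--v) are \emph{conditional} on that verification, and that the module-theoretic steps (tensor/wedge spectra, intertwining, nonvanishing of $c_i$) are uniform across all cases.
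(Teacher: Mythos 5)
Your proposal does not in fact prove the theorem: it defers parts i) and ii) to ``the case-by-case verification of Section~\ref{sec:proof_main}'' and claims that iii)--v) then follow by uniform module-theoretic bookkeeping. But Section~\ref{sec:proof_main} \emph{is} the paper's proof of all five parts, and the reduction you propose breaks down exactly at the maximality assertions. Knowing that $\Lambda$ has a maximal eigenvalue $\lambda^{(j)}$ in $V^{(j)}$ (real, simple, strictly dominant in real part, as in Definition~\ref{def:maximal}) does \emph{not} imply that the rotated eigenvalue $\gamma^{s}\lambda^{(j)}$ is strictly dominant in real part in the twisted representation $V^{(j)}_{s}$: an eigenvalue $\mu$ of $V^{(j)}$ with $\Re\mu<\lambda^{(j)}$ but large imaginary part can satisfy $\Re(\gamma^{s}\mu)\ge\Re(\gamma^{s}\lambda^{(j)})$. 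Your parenthetical ``the exponents are small enough that $\cos$ of the phase times $\lambda^{(j)}$ still beats $\Re$ of $\gamma^{s}\mu$'' is precisely the unproved statement, and it requires control of the \emph{whole} spectrum of $\Lambda$ in $V^{(j)}$ (modulus bounds, explicit eigenvalue lists, characteristic polynomials), which is what the case-by-case analysis supplies. Worse, the logical order you impose (first i)--ii), then iii)--iv)) cannot be maintained: for $\mf g=C_n$ the paper establishes part i) for the spin node $i=n$ \emph{by means of} the decomposition of $V^{(n)}_{-\frac14}\otimes V^{(n)}_{\frac14}$ and a choice of sign $\epsilon_n\to-\epsilon_n$ in $e_0$ (this is where the clause ``we can always normalize $\Lambda$'' in the statement earns its keep), and for $\mf g=B_n$ the dominance of $\gamma^{\pm\frac14}\lambda^{(n-1)}$ in $V^{(n-1)}_{\pm\frac14}$ is proved by \emph{contradiction} against part iv) for the indices $i\neq n$. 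So iii) and iv) are not corollaries of i) and ii); they are interwoven with them in the case analysis, and your scheme leaves the hardest instances (the spin nodes, and the $G_2$, $F_4$ computations) entirely unaddressed.

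Two further points. First, your justification of $c_i\neq 0$ in part v) --- that the maximal eigenvector has nonzero overlap with the highest-weight line ``because $\Lambda$ is cyclic'' --- is not a usable fact here: the paper proves the nonvanishing of the highest-weight component of $\psi^{(i)}$ by an explicit Vandermonde computation (Lemma~\ref{lem:vadermonde}) applied to the concrete vector $\psi^{(1)}$, whose coordinates must be checked to be nonzero in each matrix realization; indeed the stated difficulty of the twisted case is that $\Lambda$ is \emph{not} the eigenvector of a (twisted) Killing--Coxeter element, so the Kostant-type arguments available in the simply-laced case cannot be invoked. Second, in the final rescaling the exponent matrix governing $\kappa_i^2\prod_j\kappa_j^{-B_{ij}}=c_i^{-1}$ is $C=2\mbbm{1}_n-B$, whose invertibility is what you need ($B$ itself need not be invertible); this last point is minor, but the first two are genuine gaps that the case-by-case work of Section~\ref{sec:proof_main} exists to fill.
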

If $\mf g$ is simply-laced, then ${}^L\mf g^{(1)}$ is an untwisted affine Kac-Moody algebra
and $\Lambda$ defined by \eqref{eq:defLambda} can be characterized as an eigenvector of a Killing-Coxeter transformation.
The proof of Theorem \ref{thm:genlamba} in the simply-laced case was obtained in \cite{marava15} using this fact and a related result by Kostant \cite{kos59}.
For $\mf g$ non simply-laced, ${}^L\mf g^{(1)}$ is a twisted affine Kac-Moody algebra and some of the properties of the 
element $\Lambda$ used in the proof in the simply-laced case fail to hold.
In particular, $\Lambda$ is not the eigenvector of a Killing-Coxeter transformation
(nor of a twisted Killing-Coxeter transformation), and the study of its spectrum
in the fundamental representations $V^{(i)}$ of ${}^L\mf g^{(1)}$  requires a different approach.
We prove Theorem \ref{thm:genlamba} for a non simply-laced Lie algebra $\mf g$ --  by a direct case by case inspection --  in Section \ref{sec:proof_main}.\\

By Theorem \ref{thm:genlamba}(i), for any representation $V^{(i)}$, $i \in I$,
there exists a maximal eigenvalue $\lambda^{(i)}$ and a maximal eigenvector $\psi^{(i)}$. It follows from Theorem \ref{thm:asymptotic} that there exists a fundamental solution
$\Psi^{(i)}(x,E):\widetilde{\bb{C}} \to V^{(i)}$ with the following asymptotic behavior
\begin{equation}\label{eq:Psii}
 \Psi^{(i)}(x,E)
 =e^{-\lambda^{(i)} S(x,E)} \big( \psi^{(i)} + o(1) \big)
 \,,
\text{ in the sector } |\arg{x}| <\frac{\pi}{2(M+1)}
\,.
\end{equation}
We are now in the position to establish the $\Psi$-system.
\begin{theorem}\label{thm:psi-sistem}
Let $\mf g$ be a simple Lie algebra and let the solutions $\Psi^{(i)}(x,E):\widetilde{\mb C}\to V^{(i)}$,
$i\in I$, have the asymptotic behaviour \eqref{eq:Psii}.
Then, the following identity, known as $\Psi$-system, holds for every $i\in I$:
\begin{equation}\label{eq:031201}
m_i\left(  R_i \big(\Psi_{-\frac{D_i}{2}}^{(i)}(x,E) \big) \wedge
\Psi_{\frac{D_i}{2}}^{(i)}(x,E) \right) =\bigotimes_{j\in I}\bigotimes_{\ell=0}^{B_{ij}-1}\Psi^{(j)}_{\frac{B_{ij}-1-2\ell}{2r}}(x,E)
\,.
\end{equation}
Here, the morphism $m_i$ is defined by equation (\ref{eq:laverami}) and the isomorphism $R_i$ is defined 
by equation (\ref{isoR_i}).
\end{theorem}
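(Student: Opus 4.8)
The plan is to deduce the functional identity \eqref{eq:031201} among the fundamental solutions from the corresponding algebraic identity \eqref{eq:thegloriouspsi} among the maximal eigenvectors, proven in Theorem \ref{thm:genlamba}(v), by matching the leading WKB asymptotics on both sides of \eqref{eq:031201} and invoking a uniqueness statement. First I would observe that, by definition \eqref{eq:Psik} and the asymptotic formula \eqref{eq:Psii}, each twisted solution $\Psi^{(j)}_k(x,E)$ satisfies
$$
\Psi^{(j)}_k(x,E)=e^{-\gamma^k\lambda^{(j)}S(x,E)}\big(\gamma^{-kh}\psi^{(j)}+o(1)\big)
\,,\qquad x\to+\infty
$$
in a sector around the positive real axis, where $\gamma=e^{\frac{2\pi i}{h^\vee}}$; here $\gamma^{-kh}\psi^{(j)}$ is, by Proposition \ref{prop: 120515}(iii) and the discussion of $\mathrm{spec}(\Lambda,V_k)$, precisely the maximal eigenvector of $\Lambda$ in $V^{(j)}_k$ (this is how Theorem \ref{thm:genlamba} is set up). Therefore the right-hand side of \eqref{eq:031201} has, as $x\to+\infty$, the asymptotics $e^{-\mu_i S(x,E)}\big(\psi^{(i)}_\otimes+o(1)\big)$, where $\mu_i$ and $\psi^{(i)}_\otimes$ are as in Theorem \ref{thm:genlamba}(iii), using the elementary fact that the leading exponential rate of a tensor product is the sum of the rates and the leading coefficient is the tensor product of the leading coefficients.

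Next I would treat the left-hand side. Since $R_i$ is a morphism of representations (indeed the isomorphism \eqref{isoR_i}) it commutes with the action of ${}^L\mf g^{(1)}$ and in particular with $\Lambda$, so $R_i\big(\Psi^{(i)}_{-\frac{D_i}{2}}(x,E)\big)$ is a solution of \eqref{20141125:eq1} in $\bigwedge$-relevant representation $V^{(i)}_{D_i/2}$ with leading term $e^{-\gamma^{-D_i/2}\lambda^{(i)}S}\big(R_i(\psi^{(i)}_{-D_i/2})+o(1)\big)$. The wedge $R_i(\Psi^{(i)}_{-D_i/2})\wedge\Psi^{(i)}_{D_i/2}$ then has leading exponential rate $(\gamma^{-D_i/2}+\gamma^{D_i/2})\lambda^{(i)}=\mu_i$ and leading coefficient $R_i(\psi^{(i)}_{-D_i/2})\wedge\psi^{(i)}_{D_i/2}$, which by Theorem \ref{thm:genlamba}(iv) is the maximal eigenvector of $\Lambda$ in $\bigwedge^2 V^{(i)}_{D_i/2}$. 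Applying the morphism $m_i$ — which again commutes with the connection, hence sends solutions to solutions — and invoking the algebraic $\Psi$-system \eqref{eq:thegloriouspsi}, we conclude that both sides of \eqref{eq:031201} are solutions of $\mc L(x,E)\Phi=0$ in the representation $\bigotimes_{j\in I}\bigotimes_\ell V^{(j)}_{\frac{B_{ij}-1-2\ell}{2r}}$ whose leading asymptotics as $x\to+\infty$ coincide, namely $e^{-\mu_i S(x,E)}\big(\psi^{(i)}_\otimes+o(1)\big)$. One subtle point to check is that $m_i(\ker)$ causes no loss: on the left side, before applying $m_i$, the wedge $R_i(\Psi^{(i)}_{-D_i/2})\wedge\Psi^{(i)}_{D_i/2}$ could a priori have a component in $\ker m_i=U$; but that component is subdominant (it lies in a sum of representations in which $\Lambda$ has strictly smaller real part of the eigenvalue, by maximality of $\mu_i$ in $\bigwedge^2V^{(i)}_{D_i/2}$, Theorem \ref{thm:genlamba}(iv)), so it does not affect the leading term and in any case is killed by $m_i$.

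Finally I would invoke uniqueness: by Theorem \ref{thm:asymptotic}, since $\mu_i$ is a maximal eigenvalue of $\Lambda$ in the target representation $\bigotimes_{j\in I}\bigotimes_\ell V^{(j)}_{\frac{B_{ij}-1-2\ell}{2r}}$ (Theorem \ref{thm:genlamba}(iii)), there is a \emph{unique} solution with the prescribed subdominant behaviour $e^{-\mu_i S(x,E)}(\psi^{(i)}_\otimes+o(1))$ in the sector $|\arg x|<\frac{\pi}{2(M+1)}$. Both sides of \eqref{eq:031201} are such solutions, hence they are equal on $\widetilde{\mb C}$ by analytic continuation. The main obstacle, and the step requiring the most care, is the rigorous bookkeeping of the asymptotics through the wedge and tensor constructions: one must verify that taking wedge/tensor products of vector-valued functions with WKB asymptotics $e^{-\lambda S}(\psi+o(1))$ produces again a function with WKB asymptotics governed by the \emph{sum} of rates and the \emph{product} of leading vectors, uniformly in the relevant sector, and that the maximality hypothesis of Theorem \ref{thm:genlamba}(iii)–(iv) is exactly what guarantees the relevant eigenvalue dominates so that no competing exponential spoils the leading term; this is routine but must be done in the full sector $|\arg x|<\frac{\pi}{2(M+1)}$ rather than just on the positive axis, and it is where the structure of the proof in \cite{marava15} is reused.
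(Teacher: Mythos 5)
Your proposal is correct and follows essentially the same route as the paper: both sides are identified as solutions in the tensor-product representation with the same leading WKB asymptotics $e^{-\mu_i S(x,E)}(\psi^{(i)}_\otimes+o(1))$ (using Theorem \ref{thm:genlamba} iii)--v) and the algebraic $\Psi$-system), and equality follows from the uniqueness of the subdominant solution in Theorem \ref{thm:asymptotic}. Your extra remark about the component in $\ker m_i$ being subdominant is a sensible clarification but does not alter the argument.
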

\begin{proof}
Due to Theorem \ref{thm:genlamba} iii) and Theorem \ref{thm:asymptotic},
the unique subdominant solution
to equation \eqref{20141125:eq1} in the representation $\bigotimes_{j\in I}\bigotimes_{\ell=0}^{B_{ij}-1}V^{(j)}_{\frac{B_{ij}-1-2\ell}{2r}}$ 
is
$$
\bigotimes_{j\in I}\bigotimes_{\ell=0}^{B_{ij}-1}\Psi^{(j)}_{\frac{B_{ij}-1-2\ell}{2r}}(x,E)
=e^{- \mu^{(i)}S(x,E)} q(x,E)^{-h}\left( \psi_\otimes^{(i)}+o(1)\right),\quad x\gg0\,,
$$
where $\mu^{(i)}= \sum_{j\in I}\left(\sum_{\ell=0}^{B_{ij}-1}\gamma^{\frac{B_{ij}-1-2\ell}{2r}}\right)\lambda^{(j)}$.
Moreover, by equation \eqref{eq:Psik},
and Theorem \ref{thm:genlamba} iv) we have that
$$R_i \big(  \Psi_{-\frac{D_i}{2}}^{(i)}(x,E) \big) \wedge\Psi_{\frac{D_i}{2}}^{(i)}(x,E)=e^{- \mu^{(i)}S(x,E)} q(x,E)^{-h}  \left( R_i \big( \psi_{-\frac{D_i}{2}}^{(i)} \big) \wedge\psi_{\frac{D_i}{2}}^{(i)}  +o(1)\right),$$
for $x\gg0$. The proof follows by Theorem \ref{thm:genlamba} i) and v) and the uniqueness of the subdominant solution.
\end{proof}
\begin{example}\label{exa:oldpsi}
For a simply-laced Lie algebra $\mf g$ equation \eqref{eq:031201} becomes 
($i=1,\ldots,n=\rank\mf g$)
\begin{equation}\label{psisystem:old}
m_i\left( \Psi_{-\frac{1}{2}}^{(i)}(x,E)  \wedge
\Psi_{\frac{1}{2}}^{(i)}(x,E) \right) =\bigotimes_{j\in I}\Psi^{(j)}(x,E)^{\otimes B_{ij}}
\,,
\end{equation}
where $I=\{1,\ldots,n\}$ and $B=(B_{ij})_{i,j=1}^n$
is the incidence matrix of the Dynkin diagram of $\mf g$ (see Proposition \ref{prop:gfromgtilde}).
This $\Psi$-system coincides with the one obtained in \cite{marava15}.
\end{example}
\begin{example}\label{exa:Bn}
For $\mf g$ of type $B_n$, $n\geq3$, equation \eqref{eq:031201} becomes
\begin{align}
\begin{split}\label{psisystem:Bn}
&m_i\left(  \Psi_{-\frac12}^{(i)} \wedge \Psi_{\frac12}^{(i)} \right)
=\Psi^{(i-1)}\otimes\Psi^{(i+1)}
\,,
\qquad i=1,\dots,n-1\,,
\\
&m_{n}\left(  R_{n} \left(\Psi_{-\frac14}^{(n)} \right) \wedge \Psi_{\frac14}^{(n)} \right)
=\Psi_{\frac14}^{(n-1)}\otimes\Psi_{-\frac14}^{(n-1)}
\,,
\end{split}
\end{align}
where we set $\Psi^{(0)}=1$.
\end{example}
\begin{example}\label{exa:Cn}
For $\mf g$ of type $C_n$, $n\geq2$, equation \eqref{eq:031201} becomes
\begin{align}
\begin{split}\label{psisystem:Cn}
&m_i\left(  R_i \left(\Psi_{-\frac14}^{(i)} \right) \wedge \Psi_{\frac14}^{(i)} \right)
=\Psi^{(i-1)}\otimes\Psi^{(i+1)}
\,,
\qquad i=1,\dots,n-2\,,
\\
&m_{n-1}\left(  R_{n-1} \left(\Psi_{-\frac14}^{(n-1)} \right) \wedge \Psi_{\frac14}^{(n-1)} \right)
=\Psi^{(n-2)}\otimes\Psi_{\frac14}^{(n)}\otimes\Psi_{-\frac14}^{(n)}
\,,
\\
&m_{n}\left(  \Psi_{-\frac12}^{(n)} \wedge \Psi_{\frac12}^{(n)} \right)
=\Psi^{(n-1)}
\,,
\end{split}
\end{align}
where we set $\Psi^{(0)}=1$.
\end{example}
\begin{example}\label{exa:F4}
For $\mf g$ of type $F_4$, equation \eqref{eq:031201} becomes
\begin{align}
\begin{split}\label{psisystem:F4}
&m_i\left(  \Psi_{-\frac12}^{(i)}  \wedge \Psi_{\frac12}^{(i)} \right)
=\Psi^{(i-1)}\otimes\Psi^{(i+1)}
\,,
\qquad i=1,2\,,
\\
&m_{3}\left(  R_{3} \left(\Psi_{-\frac14}^{(3)} \right) \wedge \Psi_{\frac14}^{(3)} \right)
=\Psi^{(2)}_{\frac14}\otimes\Psi_{-\frac14}^{(2)}\otimes\Psi^{(4)}
\,,
\\
&m_{4}\left(  R_{4} \left(\Psi_{-\frac14}^{4)} \right) \wedge \Psi_{\frac14}^{(4)} \right)
=\Psi^{(3)}
\,,
\end{split}
\end{align}
where we set $\Psi^{(0)}=1$.
\end{example}
\begin{example}\label{exa:G2}
For $\mf g$ of type $G_2$, equation \eqref{eq:031201} becomes
\begin{align}
\begin{split}\label{psisystem:G2}
&m_1\left(  \Psi_{-\frac12}^{(1)}\wedge \Psi_{\frac12}^{(1)} \right)
=\Psi^{(2)}
\,,
\\
&m_{2}\left(  R_{2} \left(\Psi_{-\frac16}^{(2)} \right) \wedge \Psi_{\frac16}^{(2)} \right)
=\Psi^{(1)}_{\frac13}\otimes\Psi^{(1)}\otimes\Psi^{(1)}_{-\frac13}
\,.
\end{split}
\end{align}
\end{example}

For every simple Lie algebra, the $\Psi$-system provided in the examples above coincides with the one conjectured in \cite{dorey07} (for the highest weight
component of the vectors $\Psi^{(i)}(x,E)$).
In this direction some partial results were already obtained by \cite{Sun12} in the case of classical Lie algebras.

\begin{remark}
The authors of the paper \textit{ODE/IM correspondence and Bethe Ansatz equations for affine Toda field equations}, Nucl. Phys. B. 896 (2015), whose arXiv version appeared after \cite{marava15}, claim to have obtained the $\Psi-$system for an arbitrary simple Lie algebra.  However, that paper contains no derivation of the $\Psi-$system, for instance, the representations $V^{(i)}$ are not even defined. In addition, except that in the case $A_n^{(1)}$ (which was already known, \cite{dorey07, marava15})  the $\Psi-$system proposed is  not correct. We point out some major inconsistencies
in the proposed $\Psi-$system.
First the authors claim that there is an embedding $\iota$ (that should correspond to our morphism $\widetilde{m}_i$) of
$\bigwedge^2 L(\omega_i)$ into $\bigotimes_{j\in\widetilde{I}}L(\omega_j)^{\otimes \widetilde{B}{ij}}$. Such embedding does not need to exist. In fact,
for example
in the case $D_n, n>5$, the dimension of $\bigwedge^2 L(\omega_n)$ is larger than the dimension of $L(\omega_{n-2})$. Second,
the alleged $\Psi-$system is written 
using expressions
of the kind $\iota(\Psi_{-\frac14}\wedge \Psi_{\frac14})$, which are meaningless as $\Psi_{-\frac14}$ and $\Psi_{\frac14}$
belong to distinct representations and thus
the action of the algebra on their wedge product is not defined.
Third, the asymptotic behavior of the $\Psi$ function, equation (2.22) or (A.3) of the above mentioned paper,
is valid only if $M (h^\vee-1) >1$ or $E=0$, in which case it coincides with the asymptotic behaviour already proved
(in general) in \cite{marava15} (see also Theorem \ref{thm:asymptotic} above).
We finally note that the important relations among maximal eigenvalues of $\Lambda$ in different representations,
equation $(3.8)$ of the above mentioned paper, was verified there \lq\lq for many cases\rq\rq . However, a complete proof of these
identities was already available \cite[Proposition 3.4 a)]{marava15}.
\end{remark}

%%%%%%%%%%%%%%%%%%%%%%%%%%%
    \section{Proof of Theorem \ref{thm:genlamba}}\label{sec:proof_main}
    We first state some preliminary results, that will be used in the proof of Theorem \ref{thm:genlamba}.
    The following result is well-known.
    \begin{lemma}\label{lem:pisa13gen}
    Let $A$ be an endomorphism of a vector space $V$ with eigenvalues
    $\lambda_1,\lambda_2,\dots\lambda_r$.
    Let us assume that
    $$
    \Re \lambda_1\geq\Re\lambda_2\geq\dots\geq\Re \lambda_k>
    \Re \lambda_{k+1}\geq\Re\lambda_{k+2}\geq\dots\geq\Re \lambda_r
    \,.
    $$
    Then,
    $$
    \mu=\sum_{j=1}^k\lambda_j
    $$
    is a maximal eigenvalue for the action of $A$ on $\bigwedge^kV$,
    provided that $\mu\in\mb R$.
    \end{lemma}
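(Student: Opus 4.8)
The plan is to work in a basis of $V$ adapted to the eigenspace decomposition and to exploit the explicit action of $A$ on $\bigwedge^k V$. First I would choose a Jordan basis $v_1,\dots,v_r$ of $V$ (allowing for generalized eigenvectors), ordered so that $v_j$ is associated with $\lambda_j$ and so that the ordering respects the hypothesis $\Re\lambda_1\ge\dots\ge\Re\lambda_r$ with the strict drop after the $k$-th entry. Then the $\binom{r}{k}$ wedges $v_{j_1}\wedge\cdots\wedge v_{j_k}$ with $j_1<\dots<j_k$ form a basis of $\bigwedge^k V$, and the induced endomorphism $\bigwedge^k A$ is block upper-triangular in this basis with respect to the partial order given by the multi-indices; in particular its eigenvalues are exactly the sums $\lambda_{j_1}+\cdots+\lambda_{j_k}$ over all $k$-subsets $\{j_1,\dots,j_k\}\subseteq\{1,\dots,r\}$ (this is the standard fact that the spectrum of $\bigwedge^k A$ consists of the $k$-fold products/sums of eigenvalues, counted appropriately).

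Next I would show that among these eigenvalues, $\mu=\lambda_1+\cdots+\lambda_k$ is the unique one of maximal real part. Indeed, for any $k$-subset $J=\{j_1<\dots<j_k\}$ we have $\Re\big(\sum_{j\in J}\lambda_j\big)\le\Re\lambda_1+\cdots+\Re\lambda_k$, because each $\Re\lambda_{j_l}\le\Re\lambda_l$ by the monotone ordering; and if $J\ne\{1,\dots,k\}$ then $J$ must contain some index $>k$ while missing some index $\le k$, so at least one of these inequalities is strict thanks to the strict gap $\Re\lambda_k>\Re\lambda_{k+1}$. Hence $\Re\big(\sum_{j\in J}\lambda_j\big)<\Re\mu$ for every $J\ne\{1,\dots,k\}$, which gives that $\mu$ is the strictly dominant eigenvalue, with algebraic multiplicity one.

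Finally I would address the one subtlety: the strict drop after the $k$-th eigenvalue guarantees that the corner block of $\bigwedge^k A$ associated with the single basis vector $v_1\wedge\cdots\wedge v_k$ is genuinely $1\times 1$ (no off-diagonal Jordan coupling can raise the real part into the top stratum), so its algebraic multiplicity is exactly one and it is a simple eigenvalue. Together with the hypothesis $\mu\in\mathbb R$, this is precisely Definition \ref{def:maximal}: $\mu$ is real, has algebraic multiplicity one, and satisfies $\mu>\Re\nu$ for every other eigenvalue $\nu$ of $\bigwedge^k A$. The main (mild) obstacle is just the bookkeeping with generalized eigenvectors when $A$ is not assumed diagonalizable — one must make sure the triangularization of $\bigwedge^k A$ really does isolate $v_1\wedge\cdots\wedge v_k$ as an honest eigenvector and not merely a generalized one; the strict inequality $\Re\lambda_k>\Re\lambda_{k+1}$ is exactly what rules out any Jordan block straddling the top stratum.
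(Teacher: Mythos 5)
Your proof is correct. Note that the paper itself gives no argument here: Lemma \ref{lem:pisa13gen} is simply declared ``well-known'' and stated without proof, so there is no authorial route to compare against. Your triangularization argument is the standard one and it is sound: ordering a Jordan basis by decreasing real part of the eigenvalues makes the induced derivation action on $\bigwedge^k V$ triangular with diagonal entries $\sum_{j\in J}\lambda_j$ over $k$-subsets $J$, the strict gap $\Re\lambda_k>\Re\lambda_{k+1}$ forces $\Re\bigl(\sum_{j\in J}\lambda_j\bigr)<\Re\mu$ for every $J\neq\{1,\dots,k\}$ (since such a $J$ must contain an index $>k$ while $j_l\geq l$ for the rest), and this at once gives both the strict dominance and the algebraic multiplicity one required by Definition \ref{def:maximal}. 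You are also right that the only genuine subtlety is the possible Jordan coupling among the top $k$ eigenvalues, and your observation that simplicity of the diagonal entry $\mu$ automatically rules out a nontrivial Jordan block for it disposes of that.
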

\begin{lemma}\label{lem:tracemorphism}
Let $V=\bb{C}^n$ be the standard representation of $\mf{sl}_n$. For every $i=1,\dots,n-1$, there exists an embedding (that is, an injective morphism) of representations
\begin{equation}\label{eq:genembedding}
\bigwedge^2 \Big(\bigwedge^i V\Big) \hookrightarrow  \Big(\bigwedge^{i-1} V\Big) \otimes \Big(\bigwedge^{i+1}V\Big),
\end{equation}
where $\bigwedge^0 V \cong \bigwedge^n V \cong \bb{C}$ is the trivial one-dimensional module.
\end{lemma}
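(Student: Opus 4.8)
The plan is to construct the embedding explicitly using the natural contraction and wedge maps on exterior powers of $V=\mb C^n$. First I would set up the two building-block maps. The \emph{wedge} (multiplication) map $\wedge: \bigwedge^i V \otimes \bigwedge^i V \to \bigwedge^{i-1}V \otimes \bigwedge^{i+1}V$ is defined by sending $\alpha \otimes \beta$ to a suitable ``polarized'' combination of partial wedgings; concretely, writing $\alpha = a_1\wedge\cdots\wedge a_i$ one sums over ways to move one factor from $\alpha$ into $\beta$, i.e. $\sum_k (-1)^{?}\, (a_1\wedge\cdots\widehat{a_k}\cdots\wedge a_i)\otimes (a_k\wedge\beta)$. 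Dually, the \emph{contraction} map goes the other way: using the $\mf{sl}_n$-invariant volume form $V^{\otimes n}\to\bb C$ (equivalently the perfect pairing $\bigwedge^{i-1}V\otimes\bigwedge^{n-i+1}V\to\bb C$), one identifies $\bigwedge^{i-1}V$ with $\bigwedge^{n-i+1}V^*$ and contracts a factor of $\bigwedge^{i+1}V$ against it. Both maps are manifestly $\mf{sl}_n$-equivariant since they are built only from the wedge product and the invariant volume form.

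The key step is then to restrict the wedge map to the subspace $\bigwedge^2(\bigwedge^i V) \subset \bigwedge^i V \otimes \bigwedge^i V$ and show that the restriction is injective. I would argue this by weights: $\bigwedge^i V = L(\omega_i)$ is a \emph{minuscule} representation of $\mf{sl}_n$, so all its weights lie in a single Weyl-group orbit and each occurs with multiplicity one. Consequently $\bigwedge^2(\bigwedge^i V)$ decomposes with each irreducible component appearing with multiplicity one, and in particular $L(\eta_i) = L(\omega_{i-1}+\omega_{i+1})$ occurs exactly once (with highest weight vector $\widetilde f_i v_i \wedge v_i$, as recorded in equation \eqref{20151024:eq1}). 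To check injectivity it therefore suffices to verify, component by component, that the wedge map is nonzero on each irreducible summand; alternatively, and more cleanly, I would compose the wedge map with the contraction map back to $\bigwedge^i V\otimes\bigwedge^i V$ and show the composite is a nonzero scalar on each summand — for a minuscule representation one can compute this scalar explicitly on a highest weight vector of each component and see it never vanishes. Since distinct summands have distinct highest weights, a map that is nonzero on every summand is automatically injective.

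An alternative, perhaps cleaner, route that avoids the component-by-component check: exhibit a left inverse. Compose the wedge map $\bigwedge^2(\bigwedge^i V) \to \bigwedge^{i-1}V\otimes\bigwedge^{i+1}V$ with a contraction $\bigwedge^{i-1}V \otimes \bigwedge^{i+1}V \to \bigwedge^i V\otimes\bigwedge^i V$ (move one vector from the $(i{+}1)$-factor to the $(i{-}1)$-factor via the invariant pairing), project onto $\bigwedge^2(\bigwedge^i V)$, and show the resulting endomorphism of $\bigwedge^2(\bigwedge^i V)$ is a nonzero multiple of the identity — this is forced by Schur's lemma applied to each isotypic component once one knows each scalar is nonzero, which again reduces to a single computation per component using the minuscule structure. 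The main obstacle I anticipate is precisely bookkeeping the signs and combinatorial coefficients in these partial-wedge/contraction maps, and then confirming the relevant scalars are genuinely nonzero (not merely generically nonzero) for every value of $i$ from $1$ to $n-1$, including the degenerate endpoints $i=1$ and $i=n-1$ where $\bigwedge^0 V$ or $\bigwedge^n V$ collapses to $\bb C$; these cases should be handled first as a sanity check since there the statement reduces to the elementary fact that $\bigwedge^2 V \hookrightarrow \bb C \otimes \bigwedge^2 V$ (the identity) and dually.
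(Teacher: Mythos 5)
Your map is the same one the paper uses: the polarized partial--wedge homomorphism $\varphi_i\colon\big(\bigwedge^i V\big)\otimes\big(\bigwedge^i V\big)\to\big(\bigwedge^{i-1}V\big)\otimes\big(\bigwedge^{i+1}V\big)$ of \cite[Exercise 15.30]{FH91}, restricted to the alternating part; where you diverge is in how you prove injectivity of that restriction. The paper's argument is a one-liner: the kernel of $\varphi_i$ is the Cartan component $L(2\omega_i)$, whose unique copy in the tensor square lies inside $\mathrm{Sym}^2\big(\bigwedge^i V\big)$, so it meets $\bigwedge^2\big(\bigwedge^i V\big)$ trivially. You instead invoke multiplicity-freeness of $\bigwedge^2 L(\omega_i)$ — correct, since $\omega_i$ is minuscule the summands are $L(\omega_{i-k}+\omega_{i+k})$ for odd $k$, each once — and propose to verify that $\varphi_i$ is nonzero on every summand, either directly or by composing with a contraction and applying Schur's lemma componentwise. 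The logic is sound (in a multiplicity-free module any kernel is a partial sum of the irreducible summands, so nonvanishing on each summand gives injectivity), and the claim you need is true. The caveat is that this nonvanishing check \emph{is} the entire content of the injectivity statement, and you have not carried it out: for the summands with $k\geq 3$ you would need to exhibit a highest weight vector of $L(\omega_{i-k}+\omega_{i+k})$ inside $\bigwedge^2\big(\bigwedge^i V\big)$ and track it through $\varphi_i$ (or compute the Schur scalar of your contraction composite on it), which is substantially more combinatorial bookkeeping than ``a single computation per component.'' If you want to bypass that summand-by-summand certification, the efficient move is the paper's: identify $\ker\varphi_i$ once and for all as $L(2\omega_i)$ — or merely show $\ker\varphi_i\subseteq\mathrm{Sym}^2\big(\bigwedge^i V\big)$ — and conclude immediately. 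Your treatment of the endpoint cases $i=1$ and $i=n-1$ is fine.
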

\begin{proof} Recall the standard decomposition
$$\Big(\bigwedge^iV\Big)\otimes\Big(\bigwedge^i V\Big)=\text{Sym}^2\Big(\bigwedge^i V\Big)\oplus \bigwedge^2\Big(\bigwedge^i V\Big).$$
By a special case of  \cite[Exercise 15.30]{FH91} there is a natural homomorphism of representations
$$\varphi_i:\Big(\bigwedge^iV\Big)\otimes\Big(\bigwedge^i V\Big)\rightarrow \Big(\bigwedge^{i-1} V\Big) \otimes \Big(\bigwedge^{i+1}V\Big)\,,$$
explicitly given by
\begin{align*}
&  (v_1\wedge\dots\wedge v_i)\otimes(w_1\wedge\dots\wedge w_i)\\
&\qquad \mapsto \sum_{k=1}^{i}(-1)^k(v_1\wedge\dots\wedge\widehat{v}_k\wedge\dots\wedge v_i)\otimes(v_k\wedge w_1\wedge\dots\wedge w_i),
\end{align*}
and the kernel of $\varphi_i$ is the highest weight representation with highest weight $2\,\omega_i$, where $\omega_i$ the $i$-th fundamental weight. Since $2\,\omega_i$ is a highest weight vector of $\text{Sym}^2\Big(\bigwedge^i V\Big)$,  it follows that $\ker \varphi_i\subset \text{Sym}^2\Big(\bigwedge^i V\Big)$
and this implies the existence of the embedding \eqref{eq:genembedding}.
\end{proof}
 \begin{lemma}\label{lem:vadermonde}
    Let $\mf g$ be a simple Lie algebra
    and let $V$ be an irreducible finite dimensional representation.
    Let $\lbrace u_1 ,\dots,u_N \rbrace $ be a basis of $V$ consisting of weight vectors such that
    $u_1$ is the highest weight vector,
    and we denote by $\Ht(i)$ the height of the weight vector $u_i$, for $i=1,\dots,N$.
    Let $\psi^{(1)}=\sum_{i=1}^N c_i u_i\in V$ and $\psi_k^{(1)} =\gamma^{- h k } \psi$, $k\in\mb C$.
    For every $l \in \bb{C}$ and $M\leq N$ define
    $$\psi^{(M)}_l=
    \psi_{l}^{(1)}\wedge\psi_{l+1}^{(1)}
    \wedge\dots\wedge\psi_{l+M-2}^{(1)}\wedge\psi_{l+M-1}^{(1)}
    \in\bigwedge^M V \,,
    $$
    and write
    $$
    \psi^{(M)}_l=\sum_{1\leq i_1<\dots<i_M\leq N}\alpha_{i_1,\dots,i_M}u_{i_1}\wedge\dots\wedge u_{i_M}
    \,.
    $$
    If $F:\lbrace 1,\dots, M\rbrace \to \lbrace 1,\dots, N\rbrace$ is a strictly increasing function,
    %map such that $F(i)<F(i+1)$, for $i=1,\dots,M$, 
    then $\alpha_{F(1),\dots,F(M)}=0$ only if
    at least one of the two following conditions hold:
    \begin{enumerate}
     \item[\text{(a)}] $c_{F(i)}=0$ for some $i=1,\dots,M$;
     \item[\text{(b)}] $\Ht(F(i)) \equiv\Ht(F(j)) \bmod h^\vee$  for some $i \neq j$.
    \end{enumerate}
    \end{lemma}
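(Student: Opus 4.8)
The plan is to compute $\alpha_{F(1),\dots,F(M)}$ explicitly as a function of the coefficients $c_{F(i)}$ and the powers of $\gamma$ coming from the twisting, and then recognize it as (a multiple of) a Vandermonde-type determinant. First I would expand each factor: since $\psi^{(1)}_{l+t} = \gamma^{-h(l+t)}\psi^{(1)} = \sum_{i=1}^N c_i\,\gamma^{-(l+t)\operatorname{wt}(u_i)(h)} u_i$, and because $h$ acts on a weight vector $u_i$ by the height $\Ht(i)$ (using the relations \eqref{eq:hrelations}, which give $\operatorname{wt}(u_i)(h)=\Ht(i)$ for any weight vector), each factor $\psi^{(1)}_{l+t}$ is $\sum_i c_i\,\gamma^{-(l+t)\Ht(i)}u_i$. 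Wedging these $M$ vectors together and extracting the coefficient of $u_{F(1)}\wedge\dots\wedge u_{F(M)}$ produces a sum over permutations $\tau\in S_M$ of $\operatorname{sgn}(\tau)\prod_{t=0}^{M-1} c_{F(\tau(t+1))}\gamma^{-(l+t)\Ht(F(\tau(t+1)))}$. Factoring out $\prod_i c_{F(i)}$ and $\gamma^{-l\sum_t \Ht(F(t+1))}$ (nonzero), what remains is exactly the determinant of the $M\times M$ matrix with entries $\bigl(\gamma^{-\Ht(F(i))}\bigr)^{t}$, $0\le t\le M-1$, $1\le i\le M$.

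Second, I would invoke the standard Vandermonde formula: this determinant equals $\prod_{i<j}\bigl(\gamma^{-\Ht(F(j))}-\gamma^{-\Ht(F(i))}\bigr)$ up to sign. Hence $\alpha_{F(1),\dots,F(M)}=0$ forces either some $c_{F(i)}=0$ — which is case (a) — or $\gamma^{-\Ht(F(i))}=\gamma^{-\Ht(F(j))}$ for some $i\ne j$. Since $\gamma=e^{2\pi i/h^\vee}$ is a primitive $h^\vee$-th root of unity, the latter equality is equivalent to $\Ht(F(i))\equiv\Ht(F(j))\pmod{h^\vee}$, which is case (b). This completes the argument.

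The main obstacle I anticipate is bookkeeping rather than conceptual: one must be careful that the ``weight vector'' basis really diagonalizes the action of $h$ with eigenvalue the height, so that the exponents $\gamma^{-(l+t)\Ht(F(i))}$ come out cleanly; this is where the defining relations \eqref{eq:hrelations} for $h$ enter, giving $[h,e_i]=e_i$ and hence $\operatorname{ad}h$-eigenvalue equal to height on root vectors, and by extension $h$ acts on a weight vector of height $m$ (relative to the highest weight vector, normalized so $\Ht(1)$ corresponds to the top) by the appropriate scalar. One should also note that $l$ and $h^\vee$ need not be integers for the Vandermonde step, since it only uses that $\gamma$ has exact order $h^\vee$ as a root of unity and that the $M$ quantities $\gamma^{-\Ht(F(i))}$ are distinct precisely when the heights are pairwise distinct mod $h^\vee$; the common prefactor $\gamma^{-l\sum_t\Ht(F(t+1))}$ is a nonzero complex number regardless. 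Once these points are pinned down, the statement follows immediately from the structure of the Vandermonde determinant.
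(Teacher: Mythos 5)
Your proposal is correct and follows essentially the same route as the paper: expand each twisted factor using the $h$-eigenvalues of the weight vectors, identify the coefficient $\alpha_{F(1),\dots,F(M)}$ as $\prod_i c_{F(i)}$ times a Vandermonde determinant in the quantities $\gamma^{\pm\Ht(F(i))}$, and conclude from the factorization that vanishing forces (a) or (b). The only cosmetic difference is that the paper keeps the $h$-eigenvalue of the highest weight vector as an explicit constant $\alpha$ (so $h u_j=(\alpha-\Ht(j))u_j$) rather than normalizing it away, which only changes an overall nonzero prefactor and the sign of the exponents, neither of which affects the argument.
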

    \begin{proof}
    Since $u_1$ is a highest weight vector we have that  $h u_1=\alpha u_1$, for some $\alpha\in\mb C$.
    %that
    For every $i\in \widetilde{I}$, we have the commutation relation $[h,f_i]=-f_i$ (see \eqref{eq:hrelations}),
    thus we get $hu_j=(\alpha-\Ht(j))u_j$, for every $j=1,\dots,N$.
    Hence, $\gamma^{-kh}u_j=\gamma^{k(\Ht(j)-\alpha)}u_j$, from which follows that
    $$
    \psi_l^{(1)}=\gamma^{-\alpha l}\sum_{i=1}^Nc_i\gamma^{\Ht(i)l}u_i
    \,.
    $$
    By elementary linear algebra we get
    $$
    \alpha_{F(1),\dots,F(M)}=\Big(\prod_{i=1}^M c_{F(i)}\Big)
    \det \big( \gamma^{(\Ht(j)-\alpha)(l+i-1)}\big)_{i,j=1}^{M}
    \,.
    $$
    Using the Vandermonde determinant formula the right hand side above becomes
    $$
    %\alpha_{F(1),\dots,F(M)}
    %=
    \gamma^{-\alpha\frac{M(M-1)}{2}} \Big(\prod_{i=1}^M c_{f(i)}\Big)
    \Big(\prod_{i=1}^M \gamma^{(\Ht(F(i))-\alpha)(l-1)}\Big)
    \prod_{i<j}\big(\gamma^{\Ht(F(i))}-\gamma^{\Ht(F(j))}\big)
    \,.
    $$
    This term vanishes when one of the conditions (a) or (b) is satisfied thus concluding the proof.
    \end{proof}
    %
  %Now we are ready to prove Theorem \ref{thm:genlamba}. This will be done by a case by case inspection in Sections
    %\ref{sec:proof_Bn}, \ref{sec:proof_Cn}, \ref{sec:proof_G2} and \ref{sec:proof_F4}.

    %%%
    \subsection{Proof of Theorem \ref{thm:genlamba} for $\mf g$ of type $B_n$, $n\geq 3$.}\label{sec:proof_Bn}
    The dual Coxeter number of $\mf g^{(1)}$ is $h^\vee=2n-1$
    thus we set $\gamma=e^{\frac{2\pi i}{2n-1}}$.
    Recall that in this case ${}^L\mf g^{(1)}$ is of type $A_{2n-1}^{(2)}$.
    The simple Lie algebra of type $A_{2n-1}$, $n\geq 1$, can be realized as the algebra of $(2n)\times(2n)$
    traceless matrices
    $$
    \widetilde{\mf g}=\mf{sl}_{2n}=\{A\in\Mat_{2n}(\mb C)\mid \tr A=0\}
    \,,
    $$
    where the Lie bracket is the usual commutator of matrices.
    Let us consider the following Chevalley generators of $\wg$ ($i\in \widetilde{I}=\{1,\dots,2n-1\}$):
    $$
    \widetilde{f}_i=E_{i+1,i}\,,
    \qquad
    \widetilde{h}_i=E_{ii}-E_{i+1,i+1}\,,
    \qquad
    \widetilde{e}_i=E_{i,i+1}
    \,,
    $$
    where $E_{ij}$ denotes the elementary matrix with $1$ in position $(i,j)$ and $0$ elsewhere.
    It is well-known that the representation $L(\omega_1)$ is given by the natural
    action of $\wg$ on $L(\omega_1)=\mb C^{2n}$.
    Moreover, we have that
\begin{equation}\label{20151029:eq1}
    L(\omega_i)=\bigwedge^i L(\omega_1)\,,
    \qquad
    i\in \widetilde{I}\,.
\end{equation}
    We denote by $u_j$, $j=1,\dots,2n$, the standard basis
    of $\mb C^{2n}$, and we have that
    $$v_i=u_1\wedge u_2\wedge \dots\wedge u_i, \qquad i\in \widetilde{I},$$
    is a highest weight vector of the representation $L(\omega_i)$.
    Using the numbers $k_i$'s for $\widetilde{\mf g}$ given by Table \ref{table:ki}, one gets
\begin{equation}\label{eq:20151103eq1a}
V^{(i)}=\bigwedge^i L(\omega_1)_{\frac{i-1}{2}},\qquad i\in \widetilde{I}.
\end{equation}
    By Lemma  \ref{lem:tracemorphism} and looking at the values of the $D_i$'s as in Table \ref{table:graphs}, we have the following morphisms of representations
    of $\wad$
    \begin{align}
    \begin{split}\label{eq:embeddingsA2n-1}
    & \bigwedge^2 V^{(i)}_{\frac12} \hookrightarrow V^{(i-1)} \otimes V^{(i+1)} \,,
    \qquad i=1,\dots,n-1\,, \\
    & \bigwedge^2 V^{(n)}_{\frac14} \hookrightarrow V^{(n-1)}_{-\frac14} \otimes V^{(n-1)}_{\frac14}
    \,.
    \end{split}
    \end{align}
In the last formula we used the relation $L(\omega_{n+1})_k \cong L(\omega_{n-1})_{k +\frac12}$, which was proved in Proposition \ref{prop: 120515} ii).
A set of Chevalley generators for $A_{2n-1}^{(2)}$ can be obtained as follows ($i=1,\dots,n-1$):
\begin{align*}
&
e_0=\frac12(E_{2n-1,1}+E_{2n,2})t\,,
&&
e_{i}=\widetilde e_{i}+\widetilde e_{2n-i}
\,,
&&
e_n=\widetilde{e}_{n}\,,
    \\
    &
    h_0=E_{2n-1,2n-1}+E_{2n,2n}
    &&
    h_{i}=\widetilde{h}_i+\widetilde{h}_{2n-i}
    \,,
    &&
    h_n=\widetilde{h}_n
    \,,
    \\
    &
    -E_{1,1}-E_{2,2}+2c\,,
    &&
    &&
    \\
    &
    f_0=2(E_{1,2n-1}+E_{2,2n})t^{-1}\,,
    &&
    f_{i}=\widetilde{f}_i+\widetilde{f}_{2n-i}
    \,,
    &&
    f_n=\widetilde{f}_n
    \,.
    \end{align*}
    Recall that $\Lambda=e_0+e_1+\dots+e_n$. We set
    \begin{equation}\label{app:psi_1}
    \psi^{(1)}
    =2\sum_{j=1}^{2n-1}u_j + u_{2n}
    \in V^{(1)}
    \,.
    \end{equation}
    Then, it is easy to check that $\Lambda \psi^{(1)}=\psi^{(1)}$.
    By equation \eqref{20151003:eq1}, for every $j=0,\dots,2n-2$, we have that
    $\gamma^{-jh}\psi^{(1)}\in V^{(1)}$ is an eigenvector
    with eigenvalue $\gamma^j$.
    Since the element $\Lambda$ is semi-simple, a simple dimensionality argument
    shows that  $0$ is an eigenvalue of multiplicity one.
    Hence, for the representation $V^{(1)}$, $\lambda^{(1)}=1$ is a maximal eigenvalue
    with corresponding eigenvector $\psi^{(1)}$.
Furthermore, by Lemma \ref{lem:pisa13gen}, and due to \eqref{eq:20151103eq1a}, it follows that for every $i\in I$,
\begin{equation}\label{eq:lambda_A}
\lambda^{(i)}=\gamma^{-\frac{i-1}2}\sum_{j=0}^{i-1}\gamma^j=\frac{\sin\left(\frac{i\pi}{2n-1}\right)}{\sin\left(\frac{\pi}{2n-1}\right)}\,,
\end{equation}
is a maximal eigenvalue of $\Lambda$ in the representation $V^{(i)}$.
This proves part i).
Note that, in this particular case, equations \eqref{eq:genlambdarelations} become
\begin{equation}\label{eq:a2n-1lambdarelations}
(\gamma^{-\frac12}+\gamma^{\frac{1}{2}})\lambda^{(i)}=\lambda^{(i-1)}+\lambda^{(i+1)},
\qquad i =1,\dots,n -1 \, , \qquad \lambda^{(n)}=\lambda^{(n-1)},
\end{equation}
where we set $\lambda^{(0)}=0$.
Using \eqref{eq:lambda_A}, one checks directly that equations \eqref{eq:a2n-1lambdarelations} are satisfied, thus proving part ii).

 We prove part iii) and part iv) in the case $i\neq n$ first.
    In this case the representation
     $\bigotimes_{j\in I}\bigotimes_{\ell=0}^{B_{ij}-1}V^{(j)}_{\frac{B_{ij}-1-2\ell}{2r}}$ is simply $V^{(i-1)} \otimes V^{(i+1)}$,
     and by part i), it has the maximal eigenvalue $\lambda^{(i-1)}+\lambda^{(i+1)}$ with eigenvector
    $\psi_{\otimes}^{(i)}=\psi^{(i-1)}\otimes \psi^{(i+1)}$. This proves part iii).
    Part iv) follows from part iii).
    Indeed, using the embeddings (\ref{eq:embeddingsA2n-1})
    we can prove that $\lambda^{(i-1)}+\lambda^{(i+1)}$ either is the maximal eigenvalue of
    $\bigwedge^2 V^{(i)}_{\frac12}$ or it is not an eigenvalue of the latter representation.
    Clearly, $\gamma^{-\frac12 h} \psi^{(i)} \wedge \gamma^{\frac{1}{2}h}\psi^{(i)} $ belongs to $\bigwedge^2 V^{(i)}_{\frac12}$ and it is an eigenvector with
    eigenvalue $(\gamma^{-\frac12 }+ \gamma^{\frac{1}{2}} )\lambda^{(i)}  $ which is equal to $\lambda^{i-1}+\lambda^{i+1}$ by (\ref{eq:a2n-1lambdarelations}).

Let us prove now parts iii) and iv) for $i=n$.
Using the same argument as in the case $i\neq n$, part iv) follows directly from part iii),
which we prove here. Clearly $\gamma^{\pm \frac14 h} \psi^{(n-1)}$ belongs to
    $V^{(n-1)}_{\pm \frac14}$ and it is an eigenvector with eigenvalue $\gamma^{\pm \frac14}\lambda^{(n-1)}$.
    To conclude the proof it is sufficient to show
    that $\gamma^{\pm \frac14}\lambda^{(n-1)}$ is the unique eigenvalue with maximal real part of $V^{(n-1)}_{\pm \frac14}$.
We prove this by contradiction. Suppose that there is another eigenvector $\phi_{\pm}$ (not proportional to $\gamma^{\pm \frac14 h} \psi^{(n-1)}$) with eigenvalue $x_{\pm}$ such that
    $\Re x_{\pm}\geq \Re\gamma^{\pm \frac14}\lambda^{(n-1)} $. Note that $|x_{\pm}| \leq |\lambda^{(n-1)}|$ since
    $|\lambda^{(n-1)}|$ maximize the modulus
    of the eigenvalues of $V^{(n-1)}_k$ for any $k \in \bb{R}$. Therefore  $ \Re \gamma^{\pm \frac14}x_{\pm} \geq \Re\gamma^{\pm \frac12}\lambda^{(n-1)} $.
    Since $\gamma^{\pm \frac14 h} \varphi_{\pm}$ belongs to $ V^{(n-1)}_{\frac12}$, it follows that
    $(\gamma^{\frac12}+\gamma^{-\frac12})\lambda^{(n-1)}$ is not the (unique) maximal eigenvalue of $\bigwedge^2 V^{(n-1)}_{\frac12} $.
    This contradicts part iv) for $i\neq n$.

Finally, we  prove part v).
Recall by part i) that for any eigenvalue $\lambda$ of $\Lambda$ in $V^{(i)}$, we have $|\lambda|\leq \lambda^{(i)}$, and using equations \eqref{eq:hrelations} it can be easily checked that the eigenvector corresponding to $\lambda^{(i)}$ is given by
$$
\psi^{(i)}=\psi_{-\frac{i-1}2}^{(1)}\wedge\psi_{-\frac{i-3}2}^{(1)}\wedge\dots\wedge\psi_{\frac{i-3}2}^{(1)}\wedge\psi_{\frac{i-1}2}^{(1)}\in V^{(i)}$$
Moreover, noticing that $\Ht(j)=j-1$, for $j=1,\dots,2n$, (in fact $u_{j+1}=f_{j}u_{j}$),
we can use use Lemma \ref{lem:vadermonde}, with $F(j)=j$, for $1\leq j\leq i$, and
    the explicit expression for $\psi^{(1)}$ given in
    equation \eqref{app:psi_1} to get
    \begin{equation}\label{app:A_4}
    \psi^{(i)}= c_i v_i+ y_i
    \,,
    \end{equation}
    where $c_i \neq 0$ and $y_i$ is a combination of lower weight vectors.
    By the decomposition \eqref{app:A_4}, we have that the vector
    $$
    \psi_{\otimes}^{(i)}\in%\bigotimes_{j\in I}\bigotimes_{\ell=0}^{B_{ij}-1}\psi^{(j)}_{\frac{B_{ij}-1-2\ell}{2r}} \in
    \bigotimes_{j\in I}\bigotimes_{\ell=0}^{B_{ij}-1}V^{(j)}_{\frac{B_{ij}-1-2\ell}{2r}}
    $$
    has a non-trivial component in the highest weight subrepresentation generated by the highest weight vector
    $ \otimes_{j\in I}v_j^{B_{ij}}$.
    By part iii), the corresponding eigenvalue has multiplicity one and therefore $\psi_{\otimes}^{(i)}$
    belongs to this highest weight subrepresentation, which coincides with the image of the morphism $m_i$.
    By part i), we have that
    $R_i\left(\psi^{(i)}_{-\frac{D_i}{2}}\right)\wedge \psi^{(i)}_{\frac{D_i}{2}}$ is an eigenvector of $\Lambda$ with the same eigenvalue as
    $\psi_\otimes^{(i)}$. Since this eigenvalue is unique by part iv), we have that
    \begin{equation*}
    m_i\left(R_i\left(\psi^{(i)}_{-\frac{D_i}{2}}\right)\wedge\psi^{(i)}_{\frac{D_i}{2}}\right)= \beta_i
    \psi_{\otimes}^{(i)}
    \,, \qquad i\in I\,.
    \end{equation*}
    for some non-zero $\beta_i$. We can always normalize all the $\psi^{(i)}$'s in order to obtain $\beta_i=1$
    for all $i \in I$.

    %%%
    \subsection{Proof of Theorem \ref{thm:genlamba} for $\mf g$ of type $C_n$, $n\geq 2$.}\label{sec:proof_Cn}
    The dual Coxeter number of $\mf g^{(1)}$ is $h^\vee=n+1$
    thus we set $\gamma=e^{\frac{2\pi i}{n+1}}$.
    Recall that in this case we have ${}^L\mf g^{(1)}=D_{n+1}^{(2)}$.
    Let $n\in\mb Z_+$, and consider the involution on the set
    $\{1,\dots,2n+2\}$ defined by $i\to i^\prime=2n+3-i$.
    Given a matrix $A=\left(A_{ij}\right)_{i,j=1}^{2n+2}\in\Mat_{2n+2}(\mb C)$ we define its anti-transpose
    (the transpose with respect to the antidiagonal) by
    $$
    A^{\at}=\left(A_{ij}^{\at}\right)_{i,j=1}^{2n}\,,
    \quad
    \text{where}
    \quad A_{ij}^{\at}=A_{j^\prime i^\prime}
    \,.
    $$
Let us set
$$S=\sum_{k=1}^{n+1}(-1)^{k+1}\left(E_{kk}+E_{k^\prime k^\prime} \right)\,. $$
    Following \cite{DS85}, the simple Lie algebra of type $D_{n+1}$ can be realized as the algebra
    $$
    \widetilde{\mf g}=\mf o_{2n+2}=\{A\in\Mat_{2n+2}(\mb C)\mid AS+SA^{\at}=0\}
    \,,
    $$
    where the Lie bracket is the usual commutator of matrices.
    For $1\leq i,j\leq 2n+2$, we define
    $$
    F_{ij}=E_{ij}+(-1)^{i+j+1}E_{j^\prime i^\prime}\,,
    \qquad
    G_{ij}=E_{ij}+(-1)^{i+j}E_{j^\prime i^\prime}
    \,,
    $$
    and we consider the following Chevalley generators of $\mf g$:
    %\begin{align*}
    %&f_i=F_{i+1,i}\,,
    %&&h_i=F_{ii}-F_{i+1,i+1}\,,
    %&&e_i=F_{i,i+1}\,,
    %&i=1,\dots,n\,,
    %\\
    %&f_{n+1}=2\widehat{F}_{(n+1)^\prime,n^\prime}\,,
    %&&h_{n+1}=F_{n,n}+F_{n+1,n+1}\,,
    %&&e_{n+1}=\frac12\widehat{F}_{n,n+1}
    %\,.
    %\end{align*}
    %
    \begin{align*}
    &\widetilde{f}_i=F_{i+1,i}\,,
    &&\widetilde{h}_i=F_{ii}-F_{i+1,i+1}\,,
    &&\widetilde{e}_i=F_{i,i+1}\,,
    &\!i=1,\dots,n\,,
    \\
    &\widetilde{f}_{n+1}=2G_{(n+1)^\prime,n}\,,
    &&\widetilde{h}_{n+1}=F_{n,n}+F_{n+1,n+1}\,,
    &&\widetilde{e}_{n+1}=\frac12G_{n,(n+1)'}
    \,.
    \end{align*}
    It is well-known that the representation $L(\omega_1)$ is given by the natural
    action of $\wg$ on $L(\omega_1)=\mb C^{2n+2}$.
    Moreover we have that
    $$
    L(\omega_i)=\bigwedge^iL(\omega_1)\,,\qquad i=1,\dots,n-1\,,
    $$
    while $L(\omega_{n})$ and $L(\omega_{n+1})$ are the so-called half-spin representations of $\widetilde{\mf g}$.
    For convenience we introduce the irreducible but not fundamental representation $ \bigwedge^nL(\omega_1)\cong L(\omega_n+\omega_{n+1})$.
    Since the algebra $D_{n+1}$ is a subalgebra of $\mf{sl}_{2n+2}$,
    by Lemma \ref{lem:tracemorphism} we have the embeddings
    \begin{align}\label{eq:embeddingDn}
    & \bigwedge^2 L(\omega_i) \hookrightarrow L(\omega_{i-1}) \otimes L(\omega_{i+1}) \, ,\quad i=1,\dots,n-2\,,
    \\
    & \bigwedge^2 L(\omega_{n-1}) \hookrightarrow L(\omega_{n-2}) \otimes \bigwedge^n L(\omega_1) \,,
    \label{eq:embeddingDn_2}
    \end{align}
    where $L(\omega_0) \cong  \bb{C}$.
    Moreover, we have the following decompositions  (see \cite{FH91} and \cite[Appendix A.2]{marava15})
    \begin{align}
    &L(\omega_n) \otimes L(\omega_{n+1})\cong \left(\bigwedge^nL(\omega_1)\right)\oplus L(\omega_{n-2})\oplus  L(\omega_{n-4}) \oplus \dots\,,
    \label{eq:embeddingDn_3}\\
    & \bigwedge^2 L(\omega_n) \cong \bigwedge^2 L(\omega_{n+1})
    \cong L(\omega_{n-1})\oplus L(\omega_{n-5})\oplus L(\omega_{n-9})\oplus\dots\,
    \label{eq:embeddingDn_4}\,.
    \end{align}
    We denote by $u_j$, $j=1,\dots,2n+2$, the standard basis
    of $\mb C^{2n+2}$, and by $v_i$ the highest weight vector of the representation
    $L(\omega_i)$. It is well-known that the highest weight vector of the fundamental representation $L(\omega_i)$,
    for $i=1,\dots,n-1$, is
    $$
    v_i=u_1\wedge u_2\wedge \dots\wedge u_i\,,.
    $$
    Moreover $\widehat{v}_n=u_1\wedge u_2\wedge \dots\wedge u_n$ is the highest weight vector of $\bigwedge^nL(\omega_1)$.

    A set of Chevalley generators for $\wdd$ is obtained as follows ($i =1,\dots,n-1$):
    \begin{align*}
    &
    e_0=\frac{\epsilon_n}2(F_{n+1,1}-2G_{n+2,1})t\,,
    &&
    e_{i}=\widetilde e_{i}
    \,,
    &&
    e_n=\widetilde{e}_{n}+\widetilde{e}_{n+1}\,,
    \\
    &
    h_0=-2F_{11}+2c\,,
    &&
    h_{i}=\widetilde{h}_i
    \,,
    &&
    h_n=\widetilde{h}_n+\widetilde{h}_{n+1}
    \,,
    \\
    &
    f_0=\epsilon_n^{-1}(2F_{1,n+1}-G_{1,n+2})t^{-1}\,,
    &&
    f_{i}=\widetilde{f}_i
    \,,
    &&
    f_n=\widetilde{f}_n+\widetilde{f}_{n+1}
    \,,
    \end{align*}
    where $\epsilon_n=1$ if $n$ is even and $\epsilon_n=\sqrt{-1}$ if $n$ is odd.
\begin{remark}
Notice that if $n$ is odd then $e_0$ is not real in $V^{(1)}$, even though the spectrum is invariant under complex conjugation. The appearence of a non-real matrix $\Lambda$ is unavoidable: in fact, as shown below, we have $V^{(n)}=L(\omega_n)_{-\frac14}$ , and $\Lambda$  cannot have a real maximal eigenvalue in this representation if $e_0$ is a real matrix in the representation $L(\omega_1)_0$.
    \end{remark}
Using the numbers $k_i$'s for $\widetilde{\mf g}$ given by Table \ref{table:ki}, we have that
$$V^{(i)}=\bigwedge^i L(\omega_1)_{\frac{c_i}{4}}\,,\quad i=1,\dots,n-1\,,$$
where $c_i$= 1 if $i$ is even and  $c_i=0$ if $i$ is odd.
We also define $U^{(n)}=\bigwedge^n L(\omega_1)_{c_n}$.
Moreover, we have that
$$
V^{(n)}=L(\omega_{n})_{d_n} \qquad\text{and}\qquad
V^{(n+1)}=L(\omega_{n+1})_{d_n}\,,
$$
where $d_n=\frac12$ if $n$ is even and $d_n=-\frac14$ if $n$ is odd.
In particular, $V^{(1)}=\mb C^{2n+2}$.
The matrix $R_1$ that realizes the isomorphism $R_1: V^{(1)} \to V^{(1)}_{\frac12}$ given in \eqref{isoR_i}  is
\begin{equation}\label{eq:R1D}
R_1= \sum_{k=1}^n \big( E_{kk}+ E_{k'k'} \big) + \frac12 E_{n+1,n+2}+ 2 E_{n+2,n+1}.
\end{equation}
Recall that $\Lambda=e_0+e_1+\dots+e_n$.
    Let us set
    \begin{equation}\label{app:psi_1_D}
    \psi^{(1)}=\sum_{k=1}^{n}\big( u_k +u_{k'} \big)
    +\frac{\epsilon_n+1}{2}u_{n+1} + (1-\epsilon_n) u_{n+2}
    \,.
    \end{equation}
    Then, it is easy to check that $\Lambda \psi^{(1)}=\psi^{(1)}$.
    By equation \eqref{20151003:eq1}, it follows that
    $\psi^{(1)}_{j}\in V^{(1)}$ is an eigenvector
    with eigenvalue $\gamma^j$, for every $j=0,\dots,n$.
    Similarly $R_1(\psi^{(1)}_{\frac12-j})\in V^{(1)} $ is an eigenvector
    with eigenvalue $\gamma^{j-\frac12}$ for every $j=0,\dots,n$..
    Hence, for the representation $V^{(1)}$, the eigenvalue $\lambda^{(1)}=1$ is a maximal eigenvalue
    with corresponding eigenvector $\psi^{(1)}$.
    Furthermore, by Lemma \ref{lem:pisa13gen}, it follows that
    \begin{equation}\label{eq:lambda_Dneqn}
    \lambda^{(i)}=\gamma^{-\frac{i-1}4}\sum_{j=0}^{i-1}\gamma^{\frac j2}
    =\frac{\sin\left(\frac{i\pi}{2n+2}\right)}{\sin\left(\frac{\pi}{2n+2}\right)}\,,
    \end{equation}
is a maximal eigenvalue of $\Lambda$ in the representation $V^{(i)}$, for every $i=1,\dots,n-1$, and the corresponding eigenvector is easily checked to be
$$
\psi^{(i)}=\phi_{-\frac{i-1}4}^{(1)}\wedge\phi_{-\frac{i-3}4}^{(1)}
\wedge\dots\wedge \phi_{\frac{i-3}4}^{(1)}\wedge\phi_{\frac{i-1}4}^{(1)}\in V^{(i)},$$
where
$$
\phi^{(1)}_k=\left\{\begin{array}{ll}\psi^{(1)}_k\,, & k \equiv 0,\frac{1}{4} \bmod{1}\,,\\
R_1(\psi^{(1)}_k)\, & k\equiv\frac12,\frac34\bmod{1}\,.
\end{array}\right.
$$
The same formulas for $i=n$ yield the maximal eigenvalue $\widehat{\lambda}^{(n)}$ and the corresponding eigenvector $\widehat{\psi}^{(n)}$ of $U^{(n)}$.
    Since $\Ht(j)=j-1$, for $j=1,\dots,n$ (in fact $u_{j+1}=f_{j}u_{j}$ for $j \leq n-1$), we can use
    Lemma \ref{lem:vadermonde}, with $F(j)=j$, for $1\leq j\leq i$, and the explicit expression for $\psi^{(1)}$ given by \eqref{app:psi_1_D}
    and for $R_1$ given by \eqref{eq:R1D} to get
    \begin{equation}\label{app:D_4}
    \psi^{(i)}= \kappa_i v_i+ y_i\,,
    \qquad i=1,\dots, n-1
    \,,
    \end{equation}
    where $\kappa_i \neq 0$ and $y_i$ is a combination of lower weight vectors.
    Similarly $\widehat{\psi}^{(n)}=\kappa_n \widehat{v}_n + y_n $, with $\kappa_n \neq 0$.

    We note that the spectrum of $\Lambda$ in the representations
    $\bigwedge^{i}L(\omega_1)_{c_i}$, for $i=1,\dots,n-1$, coincides with the spectrum
    of the element $\Lambda$ of the Kac-Moody algebra $A^{(1)}_{2n+1}$
    in the representations $\bigwedge^iL(\omega_1)_{\frac{1+(-1)^i}{4}}$.
    This spectrum is well-known
    %($\Lambda$ is simply the circulant matrix in the $L(\omega_1)=\bb{C}^{2n+2}$ irreducible representation)
    and was also analyzed in our previous paper \cite{marava15}.
    Thus, if $i=1,\dots,n-1$, for any $0\leq \alpha < \frac{1}{4}$, the representation $ \bigwedge^{i}L(\omega_1)_{\frac{1+(-1)^i}{4}+\alpha}$
    has a unique eigenvalue with maximal real part, namely $\gamma^{\alpha}\mu^{(i)}$,
    where $\mu^{(i)}=\frac{\sin\left(\frac{i\pi}{2n+2}\right)}{\sin\left(\frac{\pi}{2n+2}\right)}$.
    Furthermore, the $\mu^{(i)}$'s satisfy the relation
    \begin{equation}\label{eq:mui}
    (\gamma^{\frac 14}+\gamma^{-\frac 14})\mu^{(i)}=\mu^{(i-1)}+\mu^{(i+1)} \,.
    \end{equation}
    While for $\alpha=\frac14$, the representation $ \bigwedge^{i}L(\omega_1)_{\frac{1+(-1)^i}{4}+\alpha}$
    has two eigenvalues with maximal real part, namely $\gamma^{\pm \alpha}\mu^{(i)}$.
    Let us now prove part i) of Theorem \ref{thm:genlamba} for $i=n$.
    Due to the decomposition \eqref{eq:embeddingDn_3} and the fact that $V^{(n+1)}_{\frac 12}\cong V^{(n)}$
    (using the morphism $R_{n+1}$), we have that
    \begin{equation}\label{eq:spinspindecomp}
    V^{(n)}_{-\frac14 + \alpha} \otimes V^{(n)}_{\frac14 +\alpha}
    \cong U^{(n)}_{\alpha}\oplus V^{(n-2)}_\alpha \oplus \dots
    \,.
    \end{equation}
    For $0\leq\alpha <\frac14$ the representation given in equation \eqref{eq:spinspindecomp} has one eigenvalue
    with real maximal part $\gamma^{\alpha} \mu^{(n)}$,
    where
    $\mu^{(n)}=\frac{\sin\left(\frac{n\pi}{2n+2}\right)}{\sin\left(\frac{\pi}{2n+2}\right)}$.
    Therefore, it follows that in the representation $V^{(n)}_{\pm \frac14 + \alpha}$ the element $\Lambda$
    has one maximal eigenvalue of the form $\gamma^{\alpha} y_{\pm \frac14}$,
    with $y_{\frac14}+y_{-\frac14}=\mu^{(n)}$.

    If $ \alpha=\frac14$, then in the representation$V^{(n)}_{-\frac14 + \alpha} \otimes V^{(n)}_{\frac14 +\alpha} $ the element
    $\Lambda$ has two complex conjugates eigenvalues
    with maximal real part, namely $\gamma^{\pm\alpha}\mu^{(n)}$.
    Since the spectra of $\Lambda$ in the representations $V^{(n)}$ and $V^{(n)}_{\frac12}$
    are invariant under complex conjugation \footnote{The matrix $\Lambda$ is in fact real in the spin representations with our choice of $\e_n$, $k_{n}$, $k_{n+1}$.},
    we can conclude that in one of the two representations $V^{(n)}$ and $V^{(n)}_{\frac12}$,
    $\Lambda$ has a maximal eigenvalue, while in the other representation, it has two complex conjugated eigenvalues with
    maximal real part.
    Therefore we can find a suitable normalization of $\Lambda$ such that it has a maximal eigenvalue in the representation $V^{(n)}$.
    In fact, if $\Lambda$ does not have a maximal eigenvalue in the representation $V^{(n)}$, then it acquires it after sending
    $\epsilon_n\to-\epsilon_n$ in the choice of Chevalley generators for $D_{n+1}^{(2)}$. However, this change in the choice of the
    Chevalley generators does not alter the spectrum of $\Lambda$ in the representations $V^{(i)}$, for $i=1,\dots,n-1$.
    Hence, we have that $y_{-\frac 14}\gamma^{\frac14} =\lambda^{(n)}$ is a real positive number and
    $\lambda^{(n)} = \frac{\mu^{(n)}}{\gamma^{\frac14}+\gamma^{-\frac14}}$. Using the identities (\ref{eq:mui}) together with
    $\mu^{(n)}=\mu^{(n+2)}$, we arrive at the identity
    \begin{equation}\label{eq:rellambdanDn}
     \lambda^{(n)}=\frac{\lambda^{(n-1)}}{\gamma^{\frac 12}+\gamma^{-\frac 12}}
    \end{equation}
    Note that, in this particular case, equation \eqref{eq:genlambdarelations} reads
    \begin{align}
    \begin{split}\label{eq:Dn+1lambdarel}
    &(\gamma^{-\frac14}+\gamma^{\frac{1}{4}})\lambda^{(i)}=\lambda^{(i-1)}+\lambda^{(i+1)}\,,
    \qquad i =1,\dots,n -2 \,,
    \\
    &(\gamma^{-\frac14}+\gamma^{\frac{1}{4}})\lambda^{(n-1)}=\lambda^{(n-2)} +
    (\gamma^{-\frac14}+\gamma^{\frac{1}{4}})\lambda^{(n)}  \, ,
    \\
    &(\gamma^{-\frac12}+\gamma^{\frac12})\lambda^{(n)}=\lambda^{(n-1)}
    \,,
    \end{split}
    \end{align}
    where we set $\lambda^{(0)}=0$.
    Using equations \eqref{eq:lambda_Dneqn} and \eqref{eq:rellambdanDn} one can check directly that
    (\ref{eq:Dn+1lambdarel}) is satisfied thus proving part ii).

    Part iii) and iv) in the case $i \leq n-2$ can be proved in the same way as done for the analogue statement
    for the Lie algebra $B_n$ in the case $i \neq n$.
    The details of the proof are therefore omitted.

    Let us conclude the proof of part iii).
    For $i=n-1$ it follows from the fact that $V^{(n)}_{-\frac14} \otimes V^{(n)}_{\frac14}$ has a
    maximal eigenvalue as it was proved in the proof of part i).
    For the case $i=n$ we need to show that $V^{(n-1)}$
    has a maximal eigenvalue which also has already been proved in part i).

    Now we can conclude the proof of part iv). For $i=n-1$ it follows by noticing that $\bigwedge V^{(n-1)}_{\frac14}$ is embedded in
    $V^{(n-2)} \otimes U^{(n)}$, cf. \eqref{eq:embeddingDn_2}. Similarly, for $i=n$, it follows from the decomposition
    $\bigwedge V^{(n-1)}_{\frac12} \cong V^{(n-1)} \oplus V^{(n-5)} \oplus \dots$,
    cf. \eqref{eq:embeddingDn_4}.

    We are then left to show part v).
    The proof is analogous to the one for the $B_n$ case.
    Using the embeddings (\ref{eq:embeddingDn}) and the decompositions (\ref{eq:embeddingDn_3},\ref{eq:embeddingDn_4}),
    it is sufficient to show that the vector
    $\psi_{\otimes}^{(i)}\in\bigotimes_{j\in I}\bigotimes_{\ell=0}^{B_{ij}-1}V^{(j)}_{\frac{B_{ij}-1-2\ell}{2r}} $ has
    a non-trivial component lying in the highest weight subrepresentation generated by the highest weight vector
    $ \bigotimes_{j\in I}\bigotimes_{\ell=0}^{B_{ij}-1}v_j$.
    In turn, this follows from the fact that $\psi^{(i)}$ has a non-trivial component in the highest weight vector of $V^{(i)}$.
    For $i \neq n$, this follows from the decomposition (\ref{app:D_4}).
    The case $i=n$ is proved using the decomposition (\ref{eq:spinspindecomp}) for $\alpha=0$ as follows:
    Note that, from (\ref{app:D_4}), it follows that the vector $\widehat{\psi}^{(n)} \in U^{(n)}$
    has a non-trivial component in the highest weight vector $\widehat{v}_n=u_1 \wedge \dots \wedge u_n$ which is the image of $v_{n} \otimes v_{n}$
    under the decomposition \eqref{eq:spinspindecomp}. We have seen in the proof of part i) that  $\widehat{\psi}^{(n)}$ is
    identified under the decomposition \eqref{eq:spinspindecomp} with the vector $\psi^{(n)}_{-\frac14 } \otimes \psi^{(n)}_{\frac14}$. Hence
    $\psi^{(n)}_{-\frac14 } $ must have a non-trivial component in the highest weight vector $v_n$ and, since $h$ acts diagonally,
    $\psi^{(n)}$ has a non-trivial component in the vector $v_n$.

    %%%
    \subsection{Proof of Theorem \ref{thm:genlamba} for $\mf g$ of type $G_2$}\label{sec:proof_G2}
    The dual Coxeter number of $\mf g^{(1)}$ is $h^\vee=4$
    thus we set $\gamma=e^{\frac{\pi i}2}$.
    Recall that in this case we have ${}^L\mf g^{(1)}=D_{4}^{(3)}$.

    Using the same notation of Section \ref{sec:proof_Cn} with $n=3$, the
    Chevalley generators of $\widetilde{\mf g}=D_4$ are:
    \begin{align*}
    &\widetilde{f}_i=F_{i+1,i}\,,
    &&\widetilde{h}_i=F_{ii}-F_{i+1,i+1}\,,
    &&\widetilde{e}_i=F_{i,i+1}\,,
    &i=1,2,3\,,
    \\
    &\widetilde{f_4}=2G_{53}\,,
    &&\widetilde{h}_4=F_{33}+F_{44}\,,
    &&\widetilde{e}_4=\frac12G_{35}
    \,.
    \end{align*}
    It is well-known that the representation $L(\omega_1)$ is given by the natural
    action of $\wg$ on $L(\omega_1)=\mb C^{8}$.
    Moreover we have that
    $$
    L(\omega_2) \cong \bigwedge^2L(\omega_1)\,,
    \qquad
    L(\omega_3)\cong L(\omega_1)^{\sigma^2}\,,
    \qquad
    L(\omega_4)\cong L(\omega_1)^{\sigma} \,,
    $$
    where $\sigma$ is the Dynkin diagram automorphism of $D_4$ defined in Table \ref{table:graphs}.
    A set of Chevalley generators of $D_4^{(3)}$ is:
    \begin{align*}
    &e_0= \kappa \left(e^{-i \frac{\pi}{3}} F_{41}+ 2G_{62}+2 e^{i\frac\pi3}G_{51}\right)t\,,
    &&e_1=\widetilde e_1+\widetilde e_3+\widetilde e_4\,,
    &&e_2=\widetilde{e}_2\,,
    \\
    &h_0=-2F_{11}-F_{22}-F_{33}+3c\,,
    &&h_1=\widetilde h_1+\widetilde h_3+\widetilde h_4\,,
    &&h_2=\widetilde{h}_2\,,
    \\
    &f_0= \frac{1}{\kappa}\left(e^{i \frac{\pi}{3}} F_{14}+ \frac12 G_{26}+\frac12 e^{-i\frac\pi3}G_{15}\right)t^{-1}\,,
    &&f_1=\widetilde f_1+\widetilde f_3+\widetilde f_4\,,
    &&f_2=\widetilde{f}_2\,.
    \end{align*}
    with $\kappa=\frac{1}{3+2 \sqrt{3}}$.
    Using the definitions of the $k_i$'s in Table \ref{table:ki} we have
    \begin{equation*}
     V^{(1)}=L(\omega_1)_0
     \, ,\qquad
     V^{(2)}=L(\omega_2)_{\frac12}\,.
    \end{equation*}
    The characteristic polynomial of the matrix $\Lambda$ in the representation $V^{(1)}$ is
    $p_1(x)=\left(x^4-4 \sqrt{3}+7\right) \left(x^4-1\right)$. Therefore the
    the matrix $\Lambda$ has maximal eigenvalue $\lambda^{(1)}=1$.
    The vector $\psi^{(1)}$ can be explicitly computed and its components in the standard basis of $\bb{C}^8$ are all non-zero.
    Hence, the proof of Theorem \ref{thm:genlamba} in this case
    follows the same lines as for the case of the non simply-laced Lie algebra of type $C_n$ in Section \ref{sec:proof_Cn}.
    We omit the details. However we plot in Figure \ref{fig:d43} the spectrum of $\Lambda$ in $V^{(1)}$ and $V^{(2)}$. Remarkably, even though the specturm in $V^{(1)}$
    has a $\bb{Z}/4\mb Z$ invariance, the spectrum of $V^{(2)}$ has a $\bb{Z}/12\mb Z$ invariance, as predicted by the theory. This reflects in the
    special form of the characteristic polynomial $p_1(x)$ of $\Lambda$ in $V^{(1)}$.

    \begin{figure}[htpb]
    \label{fig:d43}
      \centering
    \includegraphics[width=8cm]{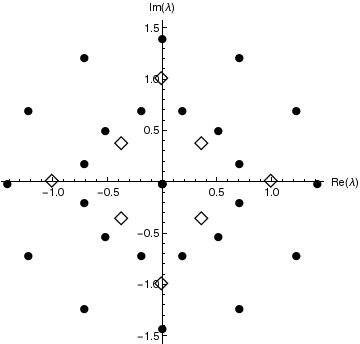}
    \caption{The spectrum of $\Lambda$ for the algebra $D_4^{(3)}$ in the represenations $V^{(1)}$ and $V^{(2)}$.}
    \end{figure}
    %

    %%%
    \subsection{Proof of Theorem \ref{thm:genlamba} for $\mf g$ of type $F_4$}\label{sec:proof_F4}
    Recall that in this case we have ${}^L\mf g^{(1)}=E_{6}^{(2)}$.

    In order to study the algebra $E_6^{(2)}$ we use the realization of $E_6$ proposed in \cite{howlett01} by $27 \times 27$ matrices, coinciding
    with the representation $L(\omega_1)$.
    It is well-known from representation theory \cite{FH91}, that the representation $L(\omega_6)$ is the dual representation of $L(\omega_1)$.
    Moreover, we have the following isomorphisms of representations of the Lie algebra $E_6$:
    $$
    L(\omega_2)=\bigwedge^2 L(\omega_1)\,,
    \qquad
    L(\omega_3)=\bigwedge^3 L(\omega_1) \cong \bigwedge^3L(\omega_6)\,,
    \qquad
    L(\omega_5)=\bigwedge^2L(\omega_6)
    \,.
    $$
    Finally, $L(\omega_{4})$ is the adjoint representation and using standard tools in representation theory we compute
    \begin{equation}\label{eq:e6embedding}
    \bigwedge^2 L(\omega_4)\cong L(\omega_3) \oplus L(\omega_4)
    \,.
    \end{equation}
    Using the numbers $k_i$, $i=1,\dots,6$, for the Lie algebra $E_6$ given by Table \ref{table:ki}, we have
    \begin{equation}\label{20151010:repE6}
    \begin{array}{ccc}
    V^{(1)}=L(\omega_{1})\,,
    &
    V^{(2)}=L(\omega_{2})_{\frac12}\,,
    &
    V^{(3)}=L(\omega_3)\,,
    \\
    V^{(4)}=L(\omega_4)_{\frac14}\,,
    &
    V^{(5)}=L(\omega_{5})_{\frac12}\,,
    &
    V^{(6)}=L(\omega_{6})\,.
    \end{array}
    \end{equation}
    Following \cite{kac90} we can compute a set of Chevalley generators of the Kac-Moody algebra $E_6^{(2)}$ and then
    the characteristic polynomial of $\Lambda$ in the representations listed in \eqref{20151010:repE6}.
    The characteristic polynomials $p_1(x)$ and $p_4(x)$ of $\Lambda$ in the representations $V^{(1)}$ and $V^{(4)}$ are
    $$%\begin{align}
    %&
    p_1(x)= -x^{27}+168 x^{18}+636 x^9+8
    % \\
    $$%&
    and
    $$
    p_4(x)= (x^{72}-2709288 x^{54}-7822776528 x^{36}+250804880064 x^{18}-3673320192 )x^6\,.
    $$%\end{align}
    Note that all the characteristic polynomials of $\Lambda$ in the remaining representations in \eqref{20151010:repE6} can be computed from $p_1(x)$.
    Therefore the eigenvalues in all the representations in \eqref{20151010:repE6} can be computed finding the
    roots of a third and fourth order polynomial.
    We let the reader verify that in any representation $V^{(i)}$ listed in \eqref{20151010:repE6}
    there exists a maximal eigenvalue $\lambda^{(i)}$, and that the $\lambda^{(i)}$'s
    satisfy the relations (\ref{eq:genlambdarelations}). Moreover, in the weight vectors basis proposed in \cite{howlett01},
    the coefficients of the eigenvector $\psi^{(1)}$ are all strictly positive.
    Using these facts and the decomposition (\ref{eq:e6embedding}), the proof of Theorem \ref{thm:genlamba}
    follows using the same method outlined for the cases of the non simply-laced Lie algebras $B_n$ and $C_n$ in Sections
    \ref{sec:proof_Bn} and \ref{sec:proof_Cn} respectively.
    We omit the details.

    %%%%%%%%%%%%%%%%%%%%%%%%%%%%%%%%%%%%%%%%%%

%%%%%%%%%%%%%%%%%%%%%%%%%%%%%%%%%%%%%%%%%%%%%%%%%%%%%%%%%%%%%%%%%%
\section{The \texorpdfstring{$Q$}{Q}-system (\texorpdfstring{$\mf g$}{g}-Bethe Ansatz)}\label{sec:qsystem}
Following the construction in the simply-laced case \cite{marava15}, 
in order to obtain the $Q$-system (Bethe Ansatz) for the algebra $\mf g$ we consider the monodromy around the Fuchsian singularity $x=0$
of solutions to the linear differential equation \eqref{20141021:eq1} obtained applying the connection \eqref{20141020:eq1} to an
evaluation representation of ${}^L\mf g^{(1)}$. 
More precisely, for $i\in I$ we define a function $Q^{(i)}(E)$ as the coefficient of the most singular part in the expansion around
$x=0$ of the fundamental solution $\Psi^{(i)}$ in the representation $V^{(i)}$. In the case $A_1^{(1)}$, the function $Q^{(1)}$ thus obtained corresponds
to the spectral determinant of a Schr\"odinger operator. Quite naturally, the functions $Q^{(i)}$, $i\in I$, will be called below
\emph{generalized spectral determinants}.

We restrict in this section to the case when $M h^\vee\in\bb{Z}_{+}$. From the general theory of Fuchsian singularities of linear ODEs,
the singular behaviour at $x=0$  depends on the spectrum of the element $\ell\in \wh_0$.
Therefore, we consider a dominant weight $\omega\in P^+$ of $\wg$, the related highest weight representation $L(\omega)$,
and the set   $P_\omega\subset P$ of weights appearing in the weight space decomposition of $L(\omega)$. Then,
we describe some properties of the spectrum of the element $\ell\in \wh_0\subset \wh$ when acting on  $L(\omega)$. The eigenvalues of
$\ell$ are of the form $\lambda(\ell)$ with $\lambda\in P_\omega$ a weight of $L(\omega)$, and a generic choice of $\ell\in\wh_0$ leads
to non resonant eigenvalues, namely $\lambda(\ell)-\lambda'(\ell)\notin\bb Z$ if $\lambda\neq\lambda'$. We require $\ell$ to be generic
in the above sense as well as regular with respect to the set of roots $R$ of $\wg$, namely to induce a decomposition of the form
$R=R^+_\ell\cup R^-_\ell$, where
$$
R^+_\ell=\lbrace \alpha\in R\mid \Re \alpha (\ell)>0 \rbrace\,,
\qquad
R^-_\ell= \lbrace \alpha\in R\mid\Re\alpha(l)<0 \rbrace
\,.
$$
Associated  to such a $\ell\in\wh_0$ there are a Weyl chamber $\mf{W}_\ell$, an element $w_\ell$ of the Weyl group of $\wg$, which maps  the fundamental
Weyl chamber relative to $\Delta\subset R$ into $\mf{W}_\ell$, and a set of simple roots
$\Delta_\ell=\lbrace w_\ell(\alpha_i)\mid \alpha_i\in\Delta, i\in \widetilde{I} \rbrace$. 
\begin{remark}\label{rem:weyl}
For any $\ell\in\wh$ the outer automorphism $\sigma$ acts on the Weyl group of $\wg$ by conjugation:
$w_{\sigma(\ell)}=\sigma w_\ell \sigma^{-1}$.
Since by definition  $\ell\in\wh_0$ is fixed by $\sigma$, it follows that the element $w_\ell$ obtained above
from a generic $\ell\in \wh_0$ commutes with $\sigma$, and therefore it belongs to the Weyl group of $\wg_0$. Indeed,
the latter can be charaterized as the subgroup of elements of the Weyl group of $\wg$ commuting with $\sigma$.
\end{remark}
In \cite{marava15} we proved the following result 
\begin{proposition}\label{prop:maximalweight}
Let $\ell \in \wh_0$ be a generic element, and let $w_\ell$ be the associated element of the Weyl group.
Then the weight $w_\ell(\omega)\in P_\omega$, $\omega\in P^+$, has multiplicity one in $L(\omega)$ and $\Re w_\ell(\lambda)(\ell) > \Re \lambda(\ell)$
for any weight $\lambda \in P_{\omega}$, $\lambda\neq w_\ell(\omega)$.
In the case of a fundamental weight $\omega=\omega_i$, $i\in I$, of $\wg$,
the weight $ w_\ell(\omega_i-\alpha_i)\in P_{\omega_i}$
has multiplicity one and
$\Re w_\ell(\omega_i)(\ell) > \Re w_\ell(\omega_i-\alpha_i)(\ell) > \Re \lambda(\ell),$
for any $\lambda \in P_{\omega_i}$, $\lambda\neq w_\ell(\omega_i),w_\ell(\omega_i-\alpha_i)$.
\end{proposition}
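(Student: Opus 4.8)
The plan is to transport the whole weight picture by $w_\ell$ into the fundamental Weyl chamber, where the order induced on the weights of $L(\omega)$ by the pairing with a regular element is just the dominance order, and then to read off the top two weights combinatorially. Since the statement is invariant under applying $w_\ell^{-1}$, it suffices to show that $\omega$ (resp.\ $\omega_i$ and $\omega_i-\alpha_i$ when $\omega=\omega_i$ is fundamental) is (resp.\ are) the weight(s) of $L(\omega)$ maximizing the real part of the pairing with $\ell':=w_\ell^{-1}(\ell)$, each of multiplicity one.

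First I would record the sign input. Because $\ell\in\wh_0$ is generic and regular, the decomposition $R=R^+_\ell\cup R^-_\ell$ holds and $w_\ell$ is by construction the Weyl group element carrying $\Delta$ onto the simple roots $\Delta_\ell$ of $R^+_\ell$; hence $w_\ell(\alpha_i)\in R^+_\ell$, that is $\Re\alpha_i(\ell')=\Re(w_\ell(\alpha_i))(\ell)>0$ for every $i\in\widetilde I$. Moreover $w_\ell$ permutes $P_\omega$ and preserves weight multiplicities, and $(w_\ell(\mu))(\ell)=\mu(\ell')$, which legitimizes the reduction to $\ell'$.

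Next I would use the standard description of $P_\omega$: every $\mu\in P_\omega$ has the form $\mu=\omega-\sum_{i\in\widetilde I}c_i\alpha_i$ with $c_i\in\bb Z_{\geq0}$. Then $\Re\omega(\ell')-\Re\mu(\ell')=\sum_i c_i\,\Re\alpha_i(\ell')\geq0$, with equality iff all $c_i=0$, i.e.\ iff $\mu=\omega$; transporting back, $w_\ell(\omega)$ is the unique weight of maximal real pairing with $\ell$ and has multiplicity one, being the $W$-image of the highest weight. For $\omega=\omega_i$ I would add: since $\omega_i(\widetilde h_j)=\delta_{ij}$ one has $\widetilde f_j v_i=0$ for $j\neq i$ and $\widetilde f_i v_i\neq0$, so any nonzero monomial $\widetilde f_{j_1}\cdots\widetilde f_{j_k}v_i$ with $k\geq1$ must have $j_k=i$; hence every $\mu\neq\omega_i$ in $P_{\omega_i}$ satisfies $c_i\geq1$, the only such $\mu$ with $\sum_j c_j=1$ is $\mu=\omega_i-\alpha_i$, whose weight space is $\bb C\widetilde f_i v_i$ and is one-dimensional. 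For any $\mu\neq\omega_i,\omega_i-\alpha_i$ we then have $c_i\geq1$ and $\sum_j c_j\geq2$, so
\[
\Re\omega_i(\ell')-\Re\mu(\ell')=\Re\alpha_i(\ell')+(c_i-1)\Re\alpha_i(\ell')+\sum_{j\neq i}c_j\,\Re\alpha_j(\ell')>\Re\alpha_i(\ell'),
\]
i.e.\ $\Re\mu(\ell')<\Re(\omega_i-\alpha_i)(\ell')$; transporting back by $w_\ell$ gives the full chain of strict inequalities together with multiplicity one for $w_\ell(\omega_i-\alpha_i)$.

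I expect the only genuinely delicate point to be the bookkeeping in this last estimate: ruling out that a weight far down the poset overtakes $\omega_i-\alpha_i$ when $\Re\alpha_i(\ell')$ is large compared with the other $\Re\alpha_j(\ell')$. This is exactly what the combination $c_i\geq1$ and $\sum_j c_j\geq2$ prevents, so I would take care to derive $c_i\geq1$ from $\widetilde f_j v_i=0$ ($j\neq i$) rather than treat it as evident; the rest is the conventional care with the duality between $\wh_0$ and its dual (so that $w_\ell$ acts on weights while $\ell$ is acted on by $w_\ell^{-1}$) and with the sign convention in $R=R^+_\ell\cup R^-_\ell$. The analogous statements were proved in \cite{marava15}.
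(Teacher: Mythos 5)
Your proof is correct: the paper itself offers no argument for this proposition (it simply cites \cite{marava15}), and your reduction to $\ell'=w_\ell^{-1}(\ell)$ — using $\Re\alpha_j(\ell')>0$ for all $j$, the expansion $\mu=\omega-\sum_j c_j\alpha_j$, and the observation that every weight of $L(\omega_i)$ other than $\omega_i$ has $c_i\geq 1$ because $\widetilde f_j v_i=0$ for $j\neq i$ — is precisely the standard argument establishing the result there. The only remark worth adding is that the inequality as printed in the statement, $\Re w_\ell(\lambda)(\ell)>\Re\lambda(\ell)$, is a typo for $\Re w_\ell(\omega)(\ell)>\Re\lambda(\ell)$, which is the version you (correctly) prove.
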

Let  now $\chi^{(i)}$, $\varphi^{(i)}\in L(\omega_i)$ be weight vectors corresponding to the weights $w_\ell(\omega_i)$ and $w_\ell(\omega_i-\alpha_i)$ respectively. Then, the elements
$$\chi^{(i)}\wedge \varphi^{(i)}\in \bigwedge^2 L(\omega_i), \qquad \otimes_{j\in I}\otimes_{\ell=0}^{B_{ij}-1}\chi^{(j)}\in \bigotimes_{j\in I}\bigotimes_{\ell=0}^{B_{ij}-1}L(\omega_{j})$$
are weight vectors with the same weight $\eta_i=\sum_{j\in\widetilde{I}}\widetilde{B}_{ij}\omega_j\in P^+$. We normalize $\chi^{(i)}$, $\varphi^{(i)}$, $i\in I$, in such a way that they are identified through the morphism \eqref{eq:laverami}:
\begin{equation}
m_i(\chi^{(i)}\wedge \varphi^{(i)})=\otimes_{j\in I}\otimes_{\ell=0}^{B_{ij}-1}\chi^{(j)}.
\end{equation}
As a consequence of the above results, and since $M h^\vee\in \bb{Z}_+$, from the general theory of Fuchsian singularities of linear ODEs it follows that for any evaluation representation $V^{(i)}_k$, $i\in I$ and $k\in\bb C$, there exist normalized solutions $\chi^{(i)}_k(x,E)$ and $\varphi^{(i)}_k(x,E)$ to equation \eqref{20141021:eq1}, which are eigenvectors of the monodromy matrix and have the most singular behavior at $x=0$, with asymptotic expansion:
\begin{align*}
\chi^{(i)}_k(x,E)&= x^{-w_\ell(\omega_i)(\ell)}(\chi^{(i)}+O(x)),\\
\varphi^{(i)}_k(x,E)&=x^{-w_\ell(\omega_i-\alpha_i)(\ell)}(\varphi^{(i)}+O(x)).
\end{align*}
In addition, $\chi^{(i)}_k(x,E)$ and $\varphi^{(i)}_k(x,E)$ are entire functions of $E$. Due to \eqref{20150108:eq8}, and as a result of a comparison of the asymptotic behaviors we obtain
\begin{align}
\begin{split}
& \omega^{-k h} \chi_{k'}^{(i)} (\omega^k x,\Omega^k E)
= \omega^{-k w_\ell(\omega_i)(\ell+h)} \chi^{(i)}(x,E)_{k+k'} \\ \label{eq:chik}
& \omega^{-k h} \phi_{k'}^{(i)}(\omega^k x,\Omega^k E)
= \omega^{-k w_\ell(\omega_i-\alpha_i)(\ell+h)} \phi^{(i)}(x,E)_{k+k'}
\,,
\end{split}
\end{align}
for every $k,k'\in\mb C$.
Since for generic $\ell\in\widetilde{\mf h}_0$,
the eigenvalues $-w_\ell(\omega_i)(\ell)$, and $-w_\ell(\omega_i-\alpha_i)(\ell)$
are non-resonant, we can uniquely define two functions $Q^{(i)}(E;\ell)$ and $\widetilde{Q}^{(i)}(E;\ell)$
as the coefficients of the following expansion in the invariant subspaces of the monodromy matrix,
\begin{equation}\label{psigrande}
\Psi^{(i)}(x,E,\ell)
=Q^{(i)}(E;\ell)\chi^{(i)}(x,E)+\widetilde{Q}^{(i)}(E;\ell) \phi^{(i)}(x,E)+v^{(i)}(x,E)
\,,
\end{equation}
where, $v^{(i)}(x,E)$ belongs to an invariant subspace of lower weights vectors.
We call $Q^{(i)}(E;\ell)$ and $\widetilde Q^{(i)}(E;\ell)$ the generalized
spectral determinants of the equation \eqref{20141020:eq1}.
The $\Psi$-system \eqref{eq:031201} implies the following non-trivial functional
relations among the generalized spectral determinants.
\begin{proposition}\label{thm:QQtilde}
Let $\ell\in\wh_0$ be  generic. Then, the spectral determinants $Q^{(i)}(E;\ell)$ and $\widetilde{Q}^{(i)}(E;\ell)$  are entire functions of  $E$ and satisfy the following  $Q\widetilde{Q}$-system:
\begin{align}
\begin{split}\label{QQsystem}
\prod_{j\in I}\prod_{\ell=0}^{B_{ij}-1}Q^{(j)}(\Omega^{\frac{B_{ij}-1-2\ell}{2r}}E)
&=\omega^{\frac{D_i\theta_i}{2}}Q^{(i)}(\Omega^{\frac{D_i}{2}}E)\widetilde{Q}^{(i)}(\Omega^{\frac{-D_i}{2}}E)\\
&-\omega^{-\frac{D_i\theta_i}{2}}Q^{(i)}(\Omega^{-\frac{D_i}{2}}E)\widetilde{Q}^{(i)}(\Omega^{\frac{D_i}{2}}E)
\,,
\end{split}
\end{align}
where $\theta_i=w_\ell(\alpha_i)(\ell+h)$.
\end{proposition}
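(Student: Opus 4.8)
The plan is to derive the $Q\widetilde{Q}$-system by expanding the $\Psi$-system \eqref{eq:031201} near the Fuchsian singularity $x=0$ and equating the coefficients along the dominant (most singular) invariant subspace of the monodromy on the two sides, exactly as in the simply-laced case \cite{marava15}. First I would dispose of entirety: by Theorem \ref{thm:asymptotic} the fundamental solution $\Psi^{(i)}(x,E)$ is entire in $E$; since $Mh^\vee\in\bb Z_+$ and $\ell\in\wh_0$ is generic, the monodromy eigenvalues $e^{-2\pi\sqrt{-1}\,\lambda(\ell)}$, $\lambda\in P_{\omega_i}$, are pairwise distinct (non-resonance), so the decomposition \eqref{psigrande} is obtained by applying the spectral projectors of the monodromy, which depend analytically on $E$; as $\chi^{(i)}(x,E)$ and $\phi^{(i)}(x,E)$ are themselves entire in $E$, the coefficients $Q^{(i)}(E;\ell)$ and $\widetilde{Q}^{(i)}(E;\ell)$ are entire too.

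Next I would record the local behaviour at $x=0$. Combining \eqref{psigrande}, \eqref{20150108:eq8} and \eqref{eq:chik}, each twisted solution satisfies near $x=0$
\begin{equation*}
\Psi^{(j)}_k(x,E)=\omega^{-k\,w_\ell(\omega_j)(\ell+h)}Q^{(j)}(\Omega^kE)\,\chi^{(j)}_k(x,E)+\omega^{-k\,w_\ell(\omega_j-\alpha_j)(\ell+h)}\widetilde{Q}^{(j)}(\Omega^kE)\,\phi^{(j)}_k(x,E)+(\text{lower-weight terms}),
\end{equation*}
the $\chi$-term being the most singular one by Proposition \ref{prop:maximalweight}. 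Inserting this with $k=\tfrac{B_{ij}-1-2\ell}{2r}$ into the right-hand side of \eqref{eq:031201} and keeping only the most singular contribution (the $\chi$-part of every factor), I would use the identity $\sum_{\ell=0}^{B_{ij}-1}(B_{ij}-1-2\ell)=0$ to check that the accompanying powers of $\omega$ multiply to $1$, so that the component of that side along the one-dimensional (by genericity) dominant invariant subspace spanned by $\bigotimes_{j\in I}\bigotimes_{\ell=0}^{B_{ij}-1}\chi^{(j)}_{\frac{B_{ij}-1-2\ell}{2r}}$ is precisely $\prod_{j\in I}\prod_{\ell=0}^{B_{ij}-1}Q^{(j)}(\Omega^{\frac{B_{ij}-1-2\ell}{2r}}E)$.

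For the left-hand side I would insert the same expansions of $\Psi^{(i)}_{\pm D_i/2}$. Since $R_i$ is an isomorphism of $\wgr$-modules it intertwines $\mc L$, commutes with the monodromy and preserves $\ell$-weight spaces, hence it carries $\chi^{(i)}_{-D_i/2},\phi^{(i)}_{-D_i/2}$ to scalar multiples of $\chi^{(i)}_{D_i/2},\phi^{(i)}_{D_i/2}$ (both scalars equal to $1$ when $D_i=1$, since $R_i$ is then the identity, and otherwise fixed by the chosen normalisations). Using $\chi\wedge\chi=\phi\wedge\phi=0$, the leading part of $R_i\bigl(\Psi^{(i)}_{-D_i/2}\bigr)\wedge\Psi^{(i)}_{D_i/2}$ is the antisymmetrised cross term along $\chi^{(i)}_{D_i/2}\wedge\phi^{(i)}_{D_i/2}$, with coefficient of the form $A_-B_+-B_-A_+$, where $A_\pm=\omega^{\mp\frac{D_i}{2}w_\ell(\omega_i)(\ell+h)}Q^{(i)}(\Omega^{\pm D_i/2}E)$ and $B_\pm=\omega^{\mp\frac{D_i}{2}w_\ell(\omega_i-\alpha_i)(\ell+h)}\widetilde{Q}^{(i)}(\Omega^{\pm D_i/2}E)$. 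Applying $m_i$, which by the normalisation fixed just before the Proposition maps $\chi^{(i)}_{D_i/2}\wedge\phi^{(i)}_{D_i/2}$ onto $\bigotimes_{j\in I}\bigotimes_{\ell=0}^{B_{ij}-1}\chi^{(j)}_{\frac{B_{ij}-1-2\ell}{2r}}$, and equating with the previous paragraph, the surviving twist is $\omega^{\pm\frac{D_i}{2}\bigl(w_\ell(\omega_i)-w_\ell(\omega_i-\alpha_i)\bigr)(\ell+h)}=\omega^{\pm\frac{D_i}{2}w_\ell(\alpha_i)(\ell+h)}=\omega^{\pm\frac{D_i\theta_i}{2}}$, which reproduces the right-hand side of \eqref{QQsystem}.

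The main obstacle is the bookkeeping in the last step: one must verify that the powers of $\omega$ coming from the action of $h$ on the weight vectors $\chi^{(i)},\phi^{(i)}$, the scalars produced by $R_i$, the sign of the antisymmetrisation of the wedge, and the normalisation of $m_i$ on the various twisted evaluation representations are mutually consistent and collapse to exactly $\omega^{\pm D_i\theta_i/2}$; and one must check that in every representation occurring in \eqref{eq:031201} the dominant invariant subspace of the monodromy is one-dimensional, which — as in \cite{marava15} — is where genericity of $\ell$ (Proposition \ref{prop:maximalweight}) is used. Apart from these points the argument is a direct transcription of the simply-laced derivation.
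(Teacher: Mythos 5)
Your proposal is correct and follows essentially the same route as the paper's proof: expand the twisted solutions $\Psi^{(j)}_k$ via \eqref{psigrande} and \eqref{eq:chik}, substitute into the $\Psi$-system \eqref{eq:031201}, and collect the coefficient of the dominant (weight-$\eta_i$) component on both sides, with $\theta_i=\gamma_i-\delta_i$ emerging from the twist factors. The extra details you make explicit (the vanishing of $\sum_{\ell}(B_{ij}-1-2\ell)$, the role of $R_i$ and of the antisymmetry of the wedge) are exactly the bookkeeping the paper's proof carries out or leaves implicit, and your spectral-projector argument for entirety is an acceptable variant of the paper's observation that $\Psi^{(i)},\chi^{(i)},\phi^{(i)}$ are entire in $E$.
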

\begin{proof}
Let $\ell$ be generic, so that the solutions $\chi^{(i)}$ and $\phi^{(i)}$ as defined above exist. Since equation \eqref{20141021:eq1} depends linearly on $E$, then we have that the functions $\Psi^{(i)}$,  $\chi^{(i)}$ and $\phi^{(i)}$, are entire with respect to the parameter $E$,  from which it follows that also $Q^{(i)}$ and $\widetilde{Q}^{(i)}$ are entire functions of $E$. In order to get the $Q\widetilde{Q}$-system, we note that the expansion \eqref{psigrande} implies the following one:
\begin{align*}
\psi^{(i)}_k(x,E)&=\omega^{-kH}\psi^{(i)}(\omega^kx,\Omega^kE)\\
&=Q^{(i)}(\Omega^kE)\omega^{-kH}\chi^{(i)}(\omega^kx,\Omega^kE)+\widetilde{Q}^{(i)}(\Omega^kE)\omega^{-kH}\varphi^{(i)}(\omega^kx,\Omega^kE)\!+\!\dots\\
&=Q^{(i)}(\Omega^kE)\omega^{-k\gamma_i}\chi^{(i)}_k(x,E)+\widetilde{Q}^{(i)}(\Omega^kE)\omega^{-k\delta_i}\varphi^{(i)}_k(x,E)\!+\!\dots ,
\end{align*}
where $\gamma_i=w_\ell(\omega_i)(\ell+h)$, $\delta_i=w_\ell(\omega_i-\alpha_i)(\ell+h)$ and the remaining unspecified terms are of lower weight. Substituting the above expansion into the $\Psi$-system \eqref{eq:031201}, we thus obtain:
\begin{gather*}
m_i\left(R_i\left(Q^{(i)}(\Omega^{-\frac{D_i}{2}}E)\omega^{\frac{D_i\gamma_i}{2}}\chi^{(i)}_{-\frac{D_i}{2}}(x,E)\right)
\wedge \widetilde{Q}^{(i)}(\Omega^\frac{D_i}{2}E)\omega^{-\frac{D_i\delta_i}{2}}\varphi^{(i)}_\frac{D_i}{2}(x,E)\right)\\
+m_i\left(R_i\left(\widetilde{Q}^{(i)}(\Omega^{-\frac{D_i}{2}}E)\omega^{\frac{D_i\delta_i}{2}}
\varphi^{(i)}_{-\frac{D_i}{2}}(x,E)\right)\wedge Q^{(i)}(\Omega^\frac{D_i}{2}E)\omega^{-\frac{D_i\gamma_i}{2}}\chi^{(i)}_\frac{D_i}{2}(x,E)\right)\\
=\bigotimes_{j\in I}\bigotimes_{\ell=0}^{B_{ij}-1} Q^{(j)}\left(\Omega^{\frac{B_{ij}-1-2\ell}{2r}}E\right)
\chi^{(j)}_{\frac{B_{ij}-1-2\ell}{2r}}(x,E) \omega^{-\gamma_j\frac{B_{ij}-1-2\ell}{2r}}+\dots\, .
\end{gather*}
Collecting on both sides above the coefficients of the highest weight vector, and introducing the quantities $\theta_i=\gamma_i-\delta_i=w_\ell(\alpha_i)(\ell+h)$, one arrives at the $Q\widetilde{Q}$-system \eqref{QQsystem}.
\end{proof}
In the simply-laced case the $Q$-system follows from the $Q\widetilde{Q}$-system after a one-line calculation. In the present case, some more work is needed.
\begin{lemma}\label{20151014:lem1}
For $i\in I$, we have $D_i\theta_i=\sum_{j\in I}\overline{C}_{ij}\beta_j$, with $\beta_j=w_\ell(\omega_j)(\ell+h)$.
\end{lemma}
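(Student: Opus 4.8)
The plan is to reduce the identity to the combinatorial relation \eqref{Crowsum} between the Cartan matrices $C$ and $\widetilde C$, after transferring everything onto $\wh_0$ by means of $w_\ell$.

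First I would record that $\theta_i=w_\ell(\alpha_i)(\ell+h)$ (as in Proposition \ref{thm:QQtilde}) and, by definition, $\beta_j=w_\ell(\omega_j)(\ell+h)$ for $j\in I$, where $\alpha_i$ is the $i$-th simple root and $\omega_j$ the $j$-th fundamental weight of $\wg$, and where the Weyl group acts on weights through $(w\mu)(v)=\mu(w^{-1}v)$. Both $\ell$ (by hypothesis) and $h$ (by its characterization in \eqref{eq:betajIntro}) lie in $\wh_0$, so $\ell+h\in\wh_0$. By Remark \ref{rem:weyl} the element $w_\ell$ belongs to the Weyl group of $\wg_0$ and hence maps $\wh_0$ onto itself; consequently $\xi:=w_\ell^{-1}(\ell+h)\in\wh_0$ and
\begin{equation*}
\theta_i=\alpha_i(\xi),\qquad \beta_j=\omega_j(\xi),\qquad j\in I.
\end{equation*}

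Next I would expand $\alpha_i=\sum_{k\in\widetilde I}\widetilde C_{ik}\,\omega_k$ in $\wh^{*}$ — valid since $\{\widetilde h_k\}_{k\in\widetilde I}$ and $\{\omega_k\}_{k\in\widetilde I}$ are dual bases and $\widetilde C$ is symmetric — evaluate on $\xi$, and group the sum according to the $\sigma$-orbits in $\widetilde I$. Since $\xi\in\wh_0\subset\wg_0$ is fixed by $\sigma$ and $\sigma(\widetilde h_k)=\widetilde h_{\sigma(k)}$, one has $\omega_{\sigma^\ell(j)}(\xi)=\omega_j(\xi)$ for every $\ell$; therefore, using \eqref{Crowsum},
\begin{equation*}
\theta_i=\alpha_i(\xi)=\sum_{j\in I}\Bigl(\sum_{\ell=0}^{\langle j\rangle-1}\widetilde C_{i\sigma^\ell(j)}\Bigr)\omega_j(\xi)=\sum_{j\in I}C_{ij}\,\beta_j.
\end{equation*}
Multiplying by $D_i$ and using $\overline C=DC$, i.e. $\overline C_{ij}=D_iC_{ij}$, yields $D_i\theta_i=\sum_{j\in I}\overline C_{ij}\,\beta_j$, as claimed.

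There is no genuinely hard step here; the only point requiring care is the interaction of $w_\ell$ with restriction to $\wh_0$, which is entirely handled by the observation — guaranteed by Remark \ref{rem:weyl} — that both $w_\ell$ and $w_\ell^{-1}$ preserve $\wh_0$, so that $\xi=w_\ell^{-1}(\ell+h)$ remains $\sigma$-invariant and the orbit-collapsing argument for the weights $\omega_k$ applies.
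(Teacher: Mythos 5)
Your proof is correct and follows essentially the same route as the paper's: expand $\alpha_i$ in the fundamental weights via the symmetric matrix $\widetilde C$, collapse the sum over $\sigma$-orbits using the $\sigma$-invariance of the relevant Cartan element together with \eqref{Crowsum}, and finish with $\overline C=DC$. The only cosmetic difference is that you compute $\theta_i$ directly after pulling $w_\ell^{-1}$ onto $\ell+h$ (making explicit the point, implicit in the paper, that $w_\ell^{-1}(\ell+h)\in\whz$), whereas the paper runs the computation in reverse starting from $\sum_i(\overline C^{-1})_{ji}D_i\theta_i$.
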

\begin{proof}
Recall that from Proposition \ref{thm:QQtilde} that we defined $\theta_i=w_\ell(\alpha_i)(\ell+h)$, for every $i\in I$.
For every $i\in \widetilde{I}$, we have that $\alpha_i=\sum_{j\in \widetilde{I}}\widetilde{C}_{ji}\omega_j$.
Moreover, by the construction \eqref{chevg0} of the Chevalley generators of ${}^L\mf g^{(1)}$, it follows that $\omega_{j}(a)=\omega_{\sigma(j)}(a)$ for every $a\in \widetilde{\mf h}_0$ and $j\in\widetilde{I}$.
Now assume that $D_i\theta_i=\sum_{j\in I}\overline{C}_{ij}\beta_j$, for $i\in I$ and certain $\beta_j$. Then, for $i\in I$, we have the following identities:
\begin{align*}
\beta_j &=\sum_{i\in I}\left(\overline{C}^{-1}\right)_{ji} D_i \theta_i=\sum_{i\in I}\left(\overline{C}^{-1}\right)_{ji} D_i\, w_\ell(\alpha_i)(\ell+h)\\
&=\sum_{i\in I}\left(\overline{C}^{-1}\right)_{ji} D_i \,w_\ell\Big(\sum_{k\in \widetilde{I}}\widetilde{C}_{ki}\omega_k(\ell+h)\Big)\\
&=\sum_{i\in I}\left(\overline{C}^{-1}\right)_{ji} D_i \,w_\ell\Big(\sum_{k\in I}\sum_{\ell=0}^{\langle k\rangle-1}\widetilde{C}_{\sigma^\ell(k)i}\,\omega_{\sigma^\ell(k)}(\ell+h)\Big)\\
&=\sum_{i\in I}\left(\overline{C}^{-1}\right)_{ji} D_i \,w_\ell\Big(\sum_{k\in I}\sum_{\ell=0}^{\langle k\rangle-1}\widetilde{C}_{i\sigma^\ell(k)}\,\omega_{k}(\ell+h)\Big)\\
&=w_\ell\left(\sum_{i\in I} \left(\overline{C}^{-1}\right)_{ji}D_i\sum_{k\in I} C_{ik}\omega_k(\ell+h)\right)\\
&=w_\ell\left(\sum_{i\in I} \left(\overline{C}^{-1}\right)_{ji}\sum_{k\in I} \overline{C}_{ik}\,\omega_k(\ell+h)\right)=w_\ell(\omega_j)(\ell+h).
\end{align*}
The lemma is proved.
\end{proof}
We are now in the position to state our fundamental result:
\begin{theorem}\label{thm:betheansatz} 
Let $E^\ast$ be a zero of $Q^{(i)}$, $i\in I$, such that
\begin{equation}\label{20151014:nonzero}
\prod_{j\in I}\prod_{\ell=0}^{B_{ij}-1}Q^{(j)}(\Omega^{\frac{B_{ij}-1-2\ell}{2r}-\frac{D_i}{2}}E^\ast)\neq 0.
\end{equation}
Then the following identity, known as $Q$-system (or $\mf g$-Bethe Ansatz), holds:
\begin{equation}\label{eq:betheansatz}
\prod_{j\in I}\Omega^{\overline{C}_{ij}\overline{\beta}_j}\frac{Q^{(j)}(\Omega^{\frac{\overline{C}_{ij}}{2}}E^\ast)}{Q^{(j)}(\Omega^{-\frac{\overline{C}_{ij}}{2}}E^\ast)}=-1,
\end{equation}
where $\overline{\beta}_j=\frac{1}{Mh^\vee}w_\ell(\omega_j)(\ell+h).$
\end{theorem}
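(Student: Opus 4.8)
The plan is to derive the $\mf g$-Bethe Ansatz \eqref{eq:betheansatz} from the $Q\widetilde Q$-system of Proposition \ref{thm:QQtilde}, by specializing the spectral parameter in \eqref{QQsystem} to the two points where one of the two terms on the right-hand side is forced to vanish, and then eliminating the auxiliary determinant $\widetilde Q^{(i)}$.

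Concretely, I would first substitute $E=\Omega^{D_i/2}E^\ast$ in \eqref{QQsystem}. Since $Q^{(i)}(E^\ast)=0$, the second term on the right-hand side drops and one obtains
\[
\prod_{j\in I}\prod_{\ell=0}^{B_{ij}-1}Q^{(j)}\big(\Omega^{\frac{B_{ij}-1-2\ell}{2r}+\frac{D_i}{2}}E^\ast\big)=\omega^{\frac{D_i\theta_i}{2}}\,Q^{(i)}(\Omega^{D_i}E^\ast)\,\widetilde Q^{(i)}(E^\ast)\,.
\]
Substituting instead $E=\Omega^{-D_i/2}E^\ast$ kills the first term, giving
\[
\prod_{j\in I}\prod_{\ell=0}^{B_{ij}-1}Q^{(j)}\big(\Omega^{\frac{B_{ij}-1-2\ell}{2r}-\frac{D_i}{2}}E^\ast\big)=-\omega^{-\frac{D_i\theta_i}{2}}\,Q^{(i)}(\Omega^{-D_i}E^\ast)\,\widetilde Q^{(i)}(E^\ast)\,.
\]
By hypothesis \eqref{20151014:nonzero} the left-hand side of the second identity is nonzero, hence $\widetilde Q^{(i)}(E^\ast)\neq0$ and $Q^{(i)}(\Omega^{-D_i}E^\ast)\neq0$; dividing the first identity by the second the factor $\widetilde Q^{(i)}(E^\ast)$ cancels, and passing to reciprocals and rearranging I get
\[
\frac{Q^{(i)}(\Omega^{D_i}E^\ast)}{Q^{(i)}(\Omega^{-D_i}E^\ast)}\;\prod_{j\in I}\prod_{\ell=0}^{B_{ij}-1}\frac{Q^{(j)}\big(\Omega^{\frac{B_{ij}-1-2\ell}{2r}-\frac{D_i}{2}}E^\ast\big)}{Q^{(j)}\big(\Omega^{\frac{B_{ij}-1-2\ell}{2r}+\frac{D_i}{2}}E^\ast\big)}=-\omega^{-D_i\theta_i}\,.
\]

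The rest is book-keeping. For the phase, Lemma \ref{20151014:lem1} gives $D_i\theta_i=\sum_{j\in I}\overline C_{ij}\beta_j$ with $\beta_j=w_\ell(\omega_j)(\ell+h)$; since $\Omega=\omega^{h^\vee M}$ and $\overline\beta_j=\beta_j/(Mh^\vee)$, this means $\omega^{-D_i\theta_i}=\prod_{j\in I}\Omega^{-\overline C_{ij}\overline\beta_j}$. For the $Q$-ratios, I would match each inner product over $\ell$ with a single Bethe shift $\pm\overline C_{ij}/2$ using Proposition \ref{prop:gfromgtilde} and Lemma \ref{lemma130515}. The prefactor $Q^{(i)}(\Omega^{D_i}E^\ast)/Q^{(i)}(\Omega^{-D_i}E^\ast)$ is precisely the $j=i$ term of \eqref{eq:betheansatz}, since $\overline C_{ii}=2D_i$ while $B_{ii}=0$ makes that inner product empty. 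For $j\neq i$: if the node $i$ is good, then $B_{ij}=\widetilde B_{ij}\in\{0,1\}$ and $\overline C_{ij}=-D_iB_{ij}$, so the product over $\ell$ has at most one factor and equals $Q^{(j)}(\Omega^{\overline C_{ij}/2}E^\ast)/Q^{(j)}(\Omega^{-\overline C_{ij}/2}E^\ast)$; if $i$ is the unique non-good node, then it is a $\sigma$-fixed point, so $D_i=\langle i\rangle/r=1/r$, $B_{ij}=\langle j\rangle$, $\overline C_{ij}=-B_{ij}/r$, and the numerator exponents $\big\{\tfrac{B_{ij}-2-2\ell}{2r}\big\}_{\ell=0}^{B_{ij}-1}$ and denominator exponents $\big\{\tfrac{B_{ij}-2\ell}{2r}\big\}_{\ell=0}^{B_{ij}-1}$ coincide except for the two extreme entries, and the common factors being cancelled are nonzero by \eqref{20151014:nonzero}; cancelling them collapses the product to $Q^{(j)}(\Omega^{-B_{ij}/(2r)}E^\ast)/Q^{(j)}(\Omega^{B_{ij}/(2r)}E^\ast)=Q^{(j)}(\Omega^{\overline C_{ij}/2}E^\ast)/Q^{(j)}(\Omega^{-\overline C_{ij}/2}E^\ast)$.

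Assembling these pieces, the last displayed identity becomes $\prod_{j\in I}Q^{(j)}(\Omega^{\overline C_{ij}/2}E^\ast)/Q^{(j)}(\Omega^{-\overline C_{ij}/2}E^\ast)=-\prod_{j\in I}\Omega^{-\overline C_{ij}\overline\beta_j}$, and moving the phase factors across gives exactly \eqref{eq:betheansatz}. The one genuinely delicate point is the cancellation in the inner product at the non-good node: there one must use the relations $\overline C=DC$, $B=2\mbbm{1}-C$ and $D_i=1/r$ simultaneously, and I would verify it case by case along the four non simply-laced families $B_n,C_n,F_4,G_2$, exactly as the spectrum of $\Lambda$ is treated in Section \ref{sec:proof_main}; everything else is a direct substitution, and the simply-laced case is already contained in \cite{marava15}.
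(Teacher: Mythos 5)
Your proposal is correct and follows essentially the same route as the paper: evaluate the $Q\widetilde{Q}$-system \eqref{QQsystem} at $E=\Omega^{\pm D_i/2}E^\ast$, use \eqref{20151014:nonzero} to eliminate $\widetilde{Q}^{(i)}(E^\ast)$, invoke Lemma \ref{20151014:lem1} for the phase, and reduce the inner products over $\ell$ to single ratios via the same telescoping cancellation (your good/non-good dichotomy is just the paper's split into $B_{ij}=1$ versus $B_{ij}>1$). The closing suggestion of a case-by-case check is unnecessary, since the telescoping argument you already gave is the paper's general proof.
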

\begin{proof}
Let $E^\ast$ be a zero of $Q^{(i)}$. Evaluating the $Q\widetilde{Q}$-system \eqref{QQsystem} at $E=\Omega^\frac{D_i}{2}E^\ast$ we get
\begin{equation}\label{eq:160715-2}
\omega^{\frac{D_i\theta_i}{2}}\widetilde{Q}^{(i)}(E^\ast)Q^{(i)}(\Omega^{D_i}E^\ast)=\prod_{j\in I}\prod_{\ell=0}^{B_{ij}-1}Q^{(j)}(\Omega^{\frac{B_{ij}-1-2\ell}{2r}+\frac{D_i}{2}}E^\ast),
\end{equation}
while evaluating at $E=\Omega^{-\frac{D_i}{2}}E^\ast$ we obtain
\begin{equation}\label{eq:160715-3}
-\omega^{-\frac{D_i\theta_i}{2}}Q^{(i)}(\Omega^{-D_i}E^\ast)\widetilde{Q}^{(i)}(E^\ast)=\prod_{j\in I}\prod_{\ell=0}^{B_{ij}-1}Q^{(j)}(\Omega^{\frac{B_{ij}-1-2\ell}{2r}-\frac{D_i}{2}}E^\ast).
\end{equation}
Due to the assumption \eqref{20151014:nonzero} we can get rid of the function $\widetilde{Q}^{(i)}$, obtaining the identity
\begin{equation}\label{eq:160715-4}
\omega^{D_i\theta_i}\frac{Q^{(i)}(\Omega^{D_i}E^\ast)}{Q^{(i)}(\Omega^{-D_i}E^\ast)}\prod_{j\in I}\prod_{\ell=0}^{B_{ij}-1}\frac{Q^{(j)}(\Omega^{\frac{B_{ij}-1-2\ell}{2r}-\frac{D_i}{2}}E^\ast)}{Q^{(j)}(\Omega^{\frac{B_{ij}-1-2\ell}{2r}+\frac{D_i}{2}}E^\ast)}=-1.
\end{equation}
To deduce the $Q$-system from the above, we consider three cases. First of all, recall that if $i\neq j$ then by definition $C_{ij}=-B_{ij}$. Therefore, if $i\neq j$ and $B_{ij}=1$, then we have $\overline{C}_{ij}=D_i C_{ij}=-D_i$, and
$$\prod_{\ell=0}^{B_{ij}-1}\frac{Q^{(j)}(\Omega^{\frac{B_{ij}-1-2\ell}{2r}-\frac{D_i}{2}}E^\ast)}{Q^{(j)}(\Omega^{\frac{B_{ij}-1-2\ell}{2r}+\frac{D_i}{2}}E^\ast)}=\frac{Q^{(j)}(\Omega^{\frac{\overline{C}_{ij}}{2}}E^\ast)}{Q^{(j)}(\Omega^{-\frac{\overline{C}_{ij}}{2}}E^\ast)}.$$
On the other hand, if $i\neq j$ and $B_{ij}>1$, then $D_i=\frac{1}{r}$ and $B_{ij}=r$, from which it follows that $\overline{C}_{ij}=D_i C_{ij}=-D_iB_{ij}=-1$, and so
\begin{gather*}
\prod_{\ell=0}^{B_{ij}-1}\frac{Q^{(j)}(\Omega^{\frac{B_{ij}-1-2\ell}{2r}-\frac{D_i}{2}}E^\ast)}{Q^{(j)}(\Omega^{\frac{B_{ij}-1-2\ell}{2r}+\frac{D_i}{2}}E^\ast)}=\frac{\prod_{\ell=0}^{r-1}Q^{(j)}(\Omega^{\frac{r-2-2\ell}{2r}}E^\ast)}{\prod_{\ell=0}^{r-1}Q^{(j)}(\Omega^{\frac{r-2\ell}{2r}}E^\ast)}\\
=\frac{\prod_{\ell=1}^{r}Q^{(j)}(\Omega^{\frac{r-2\ell}{2r}}E^\ast)}{\prod_{\ell=0}^{r-1}Q^{(j)}(\Omega^{\frac{r-2\ell}{2r}}E^\ast)}
= \frac{Q^{(j)}(\Omega^{-\frac{1}{2}}E^\ast)}{Q^{(j)}(\Omega^{\frac{1}{2}}E^\ast)}=  \frac{Q^{(j)}(\Omega^{\frac{\overline{C}_{ij}}{2}}E^\ast)}{Q^{(j)}(\Omega^{-\frac{\overline{C}_{ij}}{2}}E^\ast)}.
\end{gather*}
Finally, note that $\overline{C}_{ii}=D_i C_{ii}=2D_i$, so that
$$\frac{Q^{(i)}(\Omega^{D_i}E^\ast)}{Q^{(i)}(\Omega^{-D_i}E^\ast)}=\frac{Q^{(j)}(\Omega^{\frac{\overline{C}_{ii}}{2}}E^\ast)}{Q^{(j)}(\Omega^{-\frac{\overline{C}_{ii}}{2}}E^\ast)}.$$
As a consequence of these identities, and due to the relation $\omega=\Omega^\frac{1}{Mh^\vee}$, equation \eqref{eq:160715-4} can be written as
$$\Omega^\frac{{D_i\theta_i}}{Mh^\vee}\prod_{j\in I}\frac{Q^{(j)}(\Omega^{\frac{\overline{C}_{ij}}{2}}E^\ast)}{Q^{(j)}(\Omega^{-\frac{\overline{C}_{ij}}{2}}E^\ast)}=-1.
$$
The last step we need is to write the quantities $D_i\theta_i$, $i\in I$, in terms of the components of the symmetrized matrix $\overline{C}$. This is done in Lemma \ref{20151014:lem1}. The theorem is proved.
\end{proof}

\subsection{Action of the Weyl group on solutions to the Bethe Ansatz}\label{sub:weyl}

It is known that the Weyl group acts on the space of solutions to the Bethe Ansatz equations, see e.g. \cite{mukhin08}.
We now show how the Weyl group of $\wgz$ acts on the solutions to the Bethe Ansatz equation
(\ref{eq:betheansatz}) in our construction, and we prove that for generic $\ell \in \whz$ the action is free.
We identify the Weyl group of $\wgz$ as the subgroup of the Weyl group of $\wg$ whose elements are fixed by the action of
the outer automorphism $\sigma$; see Remark \ref{rem:weyl}.

The action is described as follows. 
Fixed an element $w$ of the Weyl group, we notice that in the representation $L(\omega_i)$ the weights
$w_{\ell}(w(\omega_i))$ and $w_{\ell}(w(\omega_i-\alpha_i))$ have multiplicity one, as follows from 
%as the weights $w_{\ell}(\omega_i)$ and $w_{\ell}(\omega_i-\alpha_i)$, see 
Proposition \ref{prop:maximalweight} and the fact the multiplicity of a weight is invariant under the Weyl group action.
If $\ell$ is generic in the sense of Proposition \ref{prop:maximalweight},
we can define the two solutions $\chi^{(i)}_w(x,E)$ and $\phi_w^{(i)}(x,E)$
which are eigenvectors of the monodromy matrix with eigenvalues $e^{i 2 \pi w_{\ell}(w(\omega_i))(\ell) }$ and
$e^{i 2 \pi w_{\ell}(w(\omega_i-\alpha_i))(\ell)} $ respectively. We can therefore 
decompose the subdominant solution $\Psi(x,E)$ in (new) invariant subspaces of the monodromy matrix as follows
\begin{equation}\label{wpsigrande}
\Psi^{(i)}(x,E,\ell)
=Q^{(i)}_w(E;\ell)\chi^{(i)}_w(x,E)+\widetilde{Q}^{(i)}_w(E;\ell) \phi^{(i)}(x,E)+v_w^{(i)}(x,E)
\,,
\end{equation}
where $v_w^{(i)}(x,E)$ belongs to the other invariant subspaces of the monodromy matrix.
Equation (\ref{wpsigrande}) defines two new $Q$-functions, $Q^{(i)}_w(E;\ell)$ and $\widetilde{Q}^{(i)}_w(E;\ell)$.
Following step by step all the reasoning that led to Theorem \ref{thm:betheansatz} above, we show that
the functions $Q^{(i)}_{w}$ satisfy the Bethe Ansatz equation
\begin{equation}\label{eq:wbetheansatz}
\prod_{j\in I}\Omega^{\overline{C}_{ij}\overline{w(\beta)}_j}
\frac{Q_\omega^{(j)}(\Omega^{\frac{\overline{C}_{ij}}{2}}E^\ast)}{Q_\omega^{(j)}(\Omega^{-\frac{\overline{C}_{ij}}{2}}E^\ast)}=-1,
\end{equation}
where $\overline{w(\beta)}_j=\frac{1}{Mh^\vee}w_\ell(w(\omega_j))(\ell+h).$

Finally, we note that the stabilizer of the tuple $(\omega_1,\dots,\omega_{n})$
consists of the identity element only. In fact, if an element in the Weyl group of $\wgz$ stabilizes $(\omega_1,\dots,\omega_{n})$
then, since it is invariant under conjugation by $\sigma$, it stabilizes the tuple $(\omega_1,\dots,\omega_{\tilde{n}})$, where $\tilde{n}$
is the rank of $\wg$; it is well known that the unique element of the Weyl group of $\wg$ stabilizing $(\omega_1,\dots,\omega_{\tilde{n}})$
is the identity \cite{hump90}.
We deduce that for a generic $\ell$ also the stabilizer of the Bethe Ansatz solution $(Q^{(1)}, \dots, Q^{(n)})$ is trivial.
Indeed, the equality $Q^{(i)}(E,\ell)=Q^{(i)}_w(E,\ell)$ cannot hold for any $i \in I$, since
$\overline{w(\beta)}_j \neq \overline{\beta}_j$
for at least one $j \in I$.

%%%%%%%%%%
\section{Airy functions for twisted Kac-Moody algebras}\label{app:airy}
 We consider in more detail the special case of \eqref{20141125:eq1} with a linear potential $p(x,E)=x$ and with $\ell=0$.
 The element $\ell=0$ is clearly non-generic, however the $Q$-functions can be easily defined in this case. Following \cite{marava15},
 we fix an element $w$ in the Weyl group of $\wgz$ and we call $v_i^w$ the weight vector of  $L(\omega_i)$ with weight $w(\omega_i)$.
 Finally $Q^{(i)}(E)$ is just the coefficient with respect to $v_i^w$ of $\Psi^{(i)}(x,E)_{|x=0}$, where in the case of a linear potential
 we further have
 $ \Psi^{(i)}(0,E)=\Psi^{(i)}(-E,0)$.
For a non simply-laced Lie algebra $\mf g$, and given an evaluation representation of ${}^L\mf g^{(1)}$, we look for
 solutions of the equation
 \begin{equation}\label{eq:31ott01}
 \Psi'(x) + \big( e +x e_0 \big) \Psi(x)=0
 \,,
 \end{equation}
 in the Airy-like form
 \begin{equation}\label{eq:31ott02}
 \Psi(x)=\int_{ \mathit{c}} e^{-xs} \Phi(s) ds\,,
 \end{equation}
 where $\Phi(s)$ is an analytic function and $\mathit{c}$ is some path in the complex plane. We
 differentiate and integrate by parts to get
 \begin{equation}\label{oltrelultima}
 \int_{\mathit{c}} e^{-xs} \big( -s + e + e_0 \frac{d}{ds}\big) \Phi(s) ds
 + e^{-xs }e_0 \Phi(s) \big|_{\mathit{c}}=0
 \,,
 \end{equation}
 where the integral $\mathit{c}$ must be chosen so that the boundary term vanishes.
 If this is the case, the Airy solution \eqref{eq:31ott02} can be expressed in terms of the  solution to the simpler equation
 \begin{equation}\label{eq:3nov01}
 \big(- s + e + e_0 \frac{d}{ds}\big) \Phi(s)=0
 \end{equation}
 which we solve below for three important examples.
 %in Sections \ref{sec:airy_A2n-1},  \ref{sec:airy_Dn+1} and \ref{sec:airy_G2}.
 Before tackling the analysis of the Airy solutions, we recall a well-known asymptotic formula.
 \begin{lemma}\label{lem:steepest}
Fix $M \in \bb{Z}, M \geq 2$ and a point $s_* >0$. Let $\mathit{c}$ be the curve passing trough
 $s_*$ such that $ \Re s^{M}=\Re s_*^{M}$
 for any point $s$ on the curve (the positive orientation is the one for which $\Im s^{M}$ is increasing). This contour is called a
 Stokes line for the action $s^{M}$. It is asymptotic to the rays of argument $ \pm \frac{\pi }{2 M}$
 and it is the boundary of the Stokes sector $ \Sigma=\{ s \in \bb{C}\mid \Re s^{M}\geq \Re s_*^{M} \}$.
Let moreover $f(s)$ be a function, analytic in $s$ for $s \in \Sigma $ and with the asymptotics
 $\lim_{s \to +\infty}\frac{f(s)}{s^K}=1$ for some $K \in \bb{C}$. Then
 \begin{equation}\label{eq:steepestdescent}
 \int_{ \mathit{c}} f(s) e^{-xs+\frac{x^M}{M}} ds \sim
 \big(\frac{2 \pi}{M-1}\big)^{\frac12} x^{\frac{2K+2-M}{2(M-1)}}e^{-\frac{M-1}{M}x^{\frac{M}{M-1}}} \mbox{ for } x\gg 0 \; .
 \end{equation}
 \end{lemma}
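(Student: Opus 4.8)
The plan is to prove this by the classical method of steepest descent, applied to the phase $\varphi(s)=\frac{s^M}{M}-xs$ appearing in the exponent of the integrand. For $x>0$ this phase has a unique positive saddle point $s_x=x^{\frac{1}{M-1}}$, the root of $\varphi'(s)=s^{M-1}-x$ on $\bb{R}_{>0}$, with $\varphi(s_x)=-\frac{M-1}{M}x^{\frac{M}{M-1}}$ and $\varphi''(s_x)=(M-1)x^{\frac{M-2}{M-1}}>0$. First I would deform the Stokes line $\mathit{c}$ to the steepest-descent contour through $s_x$, then rescale so that the saddle is frozen, and finally apply the Gaussian (Laplace) estimate.

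For the deformation: assuming $x$ is large enough that $x^{\frac{M}{M-1}}\ge s_*^{M}$, I would take $\mathit{c}_x$ to be the component through $s_x$ of the level set $\{\,s:\Re(s^M)=x^{\frac{M}{M-1}}\,\}$. Writing $s=\rho e^{i\theta}$ one has $\rho=(\cos M\theta)^{-1/M}$ for $|\theta|<\frac{\pi}{2M}$, so $\mathit{c}_x$ is asymptotic to the same rays of argument $\pm\frac{\pi}{2M}$ as $\mathit{c}$, and $\Re s=\cos\theta\,(\cos M\theta)^{-1/M}$ has $\theta$-derivative proportional to $\sin((M-1)\theta)$, hence (using $M\ge 2$) attains its unique minimum on $\mathit{c}_x$ at $\theta=0$, i.e.\ at $s=s_x$, tending to $+\infty$ at both ends. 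Since $x^{\frac{M}{M-1}}\ge s_*^{M}$ the curve $\mathit{c}_x$ lies inside $\Sigma$, where $f$ is analytic, and $|f(s)e^{\varphi(s)}|=|f(s)|\,e^{s_*^{M}/M}e^{-x\Re s}$ decays faster than any power along the common asymptotic ends; hence by Cauchy's theorem the integral over $\mathit{c}$ equals the integral over $\mathit{c}_x$.

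For the asymptotics: on $\mathit{c}_x$ one has $\Re\varphi(s)=\frac{1}{M}x^{\frac{M}{M-1}}-x\Re s\le\varphi(s_x)$, with equality only at $s_x$ and $\Re\varphi\to-\infty$ at the ends, so the integral concentrates near $s_x$. Substituting $s=x^{\frac{1}{M-1}}u$ turns $\mathit{c}_x$ into the $x$-independent contour $\{\Re(u^M)=1\}$ through $u=1$ and turns $\varphi$ into $x^{\frac{M}{M-1}}g(u)$ with $g(u)=\frac{u^M}{M}-u$, $g(1)=-\frac{M-1}{M}$, $g''(1)=M-1$, giving
\[
\int_{\mathit{c}}f(s)\,e^{-xs+\frac{s^M}{M}}\,ds=x^{\frac{1}{M-1}}\int_{\mathit{c}_x}f\big(x^{\frac{1}{M-1}}u\big)\,e^{x^{\frac{M}{M-1}}g(u)}\,du .
\]
Near $u=1$ the contour runs in the imaginary direction, which is precisely the steepest-descent direction since $g''(1)>0$ is real, so writing $u=1+i\tau$ and using $f(x^{\frac{1}{M-1}}u)=x^{\frac{K}{M-1}}(u^{K}+o(1))$ (uniformly on the shrinking neighbourhood of $u=1$ carrying the contribution, by the hypothesis $f(s)/s^{K}\to1$) reduces the right-hand side to the Gaussian integral
\[
x^{\frac{K+1}{M-1}}\,e^{-\frac{M-1}{M}x^{\frac{M}{M-1}}}\int_{\bb{R}}e^{-\frac{M-1}{2}x^{\frac{M}{M-1}}\tau^{2}}\,d\tau\,(1+o(1)),
\]
which, after evaluating the Gaussian and collecting the powers of $x$, yields the stated asymptotic formula (the orientation of $\mathit{c}$ fixes the elementary phase of the constant prefactor).

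The hard part — really the only point that is not entirely mechanical — is to make the uniformity of $f(s)/s^{K}\to1$ available not just along $\bb{R}_{>0}$ but on a full neighbourhood of $s_x$ inside $\Sigma$; this follows from the analyticity of $f$ on all of $\Sigma$ by a Cauchy-estimate argument propagating the real-axis asymptotics into the sector. Everything else — convergence of the integrals, justification of the contour deformation, and the exponential smallness of the contribution away from $s_x$ — is routine and can alternatively be cited from a standard reference on the method of steepest descent.
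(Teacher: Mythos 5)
Your proof is correct and follows exactly the route the paper takes: the paper's ``proof'' of this lemma is a one-line appeal to the method of steepest descent with a citation to Miller's book, and your write-up simply carries out that computation (saddle at $s_x=x^{\frac{1}{M-1}}$, deformation of the Stokes line to the level set through $s_x$, rescaling, Gaussian evaluation), arriving at the stated exponent and prefactor. The only caveat is your closing remark: analyticity on $\Sigma$ plus the limit $f(s)/s^K\to 1$ along $\bb{R}_{>0}$ does \emph{not} by itself propagate the asymptotics into a neighbourhood of $s_x$ via Cauchy estimates (one needs an a priori growth bound, i.e.\ a Phragm\'en--Lindel\"of argument); in practice this is harmless because every $f$ to which the lemma is applied in the paper has asymptotics that are uniform in the Stokes sector, which is clearly the intended reading of the hypothesis.
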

 \begin{proof}
 The statement follows from a simple steepest descent analysis. See \cite[Chapter 4.7]{peter06}.
 \end{proof}
 %%%
 \subsection{\texorpdfstring{$A_{2n-1}^{(2)}$}{B\_n} in the standard representation}\label{sec:airy_A2n-1}
We study the $A_{2n-1}^{(2)}$-Airy function, $n\geq3$, in the standard representation $V^{(1)}=\mb C^{2n}$. Namely, we want to find
 a solution for equation \eqref{eq:3nov01} in the representation $V^{(1)}$ of the twisted Kac-Moody algebra
 $A_{2n-1}^{(2)}$.
 Using the explicit form of Chevalley generators $e_i$, $i=0,\dots,n$, which is provided in Section \ref{sec:proof_Bn},
 and denoting by $\Phi_i(s)$ the components of
 $\Phi(s)$ in the standard basis of $\mb{C}^{2n}$,
 equation \eqref{eq:3nov01} reads
 \begin{equation}\label{20151014:eq1}
 \left\{
 \begin{array}{ll}
 \Phi_{i+1}(s)=s\Phi_i(s)\,, & i=1,\dots,2n-2\,,
 \\
 \Phi'_1(s)+2\Phi_{2n}(s)=2s\Phi_{2n-1}(s)&
 \\
 \Phi'_2(s)=2s\Phi_{2n}(s)\,.&
 \end{array}
 \right.
 \end{equation}
 The general solution of \eqref{20151014:eq1} is
 \begin{align*}
 &
 \Phi_1(s)=\kappa s^{-\frac12}e^{\frac{s^{2n}}{2n}}\,,
 \qquad\Phi_i(s)=s^{i-1}\Phi_1(s)\,,
 \quad i=2,\dots,2n-1
 \,,
 \\
 &
 \Phi_{2n}(s)=\frac14\left(2s^{2n-1}+\frac{1}{s}\right)\Phi_1(s)
 \,,
 \end{align*}
 where $\kappa$ is an arbitrary complex number.
 We choose the path $\mathit{c}$ as in Lemma \ref{lem:steepest} with $M=2n$.
 It eventually lies on the rays of arguments $e^{\pm \frac{\pi i}{4n}}$, and it is clockwise oriented. On this path the exponential function is oscillatory,
 therefore fixed $\e >0$, for any $x$ such that $|\arg x| < \frac\pi{2} (1-\frac1{2n}) - \e $ the integral formula \eqref{eq:31ott02}
 and the boundary terms in \eqref{oltrelultima} vanish.
 % \begin{equation}\label{eq:3nov02}
 % \Psi_j(x)
 % =\frac{1}{2 \pi i} \int_{e^{-\frac{\pi i}{n+1}} \infty}^{e^{+\frac{\pi i}{n+1}} \infty}
 % s^{j-1} e^{-xs+\frac{s^{n+1}}{n+1}} ds\,, \qquad j=1\dots n
 % \,.
 % \end{equation}
 % In case $n=2$, the latter definition coincides with the definition of the standard Airy function.
 By equation (\ref{eq:steepestdescent}) we get
 \begin{equation}
 \int_{ \mathit{c}} e^{-xs} s^j \Phi_1(s) ds \sim \sqrt{\frac{2 \pi}{2n-1}}
 x^{\frac{2j+1-2n}{4n-2}} e^{-\frac{2n-1}{2n}x^{\frac{2n}{2n-1}}}\,,
 \quad x\gg 0
 \,,
 \end{equation}
 and thus the Airy function coincides with the fundamental solution $\Psi^{(1)}$ - up to a multiplicative constant.
 The solutions $\Psi^{(1)}_k$, $k \in \mb{Z}$, $|k| <n$, are obtained by integrating \eqref{eq:31ott02} along the contour
 obtained rotating $\mathit{c}$ by $e^{\frac{ k \pi i }{2n}}$.
 
 %%%
 \subsection{\texorpdfstring{$D^{(2)}_{n+1}$}{C\_n} in the standard representation}
\label{sec:airy_Dn+1}
We study the $D^{(2)}_{n+1}$-Airy function, $n\geq2$, in the standard representation. Namely, we want to find
 a solution for equation \eqref{eq:3nov01} in the representation $V^{(1)}=\mb C^{2n+1}$ of the twisted Kac-Moody algebra
 $D_{n+1}^{(2)}$.
 Using the explicit form of Chevalley generators $e_i$, $i=0,\dots,n$, which is provided in Section
 \ref{sec:proof_Cn}, and denoting by $\Phi_i(s)$ the components of
 $\Phi(s)$ in the standard basis of $\mb{C}^{2n+2}$,
 equation \eqref{eq:3nov01} reads
 \begin{equation}\label{20151014:eq2}
 \left\{
 \begin{array}{ll}
 \Phi_{i+1}(s)=s\Phi_i(s)\,, & i=1,\dots,n-1\,,
 \\
 &\text{ or }i=n+3,\dots,2n+1\,,
 \\
 2\Phi_{n+1}+\Phi_{n+2}(s)=2s\Phi_n(s)&
 \\
 \epsilon_n\Phi_1'(s)+\Phi_{n+3}(s)=2s\Phi_{n+1}(s)
 &
 \\
 -\epsilon_n\Phi_1'(s)+\Phi_{n+3}(s)=s\Phi_{n+2}(s)&
 \\
 2\Phi_{n+1}'(s)-\Phi_{n+2}'(s)=2\epsilon_n s\Phi_{2n+2}(s)
 \,.
 \end{array}
 \right.
 \end{equation}
 The general solution of \eqref{20151014:eq2} is
 \begin{align*}
 &\Phi_{i}(s)=s^{i-1}\Phi_1(s)\,, \qquad i=1,\dots,n\,,
 \\
 &\Phi_{n+1}(s)=\frac{s^n}{2}\Phi_1(s)+\frac{\epsilon_n}{2s}\Phi_1'(s)\,,
 \\
 &\Phi_{n+2}(s)=s^n\Phi_1(s)-\frac{\epsilon_n}{s}\Phi_1'(s)\,,
 \\
 &\Phi_{n+2+i}(s)=s^{n+i}\Phi_1(s)\,,\qquad i=1,\dots,n\,,
 \end{align*}
 where $\Phi_1(s)$ is a general solution to the Sturm-Liouville type equation
 \begin{equation}\label{eq:sturmDn}
 \left(\frac1s\Phi_1'(s)\right)'=s^{2n+1}\Phi_1(s)
 \,.
 \end{equation}
 Even though equation \eqref{eq:sturmDn} has a two dimensional space of solutions, these corresponds to a one dimensional space of solutions to equation (\ref{eq:31ott01}), as the integral  (\ref{eq:31ott02}) vanishes on a one dimensional subspace of solutions of (\ref{eq:sturmDn}). To prove this fact we proceed as follows.
 The function $\Pi(s)=s^{-\frac12}\Phi_1(s)$ satisfies
 \begin{equation}\label{eq:transformed}
 \Pi''(s)= \left(s^{2 n+2}+\frac{3}{4 s^2}\right) \Pi(s) \; .
 \end{equation}
 To study the integrals of the form (\ref{eq:31ott02}) we use the WKB analysis of equation \eqref{eq:transformed}.
 The (semiclassical) action of the potential $s^{2 n+2}+\frac{3}{4 s^2}$ is simply $ \frac{s^{n+2}}{n+2}$.
 We define therefore the Stokes line $\mathit{c}$ and the Stokes Sector $\Sigma$ as in Lemma \ref{lem:steepest} with $M=n+2$.
 From standard WKB theory \cite{fedoryuk93}, we know that there exist two linearly independent solutions of (\ref{eq:transformed}) which inside the Stokes sector
 have the following uniform asymptotics
 \begin{equation*}
 \Pi_{\pm} \sim s^{-\frac{n+1}{2}}e^{\pm \frac{s^{n+2}}{n+2} }
 \,.
 \end{equation*}
 Notice that the integral $ \int_{ \mathit{c}} s^k e^{-xs} \Pi_{\pm}(s) ds$ is well defined for any $k \in \bb{C}$
 if $x>0$ because $\Pi_{\pm}(s) $ are oscillatory on the Stokes line. Moreover, by the Cauchy Theorem
 we have that $ \int_{ \mathit{c}} s^k e^{-xs} \Pi_{-}(s) ds =0$
 if $x>0$, because $s^k e^{-xs} \Pi_{-}(s)$ is both analytic and exponentially small in the Stokes sector.
 After the above discussion and using the asymptotic expansion (\ref{eq:steepestdescent}) we obtain
 \begin{equation*}
 \int_{ \mathit{c}} s^k e^{-xs} s^{\frac12}\big( \kappa_+ \Pi_{+}(s) + \kappa_{-} \Pi_-(s) \big) ds \sim
 \kappa_+ \sqrt{\frac{2\pi}{n+1}} x^{\frac {2j+1-n}{2 n+2}}e^{-\frac{n+1}{n+2}x^{\frac{n+2}{n+1}}}
 \,.
 \end{equation*}
 By Theorem \ref{thm:asymptotic}, the Airy function is the fundamental solution $\Psi^{(1)}$ of the equation (\ref{eq:31ott01}).
 We notice further that the general solution of (\ref{eq:sturmDn}) can be written by means of modified Bessel functions $I_{\nu}(z)$ \cite{bateman2}
 after the transformation of the independent variable $z=\frac{s^{n+2}}{n+2}$. In fact one easily gets
 \begin{equation}
 \Phi_1(s)=s \left( c_1 I_{\frac{-1}{2+n}}(\tfrac{s^{2+n}}{2+n}) + c_2 I_{\frac1{2+n}}(\tfrac{s^{2+n}}{2+n}) \right) \, ,
 \quad c_{1}, c_2 \in \bb{C}
 \,.
 \end{equation}
 It is well known \cite{bateman2} that $I_{\nu}(z)\sim \sqrt{\frac1{2 \pi z}}e^{z} \, , \arg z <\frac{\pi}{2}$.
 Therefore, the fundamental solution $\Psi^{(1)}(x)$  written in the integral form (\ref{eq:31ott02})  depends on the sum $c_1+c_2$ only.
 Moreover, the solutions $\Psi^{(1)}_k(x)$, $|k|\leq n+1$ and $x>0$, are obtained by integrating \eqref{eq:31ott02} along the contour
 obtained rotating $\mathit{c}$ by $e^{\frac{ k \pi i }{2(n+2)}}$.
 %%%
 \subsection{\texorpdfstring{$D^{(3)}_4$}{G\_2} in the standard representation}\label{sec:airy_G2}
 Let us study the $D^{(3)}_4$-Airy function in the standard representation. Namely, we want to find
 a solution for equation \eqref{eq:3nov01} in the representation $V^{(1)}=\mb C^8$ of the twisted Kac-Moody algebra
 $D_{4}^{(3)}$.
 Using the explicit form of Chevalley generators $e_i$, $i=0,1,2$, which is provided in Section \ref{sec:proof_G2}, and denoting by $\Phi_i(s)$ the components of
 $\Phi(s)$ in the standard basis of $\mb{C}^{8}$,
 equation \eqref{eq:3nov01} reads
 \begin{equation}\label{20151014:eq3}
 \left\{
 \begin{array}{ll}
 \Phi_{i+1}(s)=s\Phi_i(s)\,, & i=1,2\,,
 \\
 2\Phi_4(s)+\Phi_5(s)=2s\Phi_3(s)&
 \\
 -2\kappa\eta^2\Phi_1'(s)+\Phi_6(s)=2s\Phi_4(s)\,,&
 \\
 2 \kappa \eta \Phi_1'(s)+\Phi_6(s)=s\Phi_5(s)&
 \\
 2 \kappa \Phi_2'(s)+\Phi_7(s)=s\Phi_6&
 \\
 2 \kappa \Phi_3'(s)+\Phi_8(s)=s\Phi_7(s)&
 \\
 2 \kappa \eta\Phi_4'-\kappa\eta^2\Phi_5'=s\Phi_8(s)
 \,.&
 \end{array}
 \right.
 \end{equation}
 where $\kappa=\frac{1}{3+2\sqrt{3}}$ and $\eta=e^{i \frac{\pi}{3}}$.
 The general solution of \eqref{20151014:eq3} is
 \begin{align*}
 &\Phi_{i}(s)=s^{i-1}\Phi_1(s)\,, \qquad i=2,3\,,
 \\
 &\Phi_4(s)=\frac{s^3}{2}\Phi_1(s)-i\frac{\kappa\sqrt3}{2s}\Phi_1'(s)\,,
 \\
 &\Phi_5(s)=s^3\Phi_1(s)+i\frac{\kappa\sqrt3}s\Phi_1'(s)\,,
 \\
 &\Phi_6(s)=s^4\Phi_1(s)-\kappa\Phi_1'(s)\,,
 \\
 &\Phi_7(s)=(s^5-2\kappa)\Phi_1(s)-3\kappa s\Phi_1'(s)\,,
 \\
 &\Phi_8(s)=(s^6-6\kappa s)\Phi_1(s)-5\kappa s^2\Phi_1'(s)\,,
 \end{align*}
 where $\Phi_1(s)$ is a general solution to the equation
 \begin{equation}\label{eq:ph1d43}
 3\kappa^2(\frac1s\Phi_1'(s))'
 =(s^7-9\kappa s^2)\Phi_1(s)-6\kappa s^3\Phi_1'(s)\,.
 \end{equation}
 To analyze equation \eqref{eq:ph1d43} we can proceed as in Section \ref{sec:airy_Dn+1}.
 Letting $\Pi(s)=s^{-\frac12}e^{\frac{1}{5} (3+2 \sqrt{3}) s^5}\Phi_1(s)$ we obtain
 $$
 \Pi''(s)=\left((28+16 \sqrt{3}) s^8+ \frac{3}{4 s^2} \right)\Pi(s)\; .
 $$
 As we did in the previous subsection, we use the WKB method to study solutions to the latter equation.
 By this mean, we easily prove that in the Stokes sector $\Sigma$ defined as in Lemma \ref{lem:steepest} with $M=5$ there is a basis
 of solutions $\Pi_{\pm}(s)$ with the following asymptotic behavior
 \begin{equation*}
 \Pi_{\pm}(s) \sim s^{-2} e^{\pm \frac{2}{5} \left(2+\sqrt{3}\right) s^5}
\,.
 \end{equation*}
 Therefore the general solution of the equation (\ref{eq:ph1d43}) can be expressed in the basis $\Phi_1^{\pm}$
 with the following behavior in the Stokes sector
 \begin{align*}
 \Phi_1^+\sim s^{-\frac32} e^{ \frac{1}{5} s^5}
\,,\qquad
 \Phi_1^-\sim s^{-\frac32} e^{- \frac{7+4 \sqrt{3}}{5} s^5}
\,.
 \end{align*}
 The path of integration $\mathit{c}$ in the integral formula (\ref{eq:31ott02}) is naturally, as in the previous examples,
 the boundary of the Stokes sector $\Sigma$.
 Reasoning as in the case of the $D_{n+1}^{(2)}$-Airy function we deduce that the choice $\Phi_1^-$ leads to the zero solution of (\ref{eq:31ott01}).
 On the other hand, after the steepest descent Lemma \ref{lem:steepest} we obtain
 that the choice $\Phi_1^+$
 defines, up to a multiplicative constant, the subdominant solution $\Psi^{(1)}(x)$ of (\ref{eq:31ott01}).
 
 We conclude by noticing that the general solution of (\ref{eq:ph1d43}) can be expressed in terms of modified Bessel functions
 \begin{equation*}
 \Phi_1(s)= s \, e^{-\frac{1}{5} (3+2 \sqrt{3}) s^5} \left( c_1 \, I_{-\frac{1}{5}}\big( \frac{2}{5} (2+\sqrt{3}) s^5\big)+
 c_2 \,I_{\frac{1}{5}}\big( \frac{2}{5} (2+\sqrt{3}) s^5\big) \right) \; .
 \end{equation*}
 \begin{remark}
 Recently Bertola, Dubrovin and Yang \cite{dub15} studied equation (\ref{eq:31ott01}) and the integral formula (\ref{eq:31ott02})
 in the case of an untwisted Kac-Moody algebra $\gu$ for a fixed representation, namely  the evaluation representation at $t=1$ of
 the adjoint representation of $\mf g$.
 The motivation of their work comes from Topological Field Theory and
computations of Tau-functions  and -- quite naturally -- they called equation (\ref{eq:31ott01}) the \textit{Topological ODE}. The same equations were already
 studied in our previous work \cite{marava15} in relation with the Bethe Ansatz of quantum $\mf g$-KdV model for a simply-laced Lie algebra $\mf g$.
In \cite{dub15} the solutions of interest are the ones that grow polynomially at $\infty$ inside a fixed Stokes sector.
 Since equation (\ref{eq:31ott01}) can be reduced to the almost diagonal form (\ref{eq:quasiconstantode}), one
 deduces immediately that the sought solution
 is of the asymptotic form
 \begin{equation}\label{eq:borisODE}
 T(x) = x^{-\frac{\ad h}{h^\vee}}\left( h' + o(1) \right) \, ,
 \quad x \gg 0
 \,,
 \end{equation}
where $h \in \wh$ is the special element defined by relations (\ref{eq:hrelations}), and
$h'$ is any non-zero element of the kernel of the adjoint action of $\Lambda$, in other words an element of the Cartan subalgebra
$\lbrace g \in {\mf g} \, |\, [g,\Lambda]=0 \rbrace$. The asymptotic expansion of a topological solution is fully determined
by formula (\ref{eq:borisODE}). However, the asymptotic expansion does not determine a unique solution but an affine space
of solutions, because the asymptotic expansion does not change when adding an asymptotically small solution. In view of the discussion
after formula (\ref{eq:quasiconstantode}), this affine space is naturally associated
with $\bigoplus_{\Re \lambda_i >0} \mf{g}_{\lambda_i} $ where $\mf{g}_{\lambda_i}$ is the eigenspace of the adjoint action of $\Lambda$
with eigenvector $\lambda_i$.

Using formula (\ref{eq:borisODE}), the topological solutions of \cite{dub15} can be constructed from the subdominant solutions $\Psi^{(i)}$'s
considered in \cite{marava15}. For example, choosing $\mf g$ of type $A_n$ and considering the untwisted algebra $A_n^{(1)}$, then due
to the decomposition of $\mf g$-modules
 $ L(\omega_1) \otimes L(\omega_n) \cong \mf g \oplus  \bb{C} $, a full set of asymptotic expansions (\ref{eq:borisODE}) can be written
in terms of $\Psi_k^{(1)}(x)\otimes \Psi_{k+\e_k \frac{n+1}2}^{(n)}(x)$, $k= \lfloor\frac{-n}{2}\rfloor , \dots, \lfloor\frac{n}{2}\rfloor$; here
$\e_k=-1$ if $k>0$ and $\e_k=1$ if $k \leq 0$.
It follows that one can write the Topological Tau-functions by means
of the Bethe Ansatz solutions $Q^{(1)},\dots, Q^{(n)}$. It would be interesting
 to understand whether this apparent coincidence is the manifestation of a relation between the underlying physical theories, also with the aim of
 fixing the ambiguity in the choice of topological solutions.
 We will tackle this question in
 a forthcoming paper.
 \end{remark}
 
%%%%%%%%%%% BIBLIOGRAFIA %%%%%%%%%%%%%%%%%%%%%%%%

\def\cprime{$'$} \def\cprime{$'$} \def\cprime{$'$} \def\cprime{$'$}
  \def\cprime{$'$} \def\cprime{$'$} \def\cprime{$'$} \def\cprime{$'$}
  \def\cprime{$'$} \def\cprime{$'$} \def\cydot{\leavevmode\raise.4ex\hbox{.}}
  \def\cprime{$'$} \def\cprime{$'$} \def\cprime{$'$}

\end{document}